\keywords{Unification type, Instantiation preorder, Equational theories, Modal and Description Logics, Regular theories, Monoidal theories, Finite and locally finite theories, Noetherian theories}
\theoremstyle{plain} 
\newcommand{\ALC}{\ensuremath{\mathcal{ALC}}\xspace}
\newcommand{\FLzero}{\ensuremath{\mathcal{FL}_0}\xspace}
\newcommand{\EL}{\ensuremath{\mathcal{E\mkern-1.618mu L}}\xspace}
\newcommand{\id}{\approx}
\newcommand{\idEL}{\approx_\EL}
\newcommand{\idFLzero}{\approx_{\FLzero}}
\newcommand{\leqEL}{\leq_\EL^V}
\newcommand{\simEL}{\sim_\EL^V}
\DeclareMathOperator{\Dom}{Dom}
\DeclareMathOperator{\Sig}{Sig}
\DeclareMathOperator{\VRan}{VRan}
\DeclareMathOperator{\Var}{Var}
\DeclareMathOperator{\rd}{rd}
\DeclareMathOperator{\VR}{VRan}
\def\math@Alphabet{A,B,C,D,E,F,G,H,I,J,K,L,M,N,O,P,Q,R,S,T,U,V,W,X,Y,Z}
\def\math@alphabet{a,b,c,d,e,f,g,h,i,j,k,l,m,n,o,p,q,r,s,t,u,v,w,x,y,z}
\newcommand{\math@def}[2]{\expandafter#1\expandafter{\csname#2\endcsname}}
\newcommand{\math@forcsvlist}[2]{\expandafter\forcsvlist\expandafter#1\expandafter{#2}}
\newcommand{\math@defmc}[1]{\math@def{\newcommand}{#1mc}{\ensuremath{\mathcal{#1}}\xspace}}
\newcommand{\math@defmf}[1]{\math@def{\newcommand}{#1mf}{\ensuremath{\mathfrak{#1}}\xspace}}
\newcommand{\math@defsf}[1]{\math@def{\newcommand}{#1sf}{\ensuremath{\mathsf{#1}}\xspace}}
\newcommand{\math@defbb}[1]{\math@def{\newcommand}{#1bb}{\ensuremath{\mathbb{#1}}\xspace}}
\newcommand{\math@defbf}[1]{\math@def{\newcommand}{#1bf}{\ensuremath{\mathbf{#1}}\xspace}}
\newcommand{\math@deful}[1]{\math@def{\newcommand}{#1ul}{\ensuremath{\underline{#1}}\xspace}}
\math@forcsvlist{\math@defmc}{\math@Alphabet}
\math@forcsvlist{\math@defmf}{\math@Alphabet}
\math@forcsvlist{\math@defsf}{\math@Alphabet}
\math@forcsvlist{\math@defbb}{\math@Alphabet}
\math@forcsvlist{\math@defbf}{\math@Alphabet}
\math@forcsvlist{\math@deful}{\math@Alphabet}
\math@forcsvlist{\math@defmc}{\math@alphabet}
\math@forcsvlist{\math@defmf}{\math@alphabet}
\math@forcsvlist{\math@defbb}{\math@alphabet}
\math@forcsvlist{\math@defbf}{\math@alphabet}
\math@forcsvlist{\math@deful}{\math@alphabet}
\newcommand{\ModK}{\ensuremath{\mathsf{K}}\xspace}
\newcommand{\Idem}{\ensuremath{\mathsf{I}}\xspace}
\newcommand{\Assoc}{\ensuremath{\mathsf{A}}\xspace}
\newcommand{\Unit}{\ensuremath{\mathsf{U}}\xspace}
\newcommand{\Commu}{\ensuremath{\mathsf{C}}\xspace}
\newcommand{\AC}{\ensuremath{\mathsf{AC}}\xspace}
\newcommand{\AU}{\ensuremath{\mathsf{AU}}\xspace}
\newcommand{\ACU}{\ensuremath{\mathsf{ACU}}\xspace}
\newcommand{\ACI}{\ensuremath{\mathsf{ACI}}\xspace}
\newcommand{\AI}{\ensuremath{\mathsf{AI}}\xspace}
\newcommand{\D}{\ensuremath{\mathsf{D}}\xspace}
\newcommand{\ACUI}{\ensuremath{\mathsf{ACUI}}\xspace}
\newcommand{\bSLmO}{\ensuremath{\mathsf{bSLmO}}\xspace}
\newcommand{\ACUIh}{\ensuremath{\mathsf{ACUIh}}\xspace}
\newcommand{\ACUh}{\ensuremath{\mathsf{ACUh}}\xspace}
\newcommand{\AG}{\ensuremath{\mathsf{AG}}\xspace}
\newcommand{\unit}{0}
\newcommand{\seq}{\mathfrak{s}}
\newcommand{\allPairs}[1]{P_{#1}}
\newcommand{\grRel}[1]{>_{#1}}
\newcommand{\occVar}[2]{\sV^{#1}_{#2}}
\newcommand{\inv}{\mathsf{inv}}
\newcommand{\sR}{\Smc}
\newcommand{\sV}{a}
\newcommand{\sVb}{b}
\newcommand{\sVc}{c}
\newcommand{\sPre}[1]{\leq_{#1}}
\newcommand{\sPreGr}[1]{>_{#1}}
\newcommand{\sPreL}[1]{<_{#1}}
\newcommand{\sRPr}{\otimes}
\newcommand{\sRSum}{\oplus}
\newcommand{\sRBSum}{\bigoplus}
\newcommand{\sZero}{\mathsf{0}}
\newcommand{\sOne}{\mathsf{1}}
\newcommand{\conc}{\cdot}
\newcommand{\eword}{\varepsilon}
\newcommand{\fSets}{\Wmc}
\newcommand{\alphb}{\Delta}
\newcommand{\semNPol}{\mathbb{N}[X]}
\newcommand{\NPol}{\mathbb{N}[X]}
\DeclareMathOperator{\multPol}{\overline{\cdot}}
\DeclareMathOperator{\addPol}{\overline{+}}
\newcommand{\spEName}{restrictive}
\begin{document}

\title[Unification Types and Instantiation Preorders]{The Unification Type of an Equational Theory May Depend on the Instantiation Preorder: 
From Results for Single Theories to Results for Classes of Theories {\lsuper*}}
\titlecomment{{\lsuper*}This article is an extended version of a paper published at the conference FSCD 2025 \cite{BaaderG25}.}
\thanks{Both authors were partially supported by DFG grant 389792660 as part of TRR 248 -- CPEC, and by the German Federal Ministry of Education and Research (BMBF, SCADS22B) and the Saxon State Ministry for Science, Culture and Tourism (SMWK) by
funding the competence center for Big Data and AI “ScaDS.AI Dresden/Leipzig”.}	

\author[F.~Baader]{Franz Baader\lmcsorcid{0000-0002-4049-221X}}[a,b]
\author[O.~Fern\'andez Gil]{Oliver {Fern\'andez Gil\lmcsorcid{0000-0002-9458-1701}}}[a]

\address{Theoretical Computer Science, TU Dresden, Germany}	
\email{franz.baader@tu-dresden.de, oliver.fernandez@tu-dresden.de}  

\address{Center for Scalable Data Analytics and Artificial Intelligence (ScaDS.AI) Dresden/Leipzig, Germany}	





\begin{abstract}
  The unification type of an equational theory is defined using a preorder on substitutions, called the instantiation preorder, whose scope is either restricted to the variables occurring in the unification problem, or unrestricted such that all variables are considered. It has been known for more than three decades that the unification type of an equational theory may vary, depending on which instantiation preorder is used. More precisely, it was shown in 1991 that the theory \ACUI of an associative, commutative, and idempotent binary function symbol with a unit is unitary w.r.t.\ the restricted instantiation preorder, but not unitary w.r.t.\ the unrestricted one. In 2016 this result was strengthened by showing that the unrestricted type of this theory also cannot be finitary.  In the conference version of this article, we considerably improved on this result by proving that \ACUI is infinitary w.r.t.\ the unrestricted instantiation preorder, thus precluding type zero. We also showed that, w.r.t.\ this preorder, the unification type of \ACU (where idempotency is removed from the axioms) and of \AC (where additionally the unit is removed) is infinitary, though it is respectively unitary and finitary in the restricted case. In the other direction, we proved (using the example of unification in the description logic \EL) that the unification type may actually improve from type zero to infinitary when switching from the restricted instantiation preorder to the unrestricted one. 
In the present article, we not only determine the unrestricted unification type of considerably more equational theories, but we also prove general
results for whole classes of theories. In particular, we show that theories that are regular and finite, regular and locally finite, or regular, monoidal, and
satisfy an additional condition are Noetherian, and thus cannot have unrestricted unification type zero.
\end{abstract}

\maketitle

\section{Introduction}

Syntactic unification of terms was independently introduced by Robinson~\cite{Robinson65} and Knuth and Bendix~\cite{KnuthBendix70}
as a tool for computing resolvents in resolution-based theorem proving and critical pairs in the completion of term rewriting systems.
Both showed the important result that any solvable unification problem has a most general unifier (mgu), i.e., a unifier that has all other unifiers as 
instances. In these papers, a substitution $\theta$ is defined to be an instance of a substitution $\sigma$ if there is a substitution $\lambda$
such that $\lambda\sigma = \theta$, i.e., $\lambda(\sigma(x)) = \theta(x)$ holds for all variables $x$ in the countably infinite set of
variables $V$ available for building terms. In this paper, we call the preorder on substitutions obtained this way the unrestricted
instantiation preorder and write it as $\sigma \le_\emptyset^V\theta$, where the index $\emptyset$ indicates that terms are to be made syntactically equal.
In addition to many papers on how to compute the mgu efficiently 
(e.g.,~\cite{DBLP:journals/jcss/PatersonW78,DBLP:journals/toplas/MartelliM82,BidoitCorbin83}),
properties of the preorder on substitutions defined this way have, for instance, been investigated
in~\cite{DBLP:journals/jsc/Eder85,DBLP:books/mk/minker88/LassezMM88}.
 
In his seminal paper~\cite{Plotkin72}, Plotkin proposed to build certain equational theories (such as associativity or commutativity) into
the unification algorithm rather than treating their axiomatization within the general theorem proving process. Similar proposals were also
made in the setting of Knuth-Bendix completion in term rewriting~\cite{Peterson:Stickel81,Jouannaud:Kirchner86}. As already pointed out
by Plotkin, in the equational setting most general unifiers need not exist and their r\^ole is instead taken on by minimal complete sets of unifiers,
i.e., sets of unifiers such that every unifier is an instance of a unifier in this set, and no distinct elements in the set are comparable w.r.t.\
the instantiation preorder.\footnote{%
Plotkin actually calls these sets ``maximally general set of unifiers'' and requires two additional technical conditions.
}
As instantiation preorder he uses what we call the restricted instantiation preorder, i.e., $\sigma \le_E^X\theta$ where $E$ is the theory
modulo which unification is considered and $X$ is the set of variables occurring in the unification problem. This preorder requires the existence
of a substitution $\lambda$ such that $\lambda(\sigma(x)) \id_E \theta(x)$ holds for all variables $x\in X$. He explains the
use of equality modulo $E$ ($\id_E$) in this definition, but does not comment on the restriction to the variables of the unification problem.
Plotkin also gives an example of an equational theory (associativity \Assoc) where minimal complete sets of unifiers may become infinite, and conjectures
that there may exist theories for which such sets do not exist.

Siekmann proposed to characterize equational theories according to the cardinality and existence 
of minimal complete sets of unifiers into the types unitary, finitary, infinitary, and zero. However, in the first overview paper on
results in this direction~\cite{DBLP:journals/cca/RaulefsSSU79}, he uses the unrestricted instantiation preorder, and the same is true for
his work on unification modulo commutativity~\cite{DBLP:conf/eurosam/Siekmann79a}. 
In later overview papers~\cite{DBLP:conf/cade/Siekmann84,DBLP:conf/ecai/Siekmann86,DBLP:journals/jsc/Siekmann89} he describes the unrestricted
instantiation preorder in the introduction, but employs the restricted one in the formal definition of unification types, again without
explanation.
Due to potential applications of equational unification in resolution-based theorem proving and term rewriting, unification properties (among them
the unification type) of frequently encountered equational axioms such as associativity, commutative, idempotency, distributivity and their 
combinations were extensively studied in the automated deduction community in the 1980s and 1990s 
(see \cite{JoKi91,DBLP:books/ox/LAI2Hb1994/BaaderS94,DBLP:books/el/RV01/BaaderS01} for overviews). 
More recently, unification in certain logics such as modal and description logics has drawn considerable 
interest~\cite{DBLP:journals/jsyml/Ghilardi99,DBLP:journals/corr/abs-1006-2289,DBLP:journals/igpl/BaaderG11}, where the goal is to
make a formula valid or two formulas equivalent by applying a substitution. In particular, the unification types of various modal logics
have been determined (see, e.g.,~\cite{DBLP:journals/logcom/Jerabek15,DBLP:journals/ndjfl/Iemhoff16,DBLP:journals/apal/DzikKW22,DBLP:journals/igpl/BalbianiGRT23,DBLP:journals/igpl/AlizadehABM23,Balbiani2024,DBLP:journals/apal/DzikKW25}).
In both areas, the authors usually employ the restricted instantiation preorder, though~\cite{DBLP:journals/logcom/Jerabek15} is an exception.

In the conference version~\cite{BaaderG25} of the present paper, we started to investigate the impact that using the unrestricted rather than the restricted instantiation preorder has on the unification
type. Until then, there were only two partial results in this direction. Already in~\cite{DBLP:conf/rta/Baader91} 
it was shown that the theory \ACUI of an associative, commutative, and idempotent binary function symbol $f$ with a unit $0$, 
which is unitary w.r.t.\ the restricted instantiation preorder~\cite{DBLP:journals/tcs/BaaderB88} for elementary\footnote{%
This means that unification problems may only contain terms built using variables, $0$, and $f$.
} 
unification, is not unitary w.r.t.\ the unrestricted one, and thus must be finitary, infinitary or of type zero.
In~\cite{DBLP:conf/unif/BaaderL16}, this result was strengthened by demonstrating that also type finitary is not possible.
In~\cite{BaaderG25}, we proved that the unification type of \ACUI is actually infinitary w.r.t.\ the unrestricted instantiation preorder. 
We showed the same result for the theory \AC of an associative and commutative binary function symbol and for  \ACU, which extends \AC with a unit. 
Note that \ACU is unitary and \AC is finitary
for elementary unification w.r.t.\ the restricted instantiation preorder~\cite{DBLP:journals/jacm/Stickel81} (see also Section~10.3 
in~\cite{DBLP:books/daglib/0092409}).
Quite surprisingly, we were also able to show that the unification type of the description logic \EL actually improves 
from type zero~\cite{DBLP:journals/corr/abs-1006-2289} to infinitary when switching from the restricted instantiation preorder to the unrestricted one.
In addition to these results for specific theories/logics, we established some general results on the relationship between the two instantiation 
preorders, which among other things imply that for associativity \Assoc and for commutativity \Commu the unification type does not depend on which of 
the two instantiation preorders is employed.

In the present article, we determine the unrestricted unification type of considerably more equational theories than in~\cite{BaaderG25}.
For instance, the phenomenon that the unification type can improve from type zero to infinitary when going from the restricted instantiation preorder 
to the unrestricted one is exemplified not only with \EL, but also with the theory \AI of an associative and idempotent binary function symbol,
the theory \ACUh, which extends \ACU by finitely many unary function symbols that behave like homomorphisms, and the description logic \FLzero.
The modal logic \ModK, which is a syntactic variant of the description logic \ALC, provides us with an example where both the restricted and the
unrestricted unification type is zero. We also show that the theory \D, which axiomatizes both-side distributivity, is infinitary both in the restricted
and the unrestricted setting. An overview of our results on specific theories can be found in \autoref{results:table}.
However, the main improvement achieved in the present paper is that, unlike in~\cite{BaaderG25}, we prove these results not separately for each 
theory, but instead are able to establish results for whole classes of theories, of which the theories mentioned above are instances.

This article is structured as follows. 
In the next section, we introduce the specific equational theories as well as the classes of equational theories that are considered in this paper. 
In Section~\ref{basic:sect}, we recall the relevant notions from unification theory. In particular, we introduce the restricted and the unrestricted
instantiation preorder and define unification types w.r.t.\ them. This section also contains the table (\autoref{results:table}) that gives
an overview of our results on specific theories.
Section~\ref{general:results:sect} considers cases where the unification type does not change.
Section~\ref{lbounds:sect}, is dedicated to showing “at least infinitary” lower bounds for the unrestricted unification type of a number of regular theories.
On the one hand, it uses the fact that theories whose restricted unification type is infinitary or zero cannot have unrestricted unification type unitary 
or finitary. On the other hand, it employs a tool for showing “at least infinitary” for the unrestricted unification type of regular theories that
had been introduced in~\cite{DBLP:conf/unif/BaaderL16}.
In Section~\ref{upper:bounds:sec}, we prove “at most infinitary” upper bounds for several classes of regular equational theories.
Section~\ref{conclus:sect} summarizes the results of this article and mentions interesting directions for future research.

\section{Equational theories}\label{classes:equational:theories:sec}

Given a \emph{signature} $\Sigma$ consisting of a finite set of function symbols (with associated arities) and a countably infinite set
of \emph{variables} $V$, the set $T(\Sigma,V)$ of \emph{terms} over $\Sigma$ with variables in $V$ is defined in the usual way: every variable is a
term, and if $f$ is an $n$-ary function symbol and $t_1,\ldots,t_n$ are terms, then $f(t_1,\ldots,t_n)$ is also a term.
Function symbols of arity $0$ are called \emph{constants}, and we write $c$ rather than $c()$ for the term built using a constant $c$.
For a term $t$, we denote with $\Var(t)$ the set of variables (i.e., elements of $V$) occurring in~$t$. The \emph{size} of $t$
is the number of occurrences of function symbols and variables in $t$.
 
An equational theory $E$ is given by a set of identities $s\id t$ between terms, which are (implicitly) assumed to be universally quantified. 
Such a set of identities $E$ induces the congruence relation $\id_E$ on terms, which can either be defined syntactically through
rewriting or semantically through first-order interpretations of $\Sigma$, with $\id$ as identity relation~\cite{DBLP:books/daglib/0092409}.
We will assume in this paper that all equational theories under consideration are non-trivial in the sense that $x\id_E y$ does not hold for distinct
variables $x, y$. While a trivial equational theory is not inconsistent in the logical sense since it has models of cardinality $1$, it is
similarly degraded since $\id_E$ has a single equivalence class consisting of all terms. Making this general non-triviality assumption
simplifies the formulation of some of our results.

In this section, we introduce several specific equational theories as well as some classes of equational theories that have been
investigated in the context of unification, and which will play an important rôle in this article.

\subsection{Some common equational theories}
We start with presenting some identities that axiomatize frequently occurring properties of mathematical functions. 
Let $f$ be a binary function symbol, and $x,y,z\in V$. Then the following identities respectively state associativity ($\Assoc$), commutativity ($\Commu$),
idempotency ($\Idem$) of $f$, as well as the fact that $f$ has the constant $\unit$ as a unit ($\Unit$):
\[
\begin{array}{l@{\ \ \ \ \ }l}
\Assoc := \{f(x,f(y,z)) \id f(f(x,y),z)\}, & \Commu := \{f(x,y) \id f(y,x)\},\\[.3em] 
\Idem := \{f(x,x) \id x\},                 & \Unit := \{f(x,\unit) \id x\}.
\end{array}
\]
For function symbols satisfying several of these properties, we denote the theory obtained as the union of some of these theories by concatenating
the corresponding letters. For example, $\AC = \Assoc \cup \Commu$ and \ACUI denotes the union of all four theories. By a slight abuse of notation, we
will use the respective letters also for binary function symbols different from $f$ and constants different from $\unit$, but in the union we assume
that the same binary symbol is used. For example, we may use \AU also for $\{g(x,g(y,z)) \id g(g(x,y),z),g(x,1)\id x\}$, but
\AC not for $\{f(x,f(y,z)) \id f(f(x,y),z), g(x,y)\id g(y,x)\}$.

\subsection{Regular and finite theories}
	The equational theory $E$ is called
	\begin{itemize}
		%
		%
		\item 
		\emph{regular} if $\Var(s)=\Var(t)$ holds for all identities $s\id t$ in $E$.
		\item 
		\emph{finite} if every equivalence class of the induced congruence relation $\id_E$ is finite. That is, for all terms $t \in T(\Sigma,V)$, the set $\{s \in T(\Sigma,V) \mid s \id_E t\}$ is finite.
	\end{itemize}

The following two properties of these classes of theories will turn out to be useful later on.
First, it is well-known that regularity of the defining set of identities of an equational theory implies regularity of the whole theory 
(see, e.g., \cite{DBLP:conf/rta/Yelick85} for a proof): 
\begin{lem}\label{regular:implies:all:regular:lem}
	If $E$ is regular, then $\Var(s)=\Var(t)$ holds for all terms $s, t$ satisfying $s\id_E t$.
\end{lem}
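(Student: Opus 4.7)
The plan is to proceed by induction on the structure of an equational proof witnessing $s \id_E t$. Recall that $\id_E$ can be characterized as the smallest congruence on $T(\Sigma,V)$ that contains every pair $(l\sigma, r\sigma)$ where $l \id r$ is an identity in $E$ and $\sigma$ is a substitution. Equivalently, $\id_E$ is generated by single-step replacements $C[l\sigma] \id_E C[r\sigma]$ (for contexts $C$) together with reflexivity, symmetry, and transitivity.

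First I would establish the key one-step case: if $s = C[l\sigma]$ and $t = C[r\sigma]$ for some identity $l\id r \in E$, some substitution $\sigma$, and some context $C$, then $\Var(s) = \Var(t)$. For this, note that
\[
\Var(l\sigma) \;=\; \bigcup_{x\in\Var(l)}\Var(\sigma(x)) \;=\; \bigcup_{x\in\Var(r)}\Var(\sigma(x)) \;=\; \Var(r\sigma),
\]
where the middle equality uses the assumption that $E$ is regular, i.e., $\Var(l)=\Var(r)$. Inserting equal variable sets into the same context $C$ yields equal variable sets, since $\Var(C[u]) = \Var(C) \cup \Var(u)$ where $\Var(C)$ denotes the variables occurring in $C$ outside of its hole.

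With the one-step case in hand, the general statement follows by a straightforward induction on the length of the derivation $s \id_E t$. The reflexive case $s = t$ is trivial, the symmetric case is immediate since $\Var(s) = \Var(t)$ is symmetric, and the transitive case combines two applications of the induction hypothesis through an intermediate term $u$ to obtain $\Var(s) = \Var(u) = \Var(t)$. The only real obstacle is setting up the derivation-based induction cleanly; once that is done, the argument reduces to the observation above about one-step replacements, together with the trivial behaviour of $\Var$ under contexts. No deeper tool is required, and the result then holds as stated.
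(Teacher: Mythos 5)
Your proof is correct. The paper does not prove this lemma itself but defers to the literature (citing Yelick), and your argument---induction on the equational derivation, with the one-step case $\Var(l\sigma)=\bigcup_{x\in\Var(l)}\Var(\sigma(x))=\bigcup_{x\in\Var(r)}\Var(\sigma(x))=\Var(r\sigma)$ handled via regularity and the context case via $\Var(C[u])=\Var(C)\cup\Var(u)$---is exactly the standard proof one finds there.
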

Second, as shown in Lemma~3.3.1 of \cite{BurckertHS89}, finite theories are always regular:
\begin{lem}\label{finite:is:regular:cfree:lem}
	Every finite equational theory is regular.
\end{lem}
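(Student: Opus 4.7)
\medskip

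\noindent\textbf{Proof plan.} The plan is to argue by contraposition: I will assume that $E$ is not regular and use the infinite supply of variables to construct an infinite equivalence class of $\equiv_E$, contradicting the assumption that $E$ is finite.

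So suppose $E$ is not regular. By the definition of regularity, there is an identity $s \id t$ in $E$ such that $\Var(s) \neq \Var(t)$. Without loss of generality, I may assume that there is a variable $x \in \Var(s) \setminus \Var(t)$ (otherwise swap $s$ and $t$). Since $s \id t$ is an axiom of $E$, we have $s \id_E t$, and because $\id_E$ is a congruence that is stable under substitutions, applying any substitution of the form $[x \mapsto y]$ with $y \in V$ to both sides yields $s[x \mapsto y] \id_E t[x \mapsto y]$. Because $x \notin \Var(t)$, the right-hand side equals $t$, so $s[x \mapsto y] \id_E t$ for every variable $y \in V$.

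The final step is to produce infinitely many syntactically distinct such witnesses. Since $\Var(s)$ is finite while $V$ is countably infinite, I can pick an infinite sequence of pairwise distinct variables $y_1, y_2, \ldots$ that do not occur in $s$. Then the terms $s[x \mapsto y_i]$ are pairwise syntactically distinct (because $x$ actually occurs in $s$, so the occurrence of $y_i$ survives in the result), yet each of them lies in the equivalence class $[t]_{\id_E}$. Hence this equivalence class is infinite, contradicting the assumption that $E$ is finite.

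I do not expect any real obstacle here; the only point that has to be handled cleanly is the choice of the substituted variables: they must be distinct and not already appearing in $s$, so that the resulting terms are genuinely different as syntactic objects rather than just as elements of $T(\Sigma,V)/{\id_E}$. Everything else is a direct application of the substitution-stability of $\id_E$ and of the assumption that the axiom $s \id t$ violates the variable-inclusion condition.
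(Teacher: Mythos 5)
Your proof is correct. The paper does not reproduce a proof of this lemma at all---it simply cites Lemma~3.3.1 of \cite{BurckertHS89}---but your contrapositive argument (a non-regular axiom $s \id t$ with $x \in \Var(s)\setminus\Var(t)$ yields, via substitution-stability of $\id_E$, infinitely many pairwise syntactically distinct terms $s[x\mapsto y_i]$ all lying in $[t]_{\id_E}$, so that class is infinite) is exactly the standard argument for this fact and handles the one delicate point, the distinctness of the witnesses, correctly.
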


Obviously, the identities in \Assoc, \Commu, \Idem, and \Unit satisfy the regularity condition, and thus any equational theory obtained as a union
of these theories is also regular. However, the identities in \Idem and \Unit destroy finiteness. In fact, the chain of equivalences
\[
   x \id_\Idem f(x,x) \id_\Idem f(x,f(x,x)) \id_\Idem f(x,f(x,f(x,x))) \id_\Idem \cdots 
\]
shows that the $\id_\Idem$-equivalence class of the variable $x$ is infinite, and so does the following chain for its $\id_\Unit$-equivalence class:
\[
x \id_\Unit f(x,\unit) \id_\Unit f(x,f(\unit,\unit)) \id_\Unit f(x,f(\unit,f(\unit,\unit))) \id_\Unit \cdots .
\]

Let us continue by presenting examples of finite theories. First, note that the empty theory $\emptyset$ as well as the theories $\Assoc$, $\Commu$, and $\AC$
are obviously finite since they preserve the size and the symbols of terms, i.e., every term in the equivalence class of the term $t$ has the same size as $t$
and contains the same function symbols and variables as $t$.
Second, we consider distributivity of a binary function symbol $m$ (for multiplication) over a binary function symbol $p$ (for plus):
\[
\D := \{
m(z,p(x,y)) \id p(m(z,x),m(z,y)),\ 
m(p(x,y),z) \id p(m(x,z),m(y,z))
\}.
\]
Finiteness of \D was first observed in Szabo's PhD thesis~\cite{Szabo83}, though Szabo calls these theories permutative. 
Since the thesis is written in German and very hard to access, we give a new proof of this fact here.
We associate with $m$ the polynomial $P_m(X,Y) := XY$, with $p$ the polynomial $P_p(X,Y) := X+Y$, and with every constant $c$ the polynomial $P_c := 2$. This basically means that we interpret $m$ as multiplication and $p$ as addition. Let $\pi_2(t)$ be the natural number obtained by evaluating $P_t$ with $2$ substituted for all indeterminates of $P_t$. It is shown in the proof of Lemma~5.1 of~\cite{DBLP:journals/lmcs/BaaderG25} that the size of a given term $t$ is strictly smaller than $\pi_2(t)$.\footnote{%
	Strictly speaking, this result is formulated in~\cite{DBLP:journals/lmcs/BaaderG25} for the nu-size, where unary function symbols are not counted, but since in our setting there are no unary function symbols, the nu-size coincides with the size of the term. 
}
When computing $\pi_2(t)$, the function symbol $m$ is interpreted as multiplication of natural numbers and $p$ is interpreted as addition of natural numbers. Since multiplication distributes over addition from both sides, this means that $s\id_\D t$ implies that $\pi_2(s) = \pi_2(t)$. Consequently, all elements $s$ of the $\id_\D$ equivalence class of $t$ satisfy $|s| < \pi_2(t)$. Since $\pi_2(t)$ is a finite natural number and \D preserves the symbols of $t$, this implies that the $\id_\D$ equivalence class of $t$ is finite.

Summing up, we have shown the following result.

\begin{prop}\label{finite:theories:prop}
	The equational theories $\emptyset$, $\Commu$, $\Assoc$, \AC, and \D are finite.
\end{prop}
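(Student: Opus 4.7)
The plan is to split the proof of Proposition~\ref{finite:theories:prop} into two parts, handling the first four theories uniformly and isolating \D as the only nontrivial case, since the distributivity axioms behave differently from the other axioms.

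For $\emptyset$, $\Commu$, $\Assoc$, and $\AC$, I would argue by the same observation in all four cases: each defining identity is size-preserving (the two sides have the same number of symbol occurrences) and, moreover, uses the same set of variables and the same set of function symbols on both sides. Applying such an identity as a rewrite step inside a term $t$ therefore does not change the size of $t$, the set $\Var(t)$, or the set of function symbols occurring in $t$. Since there are only finitely many terms of any fixed size over a finite signature with a finite set of variables, this bounds each $\id_E$-equivalence class inside a finite set and yields finiteness.

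The case of \D is more delicate because distributivity is \emph{not} size-preserving. My plan is to use a polynomial interpretation into $\mathbb{N}$. Associate with $m$ the operation of multiplication, with $p$ the operation of addition, and assign the value $2$ to every variable and every constant, thereby obtaining for each term $t$ an evaluation $\pi_2(t) \in \mathbb{N}$. Because multiplication distributes over addition in $\mathbb{N}$ from both sides, both \D-axioms are valid in this interpretation, so $s \id_\D t$ entails $\pi_2(s) = \pi_2(t)$. The technical heart of the argument is then the strict inequality $|t| < \pi_2(t)$ for every term $t$, to be proved by induction on $t$: for leaves of size $1$ we have $\pi_2(t) = 2$; for $p(t_1,t_2)$ the inductive hypothesis gives $1 + |t_1| + |t_2| < \pi_2(t_1) + \pi_2(t_2)$ using $\pi_2(t_i) \ge 2$; and for $m(t_1,t_2)$ the analogous step uses $\pi_2(t_1) + \pi_2(t_2) \le \pi_2(t_1) \cdot \pi_2(t_2)$, which again relies on $\pi_2(t_i) \ge 2$.

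Putting these ingredients together, every $s$ in the $\id_\D$-equivalence class of $t$ satisfies $|s| < \pi_2(t)$. Combining this bound with Lemma~\ref{regular:implies:all:regular:lem} (applied to the regular theory \D) to control the variables of $s$, and with the observation that only the symbols $m$ and $p$ together with the constants already in $t$ can appear in any $\id_\D$-equivalent term, I conclude that the equivalence class is contained in a finite set of terms. I expect the main obstacle to be the strict-inequality lemma $|t| < \pi_2(t)$ together with the delicate choice of $2$ as the constant value: smaller values would break either the base case or the multiplicative step, since one needs $\pi_2(t_i) \ge 2$ precisely to absorb the additive ``$+1$'' in the size count when passing through $m$.
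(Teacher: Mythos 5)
Your proposal is correct and follows essentially the same route as the paper: the first four theories are dispatched by observing that their axioms preserve size, variables, and function symbols, and \D is handled via the polynomial interpretation $\pi_2$ (with $m$ as multiplication, $p$ as addition, and value $2$ at the leaves), invariance of $\pi_2$ under $\id_\D$, and the strict bound $|t| < \pi_2(t)$. The only difference is that you prove the inequality $|t| < \pi_2(t)$ by an explicit (and correct) induction, whereas the paper imports it from Lemma~5.1 of \cite{DBLP:journals/lmcs/BaaderG25}.
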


\subsection{Monoidal theories}\label{monoidal:theories:def:sec} 

The equational theories \ACU and \ACUI belong to the class of monoidal theories. This class of theories was independently introduced by Nutt \cite{Nutt90} and Baader (as \emph{commutative theories}) \cite{DBLP:journals/jsc/Baader89}, and the correspondence between the two formulations was established in \cite{BaaderN96}.
For simplicity, we use the definition of monoidal theories given in \cite{Nutt90}, since it requires less auxiliary definitions

Let $E$ be an equational theory. The signature of $E$, denoted as $\Sig(E)$, is the set of function symbols occurring in the identities of $E$. 
An equational theory $E$ is called \emph{monoidal} if it satisfies the following properties:
	\begin{itemize}
		\item
		The signature $\Sig(E)$ contains a constant symbol $\unit$ and a binary function symbol $f$, and all other function symbols in $\Sig(E)$ are unary.
		\item
		The symbol $f$ is (modulo $E$) associative and commutative with unit $\unit$. That is,
		\[
		f(x,y) \id_E f(y,x),\ \ f(x,(f(y,z)) \id_E f(f(x,y),z),\ \ f(x,\unit) \id_E x.
		\]
		\item
		Every unary function symbol $h$ in $\Sig(E)$ is a homomorphism for $f$ and $\unit$, i.e.,
		\[
		h(f(x,y)) \id_E f(h(x),h(y)),\ \ h(\unit) \id_E \unit.
		\]
	\end{itemize}
The equational theory \ACU is the least monoidal theory in the sense that its signature contains no unary function symbol and $f$ is not required to satisfy any additional property. 
Another example of a monoidal theory is \ACUI, as is its extension with a homomorphism:
\[
   \ACUIh := \ACUI \cup \{h(f(x,y)) \id f(h(x),h(y)),\ h(\unit) \id \unit\}.
\]
The same is true for the extension \ACUh of \ACU with a homomorphism.
By a slight abuse of notation, we use \ACUIh and \ACUh also to denote the extension of \ACUI and \ACU with more than one homomorphism.
Finally, note that the theory \AG of \emph{Abelian groups} is also a monoidal theory. It is obtained by adding to \ACU a unary function symbol $\inv$ that expresses the existence of an inverse. That is,
\[
   \AG := \ACU \cup \{f(x,\inv(x)) \id \unit\}. 
\]
Note that, although not explicitly stated in \AG, the unary function symbol $\inv$ is also a homomorphism. It is not hard to show that the following holds:
\[
  \inv(f(x,y)) \id_\AG f(\inv(x),\inv(y))\ \ \  \mbox{and}\ \ \ \inv(\unit) \id_\AG \unit.
\] 
Hence, \AG is indeed a monoidal theory. 
In contrast to \ACU, \ACUI and \ACUIh, the theory \AG is an example of a monoidal theory that is not regular. None of these theories is finite due to
the fact that \Unit is contained in them.

\subsection{Equational theories induced by description and modal logics} %
In this section, we briefly introduce the description logics \ALC, \EL, and \FLzero and the modal logic \ModK, which is a syntactic variant of \ALC \cite{DBLP:conf/ijcai/Schild91}, and recall the equational theories that can be used to axiomatize equivalence in these logics. More information about these logics can be found in~\cite{DBLP:books/daglib/0041477} for \ALC and \EL, in \cite{DBLP:journals/jsc/BaaderN01} for \FLzero, and in~\cite{DBLP:books/cu/BlackburnRV01} for \ModK. An overview on results for unification in modal and description logics can be found in~\cite{DBLP:journals/igpl/BaaderG11}.

Given disjoint sets of concept names (unary predicates) and role names (binary predicates), \emph{\ALC concept descriptions} (or simply \emph{concepts}) 
are built from concept names using the concept constructors \emph{top} ($\top$), \emph{conjunction} ($C\sqcap D$), \emph{disjunction} ($C\sqcup D$),
\emph{negation} ($\neg C$), \emph{value restriction} ($\forall r.C$), and 
\emph{existential restriction} ($\exists r.C$) according to the following grammar:
\[
	C  ::= \top \mid \: \bot \: \mid \: A \: \mid \: C \sqcap C \: \mid \: C \sqcup C \: \mid \: \neg C \: \mid\forall r.C \: \mid\exists r.C,
\]
where $A$ stands for concept names, $r$ for role names, and $C$ for \ALC concepts. 
\EL is the sublogic of \ALC that uses only the constructors top, conjunction, and existential restriction, 
whereas \FLzero uses only the constructors top, conjunction, and value restriction.

In the model-theoretic semantics of \ALC, a given interpretation \Imc assigns sets $C^\Imc$ to
concept descriptions $C$ according to the semantics of the constructors. To be more precise, an \emph{interpretation} $\Imc = (\Delta^\Imc,\cdot^\Imc)$
consists of a non-empty interpretation domain $\Delta^\Imc$ and an extension function $\cdot^\Imc$ that assigns subsets of this domain to concept names and
binary relations on the domain  to role names. This interpretation function is extended to concept descriptions as follows:
\[
\begin{array}{rcl}
	\top^\Imc &:=& \Delta^\Imc,\ \ \bot^\Imc\ \, :=\ \, \emptyset,\ \ 
                       (C\sqcap D)^\Imc\ \,  :=\ \,  C^\Imc\cap D^\Imc,\ \  (\neg C)^\Imc\ \,  :=\ \,  \Delta^\Imc\setminus C^\Imc,\\[.3em]
	(\exists r.C)^\Imc &:=& \{d\in \Delta^\Imc \mid \mbox{there exists}\ e\in \Delta^\Imc\ \mbox{such that}\  (d,e)\in r^\Imc\ \mbox{and}\  e\in C^\Imc\}, \\[.3em]
	(\forall r.C)^\Imc &:=& \{d\in \Delta^\Imc \mid \mbox{for all}\ e\in \Delta^\Imc, (d,e)\in r^\Imc\ \mbox{implies}\ e \in C^\Imc\}. 
\end{array}
\]
Given two \ALC concepts $C$ and $D$, we say that $C$ is \emph{subsumed} by $D$ (written $C\sqsubseteq D$) if 
$C^\Imc\subseteq D^\Imc$ holds for all interpretations $\Imc$, and that $C$ and $D$ are \emph{equivalent} (written $C\equiv D$) 
if they subsume each other, i.e., $C^\Imc = D^\Imc$ holds for all interpretations $\Imc$. 

Equivalence of \EL concept descriptions can be axiomatized by the equational theory \bSLmO of \emph{bounded semilattices with 
	monotone operators} \cite{DBLP:journals/cuza/Sofronie-Stokkermans13,DBLP:journals/ndjfl/BaaderNBM16,DBLP:journals/fuin/Sofronie-Stokkermans17}. 
For this purpose, we view the conjunction operator $\sqcap$ as a binary function symbol (written infix), $\top$ as a constant symbol, and
$\exists r.$ for a role name $r$ as a unary function symbol. The theory  
\bSLmO over this signature then consists of the identities stating that $\sqcap$ is associative, commutative, and idempotent, has $\top$ as unit, 
and existential restrictions as monotone operators:
\begin{eqnarray*}
	\bSLmO &:=& \{x\sqcap y \id y\sqcap x,\  (x\sqcap y)\sqcap z\id x\sqcap (y\sqcap z),\  x\sqcap x \id x,\  x\sqcap \top \id x\}\ \cup \\[.2em]
	 && \{\exists r.x\sqcap\exists r.(x\sqcap y) \id \exists r.(x\sqcap y) \mid \mbox{for each role name } r\}.
\end{eqnarray*}
To axiomatize equivalence of \FLzero concepts, we also view $\forall r.$ for a role name $r$ as a unary function symbol. Note that the semantics of value restriction yields the following equivalences: 
\[
  \forall r.(C \sqcap D) \equiv \forall r.C \sqcap \forall r.D \ \ \ \mbox{and} \ \ \ \forall r.\top \equiv \top.
\] 
Obviously, these equivalences tell us that the unary function symbol $\forall r.$ behaves as a homomorphism for $\sqcap$ and $\top$. 
Hence, equivalence of \FLzero concepts can be axiomatized by the monoidal theory \ACUIh, with as many homomorphisms as there are role names~\cite{DBLP:journals/jsc/BaaderN01}.

In the following, we use $\idEL$ and $\idFLzero$ (rather than $\id_\bSLmO$ and $\id_\ACUIh$) to denote the congruence relations induced
by the sets of identities axiomatizing equivalence in \EL and~\FLzero, respectively, and by a slight abuse of notation also use
\EL and~\FLzero to denote these identities.

Equivalence in \ALC can be axiomatized by the identities axiomatizing Boolean algebras for conjunction, disjunction, negation, $\top$, and $\bot$, 
together with identities stating that $\forall r.$ acts as homomorphism on conjunction and $\top$, and the usual definition of existential restrictions
using value restrictions and negation:
$
\exists r.x \id \neg\forall r.\neg x.
$

For the (multi-) modal logic \ModK, which is a syntactic variant of \ALC via the translation
\[
\top\Rightarrow \top, \ \
\bot\Rightarrow \bot,\ \
\neg\Rightarrow \neg,\ \
{\sqcap}\Rightarrow {\wedge},\ \
{\sqcup}\Rightarrow {\vee},\ \
{\forall r.}\Rightarrow \Box_r,\ \
{\exists r.}\Rightarrow \Diamond_r,
\]
this was, e.g., shown in~\cite{DBLP:journals/jsyml/Lemmon66}.
Again, we will use \ModK also to denote the corresponding set of identities and $\id_\ModK$ for the induced congruence relation.

\section{Unification and unification types}\label{basic:sect}

In unification, the goal is to make pairs of terms equivalent w.r.t.\ $\id_E$ for an equational theory $E$ by replacing variables with terms.
Formally, this replacement is achieved by applying a substitution.

A \emph{substitution} $\sigma$ is a mapping from $V$ to $T(\Sigma,V)$ that has a finite \emph{domain} $\Dom(\sigma) := \{x\in V \mid \sigma(x)\neq x\}$.
It can be homomorphically extended to a mapping from $T(\Sigma,V)$ to $T(\Sigma,V)$ by defining 
$\sigma(f(t_1,\ldots,t_n)) := f(\sigma(t_1),\ldots,\sigma(t_n))$. 
The \emph{variable range} $\VRan(\sigma)$ of $\sigma$ consists of the set of variables occurring in the terms $\sigma(x)$ for $x\in\Dom(\sigma)$.
Substitutions can be compared using the instantiation preorder:
given an equational theory $E$, a set of variables $X\subseteq V$, and two substitutions $\sigma, \tau$, we say that
$\sigma$ is \emph{more general} than $\tau$ (or $\tau$ is an \emph{instance} of $\sigma$) w.r.t.\ $E$ and $X$ 
(written $\sigma\leq_E^X \tau$)
if there is a substitution $\lambda$ such that $\lambda\sigma \id_E^X \tau$, i.e., $\lambda(\sigma(x)) \id_E \tau(x)$ holds for all $x\in X$.
In case $X = V$ we also write $\lambda\sigma \id_E \tau$ in place of $\lambda\sigma \id_E^V \tau$.
It is easy to see that $\leq_E^X$ is indeed a preorder, i.e., reflexive and transitive, but in general not antisymmetric.
We write $\sim_E^X$ for the equivalence relation induced by $\leq_E^X$, i.e., $\sigma \sim_E^X \tau$ iff
$\sigma \leq^X_E \tau$ and $\tau \leq^X_E \sigma$.
We say that $\sigma$ is \emph{strictly more general} than $\tau$ (or $\tau$ is a \emph{strict instance} of $\sigma$) w.r.t.\ $E$ and $X$
(written $\sigma <_E^X \tau$) if $\sigma\leq_E^X \tau$ and $\sigma \not\sim_E^X \tau$.

An \emph{$E$-unification problem} is a finite set of equations of the form
\[
\Gamma = \{s_1\id_E^?t_1,\ldots,s_n\id_E^?t_n\}
\] 
where $s_1,t_1,\ldots,s_n,t_n$ are terms in $T(\Sigma,V)$.
An \emph{$E$-unifier} of $\Gamma$ is a substitution $\sigma$ that solves all the equations in $\Gamma$, i.e.,
satisfies $\sigma(s_i) \id_E \sigma(t_i)$ for all $i, 1\leq i\leq n$. 
The unification problem $\Gamma$ is \emph{solvable} if it has an $E$-unifier.
The set of all $E$-unifiers of $\Gamma$ is denoted as $\Umc_E(\Gamma)$.
For \emph{elementary $E$-unification} it is assumed that $\Sigma$ (and thus also $\Gamma$) contains only function symbols
occurring in $E$. For \emph{$E$-unification with constants}, $\Sigma$ and $\Gamma$ may contain additional constant symbols,
and for \emph{general $E$-unification}, $\Sigma$ and $\Gamma$ may contain additional function symbols of arbitrary arity.
In this paper, we restrict our attention to elementary $E$-unification.

Unification types for non-empty sets of identities $E$ are usually defined w.r.t.\ the \emph{restricted instantiation preorder}, 
which is $\leq_E^X$ where $X$ is the finite
set $\Var(\Gamma)$ of all variables occurring in the given unification problem $\Gamma$, but some authors also use the 
\emph{unrestricted instantiation preorder} $\leq_E^V$. 
In this paper, we will additionally consider settings where $X$ is between these two extremes. Note that $\Var(\Gamma)\subseteq X$ is required for the set of $E$-unifiers to be closed under instantiation.

\begin{lem}
If $\Gamma$ is an $E$-unification problem and $X\subseteq V$ a set of variables satisfying $\Var(\Gamma)\subseteq X$, then
$\sigma\in\Umc_E(\Gamma)$ implies $\theta\in\Umc_E(\Gamma)$ for all substitutions $\theta$ such that $\sigma\leq_E^X \theta$.
\end{lem}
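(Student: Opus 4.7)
The plan is to unfold the definition of $\sigma\leq_E^X\theta$ and then show that each equation of $\Gamma$ is solved by $\theta$ by a short chain of $\id_E$-steps passing through the intermediate substitution $\lambda\sigma$ provided by the definition.

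More concretely, I would start by picking $\lambda$ such that $\lambda(\sigma(x))\id_E\theta(x)$ holds for every $x\in X$. Then I would fix an arbitrary equation $s\id_E^? t$ in $\Gamma$ and exploit three things in sequence: first, since $\sigma\in\Umc_E(\Gamma)$, one has $\sigma(s)\id_E\sigma(t)$; second, because $\id_E$ is a congruence with respect to arbitrary substitutions, applying $\lambda$ to both sides yields $\lambda(\sigma(s))\id_E\lambda(\sigma(t))$; third, since $\Var(s),\Var(t)\subseteq\Var(\Gamma)\subseteq X$, the substitutions $\lambda\sigma$ and $\theta$ agree modulo $\id_E$ on every variable occurring in $s$ and $t$, so a routine structural induction on terms gives $\lambda(\sigma(s))\id_E\theta(s)$ and $\lambda(\sigma(t))\id_E\theta(t)$. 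Combining these three steps by transitivity of $\id_E$ yields $\theta(s)\id_E\theta(t)$, as required, and since the equation was arbitrary this shows $\theta\in\Umc_E(\Gamma)$.

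The only step that is not completely immediate from the definitions is the third one, namely the fact that two substitutions which agree modulo $\id_E$ on a set containing all variables of a term $u$ produce $\id_E$-equivalent images of $u$. This is the substitution-replacement property for congruences and follows by induction on the structure of $u$: for a variable in $X$ it is the hypothesis, and for a compound term $f(u_1,\dots,u_n)$ it follows from the induction hypothesis together with the fact that $\id_E$ is closed under replacing immediate subterms by $\id_E$-equivalent ones. I expect this to be the only place in the argument that needs any care, and since this property is a standard ingredient in unification theory (and implicit in the way $\id_E$ is defined either syntactically by rewriting or semantically through interpretations), I would be inclined to invoke it without a detailed induction, merely pointing out that the hypothesis $\Var(\Gamma)\subseteq X$ is exactly what makes it applicable to every term appearing in $\Gamma$.
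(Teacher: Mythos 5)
Your proof is correct and is exactly the routine argument the paper has in mind (the lemma is stated there without proof, being considered standard): unfold $\sigma\leq_E^X\theta$ to get $\lambda$, use stability of $\id_E$ under substitution to pass from $\sigma(s)\id_E\sigma(t)$ to $\lambda(\sigma(s))\id_E\lambda(\sigma(t))$, and use the congruence/replacement property together with $\Var(\Gamma)\subseteq X$ to replace $\lambda\sigma$ by $\theta$ on both sides. You also correctly isolate the one step needing care and the precise role of the hypothesis $\Var(\Gamma)\subseteq X$, so nothing further is required.
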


Given an $E$-unification problem $\Gamma$ and some set of variables $X$ with $\Var(\Gamma)\subseteq X\subseteq V$, we say that
a set $\Smc$ of substitutions is a \emph{complete set} of $E$-unifiers of $\Gamma$ w.r.t.\ $\leq_E^X$ if it consists of $E$-unifiers of $\Gamma$, 
and every $E$-unifier of $\Gamma$ is an instance of an element of the complete set, i.e., for every $\theta\in\Umc_E(\Gamma)$ there exists
$\sigma\in\Smc$ such that $\sigma \leq_E^X \theta$. 
Such a set is called \emph{minimal} if it does not contain two distinct elements that are comparable w.r.t.\ $\leq_E^X$. 
It is easy to see that minimal complete sets of $E$-unifiers of a given unification problem $\Gamma$ are unique up to the equivalence relation 
$\sim_E^X$ induced by the preorder $\leq_E^X$ (see, e.g., Corollary 3.13 in~\cite{DBLP:books/el/RV01/BaaderS01} and \autoref{th:min} below), 
and thus all have the same cardinality.

\begin{defi}
Let $\Gamma$ be a solvable $E$-unification problem and $X$ a set of variables such that $\Var(\Gamma)\subseteq X\subseteq V$. 
Then the \emph{unification type} of $\Gamma$ w.r.t.\ $\leq_E^X$ is
\begin{itemize}
\item
  \emph{unitary} if $\Gamma$ has a minimal complete set of $E$-unifiers of cardinality one w.r.t.\ $\leq_E^X$, whose
   single element is then called \emph{most general $E$-unifier (mgu)},
\item
  \emph{finitary} if $\Gamma$ has a finite minimal complete set of $E$-unifiers of cardinality greater than one w.r.t.\ $\leq_E^X$,
\item
  \emph{infinitary} if $\Gamma$ has an infinite minimal complete set of $E$-unifiers w.r.t.\ $\leq_E^X$, 
\item
  \emph{zero} if $\Gamma$ does not have a minimal complete set of $E$-unifiers w.r.t.\ $\leq_E^X$, i.e.,
  every complete set is redundant in the sense that it must contain two distinct elements that are comparable w.r.t.\ $\leq_E^X$.
\end{itemize}
\end{defi}
 
Minimal complete sets of unifiers can alternatively be characterized using the following order-theoretic point of 
view~\cite{DBLP:conf/rta/Baader89,DBLP:books/el/RV01/BaaderS01}. 
Let $\Gamma$ be an $E$-unification problem and $X\subseteq V$ a set of variables satisfying $\Var(\Gamma)\subseteq X$.
We denote the $\sim_E^X$-equivalence class of a unifier $\sigma$ as $[\sigma]_E^X$ and the set of all equivalence classes of unifiers 
as $[\Umc_E(\Gamma)]_E^X$. The partial order $\preceq_E^X$ on $[\Umc_E(\Gamma)]_E^X$ induced by the instantiation preorder $\leq_E^X$ 
on unifiers is defined as $[\sigma]_E^X\preceq_E^X[\tau]_E^X$ if $\sigma\leq_E^X\tau$.
We say that $S\subseteq [\Umc_E(\Gamma)]_E^X$ is complete w.r.t.\ $\preceq_E^X$
if every element of $[\Umc_E(\Gamma)]_E^X$ is above (w.r.t.\ $\preceq_E^X$) some element of $S$.

       \begin{thm}[\cite{DBLP:books/el/RV01/BaaderS01}]
       \label{th:min}
Let $M$ be the set of $\preceq_E^X$-minimal elements of $[\Umc_E(\Gamma)]_E^X$.
If $\Smc$ is a minimal complete set of $E$-unifiers of $\Gamma$ w.r.t.\ $\leq_E^X$, then $M=\{[\sigma]_E^X\mid\sigma\in \Smc\}$.
Conversely, if $M$ is complete in $[\Umc_E(\Gamma)]_E^X$, then any set of substitutions obtained by picking one representative for each element
of $M$ is a minimal complete set of $E$-unifiers of $\Gamma$.
       \end{thm}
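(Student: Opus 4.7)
The statement has two directions, and both rely on the same order-theoretic dictionary between the preorder $\leq_E^X$ on unifiers and the induced partial order $\preceq_E^X$ on equivalence classes. Throughout, I would use (without comment) that $\sigma \leq_E^X \tau$ is equivalent to $[\sigma]_E^X \preceq_E^X [\tau]_E^X$, so that minimality of an equivalence class in $[\Umc_E(\Gamma)]_E^X$ translates to the fact that no unifier is strictly more general.

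For the first direction, I would fix a minimal complete set $\Smc$ and prove both inclusions for $\{[\sigma]_E^X \mid \sigma\in\Smc\}$ and $M$. To show that each $[\sigma]_E^X$ with $\sigma\in\Smc$ is $\preceq_E^X$-minimal, I would argue by contradiction: if some unifier $\tau$ satisfied $\tau <_E^X \sigma$, then by completeness of $\Smc$ there is $\sigma'\in\Smc$ with $\sigma' \leq_E^X \tau$, hence $\sigma' \leq_E^X \sigma$ but $\sigma' \not\sim_E^X \sigma$. Whether $\sigma'=\sigma$ (which would force $\sigma\sim_E^X\sigma$, a contradiction via the strictness) or $\sigma'\neq\sigma$ (which contradicts minimality of $\Smc$, since two distinct comparable elements would lie in $\Smc$), both cases fail. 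Conversely, given $[\tau]_E^X\in M$, completeness gives $\sigma\in\Smc$ with $\sigma\leq_E^X\tau$; the $\preceq_E^X$-minimality of $[\tau]_E^X$ then forces $\sigma\sim_E^X\tau$, so $[\tau]_E^X = [\sigma]_E^X$ belongs to $\{[\sigma]_E^X\mid\sigma\in\Smc\}$.

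For the second direction, suppose $M$ is complete in $[\Umc_E(\Gamma)]_E^X$ and let $\Smc$ be obtained by choosing one representative from each class in $M$. Completeness of $\Smc$ as a set of $E$-unifiers is immediate: for any $\theta\in\Umc_E(\Gamma)$, by assumption there is $[\sigma]_E^X\in M$ with $[\sigma]_E^X \preceq_E^X [\theta]_E^X$, and the chosen representative $\sigma'\in\Smc$ of $[\sigma]_E^X$ satisfies $\sigma'\sim_E^X\sigma\leq_E^X\theta$, whence $\sigma'\leq_E^X\theta$. For minimality of $\Smc$, I would take two distinct elements $\sigma_1,\sigma_2\in\Smc$ with $\sigma_1\leq_E^X\sigma_2$; then $[\sigma_1]_E^X\preceq_E^X[\sigma_2]_E^X$, and since $[\sigma_2]_E^X$ is $\preceq_E^X$-minimal, the two classes coincide. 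However, by construction different elements of $\Smc$ represent different classes of $M$, so $[\sigma_1]_E^X\neq[\sigma_2]_E^X$ — a contradiction.

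Since the argument is essentially a bookkeeping exercise in a quotient poset, I do not anticipate a genuinely hard step; the only subtlety worth flagging is the careful handling of the distinction between equality of substitutions and $\sim_E^X$-equivalence, which matters both when excluding the case $\sigma'=\sigma$ in the first direction and when invoking ``distinct representatives'' in the minimality argument of the second direction.
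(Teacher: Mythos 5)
Your proof is correct: both inclusions in the first direction and both the completeness and minimality arguments in the second are carried out properly, including the careful exclusion of the case $\sigma'=\sigma$ via reflexivity of $\sim_E^X$. Note that the paper does not prove this theorem itself but cites it from the literature (Corollary~3.13 in the referenced handbook chapter); your argument is the standard one given there, so there is nothing to add.
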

   
Consequently, unification type zero corresponds to the case where the set $M$ of minimal elements is not complete, whereas 
the other types are determined by the cardinality of the set $M$ in case it is complete.
Theorem~3.1 in~\cite{DBLP:conf/rta/Baader89} establishes  conditions that are necessary, sufficient, or both for proving
unification type zero of a given unification problem.\footnote{%
        Note that, in~\cite{DBLP:conf/rta/Baader89}, the instantiation preorder is written the other way round, i.e., more general substitutions
        are larger.
}

As usual, we order unification types w.r.t.\ how bad they are (larger is worse) by setting
\begin{center}
	\textrm{zero} $>$ \textrm{infinitary} $>$ \textrm{finitary} $>$ \textrm{unitary}.
\end{center}
The unification type w.r.t.\ $\leq_E^X$ of an equational theory $E$ is the worst type of any solvable $E$-unification problem w.r.t.\ $\leq_E^X$.
The \emph{unrestricted unification type} of $\Gamma$ ($E$) is the one w.r.t.\ $\leq_E^V$ and the
\emph{restricted unification type} of $\Gamma$ ($E$) is the one w.r.t.\ $\leq_E^X$ for $X=\Var(\Gamma)$.

\begin{table}
 \centering
 	\begin{tabular}{c|c|c|c|c}
 		\hline
 		\diagbox{restricted}{unrestricted} & unitary & finitary & infinitary & zero \\[.3em]
 		\hline
 		unitary & $\emptyset$ & ? & \ACU, \ACUI& ? \\[.3em]
 		\hline
 		finitary & \diagbox[dir=NE] & \Commu, \Idem & \AC, \ACI& ? \\[.3em]
 		\hline
 		infinitary & \diagbox[dir=NE] & \diagbox[dir=NE] & \Assoc, \D & ?\\[.3em]
 		\hline
 		zero & \diagbox[dir=NE] & \diagbox[dir=NE] & \EL, \AI, \FLzero, \ACUh & $\ModK$ \\[.3em]
 		\hline
 	\end{tabular}
 	\caption{Overview on results about the restricted and unrestricted unification type for elementary unification.}
 	\label{results:table}
 \end{table}
 
\medskip 

The results on the unification type of equational theories in the literature 
(see, e.g., \cite{JoKi91,DBLP:books/ox/LAI2Hb1994/BaaderS94,DBLP:books/el/RV01/BaaderS01} for overviews)
are usually shown for the restricted case. 
As we will demonstrate in this article, it may indeed make a considerable difference for the unification type which instantiation preorder is
employed in its definition. Table~\ref{results:table} summarizes our main results, where in all but one case the restricted unification type was known and
the unrestricted one is determined in this paper. For $\ModK$, it is the other way round. As this table shows, when going from the restricted to the unrestricted
instantiation preorder, the unification type can both get worse (e.g., for \ACU) and get better (e.g., for \EL), but it may also stay the same (e.g., for \D).

The struck out cells of the table indicate changes that are not possible due to the fact that ${\leq_E^{V}}\subseteq {\leq_E^{\Var(\Gamma)}}$.
The contrapositive of the following theorem justifies the struck out cells. 

\begin{thm}\label{typ:no:worse:thm}
	Let $E$ be an equational theory. If $E$ has unrestricted unification type unitary (finitary), then it has
	restricted unification type unitary (finitary or unitary).
\end{thm}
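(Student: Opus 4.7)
The plan is to exploit the elementary inclusion ${\leq_E^V} \subseteq {\leq_E^{\Var(\Gamma)}}$: if $\lambda\sigma(x) \id_E \tau(x)$ holds for every $x\in V$, it certainly holds for every $x$ in the smaller set $\Var(\Gamma)$. The immediate consequence is that any complete set of $E$-unifiers of a problem $\Gamma$ w.r.t.\ $\leq_E^V$ is automatically complete w.r.t.\ $\leq_E^{\Var(\Gamma)}$. This already settles the unitary case: a singleton minimal complete set $\{\sigma\}$ of $\Gamma$ w.r.t.\ $\leq_E^V$ is trivially minimal and complete w.r.t.\ $\leq_E^{\Var(\Gamma)}$ as well, so the restricted type of $\Gamma$ is unitary, and hence $E$ itself is restricted unitary when it is unrestricted unitary.

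For the finitary case, I would pick an arbitrary solvable problem $\Gamma$ and let $\Smc = \{\sigma_1,\dots,\sigma_n\}$ be a finite minimal complete set of $E$-unifiers of $\Gamma$ w.r.t.\ $\leq_E^V$, which exists by hypothesis. Then $\Smc$ is still complete w.r.t.\ $\leq_E^{\Var(\Gamma)}$, but possibly no longer minimal since the coarser preorder may introduce new comparabilities among the $\sigma_i$. The key step is to extract a minimal complete subset: take $M\subseteq\Smc$ to consist of those $\sigma_i$ that are $\leq_E^{\Var(\Gamma)}$-minimal within $\Smc$, and then choose a single representative from each $\sim_E^{\Var(\Gamma)}$-equivalence class meeting $M$. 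Finiteness of $\Smc$ guarantees that every $\sigma_i\in\Smc$ has some $\mu\in M$ with $\mu\leq_E^{\Var(\Gamma)}\sigma_i$, from which completeness of the resulting set $\Smc'$ follows by chaining with the original completeness witness in $\Smc$; minimality of $\Smc'$ is immediate by construction. Since $|\Smc'|\le n$, the restricted type of $\Gamma$ is unitary or finitary, and the claim for $E$ follows by taking the worst type over all solvable $\Gamma$.

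The only delicate point is the extraction step: one has to verify that replacing a completeness witness in $\Smc$ by a $\leq_E^{\Var(\Gamma)}$-smaller element of $\Smc$ preserves completeness, which hinges essentially on the finiteness of $\Smc$ (descending $<_E^{\Var(\Gamma)}$-chains inside a finite set terminate). Unsurprisingly, this is exactly where the argument would break down for infinite $\Smc$, explaining why no analogous transfer statement can be made for the infinitary type. An alternative phrasing via \autoref{th:min} is possible -- one would argue that $\preceq_E^V$-minimality of $[\sigma]_E^V$ projects down to $\preceq_E^{\Var(\Gamma)}$-minimality of some representative along the coarser equivalence -- but the direct extraction just sketched is more elementary and keeps the cardinality bound $|\Smc'|\le n$ explicit.
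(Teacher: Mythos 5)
Your proposal is correct and follows essentially the same route as the paper: completeness transfers because ${\leq_E^V}\subseteq{\leq_E^{\Var(\Gamma)}}$, and in the finitary case one prunes the finite complete set down to a minimal complete subset. Your extraction step (keeping only $\leq_E^{\Var(\Gamma)}$-minimal elements and one representative per $\sim_E^{\Var(\Gamma)}$-class) is in fact slightly more careful than the paper's one-line "remove redundant elements", since it also handles the case of distinct but $\sim_E^{\Var(\Gamma)}$-equivalent unifiers in the set.
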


\begin{proof}
	A given finite minimal complete set \Smc of $E$-unifiers of $\Gamma$ w.r.t.\ $\leq_E^{V}$ is also complete w.r.t.\ $\leq_E^{\Var(\Gamma)}$.
	If it is not minimal (which can only happen in the finitary case), then one can make it minimal by removing redundant element
	(i.e., elements $\theta$ such that \Smc contains an element $\sigma <_E^{\Var(\Gamma)} \theta$)
	without destroying completeness.
\end{proof}
The cells containing a question mark indicate cases which are (at least from an order-theoretic point of view) possible, but for which we do
not have concrete equational theories as examples.

\medskip

The main technical sections of this paper (Sections~\ref{lbounds:sect} and~\ref{upper:bounds:sec}) prove the results on the unrestricted unification
type of Table~\ref{results:table} by showing lower bounds (e.g., the unrestricted unification type of \ACU is at least infinitary) and
upper bounds (e.g., the unrestricted unification type of \ACU cannot be zero). 
But first, we consider in the next section cases where the unification type does not change (this concerns the theories $\emptyset$, \Commu, \Idem, \Assoc, and \ModK).

\section{When the employed instantiation preorder does not matter}\label{general:results:sect}

The purpose of the first result shown in this section is to clarify the reason why the unification type can change when going from
the restricted to the unrestricted instantiation preorder. It is not that the unrestricted preorder also takes variables into account that do not
occur in the unification problem and not even that it considers infinitely many such variables. The reason is that it does not leave infinitely many
variables unobserved.

\begin{prop}\label{types:same:lem}
Let $E$ be an equational theory, $\Gamma$ an $E$-unification problem, $X_0 := \Var(\Gamma)$, and $X\subseteq V$ a set of variables such that 
$X_0\subseteq X$ and $V\setminus X$ is infinite. If $\Gamma$ has a minimal complete set of $E$-unifiers w.r.t.\ $\leq_E^X$, then it has a
minimal complete set of $E$-unifiers w.r.t.\ $\leq_E^{X_0}$ of the same cardinality, and vice versa.
\end{prop}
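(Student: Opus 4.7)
My approach is to introduce a canonical form for unifiers that uses the infiniteness of $V \setminus X$ to push extra structure outside of $X$. Given a unifier $\sigma$ of $\Gamma$, let $B$ be the (finite) set of variables from $X \setminus X_0$ that occur in some $\sigma(x)$ with $x \in X_0$; since $V \setminus X$ is infinite, I choose pairwise distinct fresh variables $z_b \in V \setminus X$ for $b \in B$ and let $\rho$ be the renaming $b \mapsto z_b$. The canonical companion $\sigma^\star$ is defined by $\sigma^\star(x) := \rho(\sigma(x))$ for $x \in X_0$ and $\sigma^\star(x) := x$ for $x \notin X_0$. A direct check shows: $\sigma^\star$ is again a unifier of $\Gamma$, $\sigma^\star \sim_E^{X_0} \sigma$ (with witnesses $\rho$ and $\rho^{-1}$), and $\sigma^\star \leq_E^X \sigma$ (the witness undoes $\rho$ on the fresh $z_b$ and maps each $x \in X \setminus X_0$ to $\sigma(x)$, acting as identity elsewhere).

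The key property of canonical unifiers that I will prove next is: for $\sigma^\star$ canonical and $\theta$ an arbitrary unifier, $\sigma^\star \leq_E^{X_0} \theta$ iff $\sigma^\star \leq_E^X \theta$. The nontrivial direction exploits the fact that no variable from $X \setminus X_0$ occurs in $\sigma^\star(x)$ for any $x \in X_0$, so a given witness $\lambda$ for $\leq_E^{X_0}$ can be freely redefined on $X \setminus X_0$ without disturbing those equations; resetting $\lambda(x) := \theta(x)$ for $x \in X \setminus X_0$ then produces a witness for $\leq_E^X$.

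For the forward direction, I let $\Smc^\star := \{\sigma^\star \mid \sigma \in \Smc\}$. Since $\sigma^\star \leq_E^X \sigma$ and $\sigma$ is minimal in $\Smc$ (hence minimal in $[\Umc_E(\Gamma)]_E^X$ by \autoref{th:min}), I obtain $\sigma^\star \sim_E^X \sigma$. Completeness of $\Smc^\star$ w.r.t.\ $\leq_E^{X_0}$ is inherited from completeness of $\Smc$ w.r.t.\ $\leq_E^X$. For minimality: any $\sigma_1^\star \leq_E^{X_0} \sigma_2^\star$ lifts via the key property to $\leq_E^X$ and then, using $\sigma_i \sim_E^X \sigma_i^\star$, to $\sigma_1 \leq_E^X \sigma_2$, forcing $\sigma_1 = \sigma_2$. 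The same reasoning makes $\sigma \mapsto \sigma^\star$ injective, so $|\Smc^\star| = |\Smc|$. The backward direction is symmetric: given a minimal complete $\Tmc$ w.r.t.\ $\leq_E^{X_0}$, one sets $\Tmc^\star := \{\tau^\star \mid \tau \in \Tmc\}$. Completeness w.r.t.\ $\leq_E^X$ is exactly where the key property is applied---to lift $\tau^\star \leq_E^{X_0} \theta$ (obtained from $\tau^\star \sim_E^{X_0} \tau$ and completeness of $\Tmc$) to $\tau^\star \leq_E^X \theta$---while minimality and cardinality follow from $\leq_E^X \subseteq \leq_E^{X_0}$ and the minimality of $\Tmc$.

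The main difficulty lies in the key property of canonical unifiers; everything else is bookkeeping around the definitions. The infiniteness of $V \setminus X$ is used essentially to produce the fresh $z_b$, without which variables from $X \setminus X_0$ appearing in $\sigma|_{X_0}$ cannot be relocated outside $X$, and both the canonical form and the key property break down.
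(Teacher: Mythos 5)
Your proof is correct and follows essentially the same route as the paper's: you normalize each unifier by renaming the range variables from $X\setminus X_0$ out of $X$ (using that $V\setminus X$ is infinite) and restricting the domain to $X_0$, use minimality to see that this normalization stays in the same $\sim_E^X$-class, and then transfer comparability between $\leq_E^{X_0}$ and $\leq_E^X$ by redefining the witness substitution on $X\setminus X_0$. Your ``key property'' cleanly packages the witness-modification argument that the paper carries out separately for minimality in one direction and completeness in the other, but the underlying ideas coincide.
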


\begin{proof}
Let $\Smc$ be a minimal complete set of $E$-unifiers of $\Gamma$ w.r.t.\ $\leq_E^{X}$. For every unifier $\theta$ in $\Smc$  
we can rename the variables in $\VRan(\theta)$ such that they do not belong to $X$ by applying an appropriate permutation $\pi$ that
maps the variables of $\VRan(\theta)$ bijectively to a set of variables $Y$ with $|Y| = |\VRan(\theta)|$ and $Y\cap (X\cup\VRan(\theta)) = \emptyset$.
Such a finite set $Y$ of variables exists since $\VRan(\theta)$ is finite and $V\setminus X$ is infinite.
For a given bijection $p : \VRan(\theta)\rightarrow Y$, we can define $\pi$ as follows:
$\pi(z) := p(z)$ for all $z\in \VRan(\theta)$, $\pi(y) := p^{-1}(y)$ for all $y\in Y$, and $\pi(x) = x$ for all other variables.
Note that $\pi$ is a substitution since its domain is $\VRan(\theta)\cup Y$, which is finite.
To show that $\pi$ really is a permutation, i.e., a bijective mapping from $V$ to $V$, it is sufficient to show that it is an injective mapping
from $V$ to $V$ (see Lemma~2.6 in~\cite{DBLP:journals/jsc/Eder85}). This is an immediate consequence of the facts that
$p$ and $p^{-1}$ are bijections and $Y$ and $\VRan(\theta)$ are disjoint.
 
The substitution $\pi\theta$ is equivalent to $\theta$ w.r.t.\ the equivalence relation $\sim_\emptyset^{V}$ induced by 
$\leq_\emptyset^{V}$, and thus also w.r.t.\ $\sim_E^{X}$. In fact, $\pi^{-1}\pi\theta = \theta$. 
If we restrict the domain of this substitution to $X$, then the resulting substitution $\pi\theta|_X$ is still equivalent to
$\theta$ w.r.t.\ $\sim_E^{X}$ and also satisfies $\VRan(\pi\theta|_X) \cap X = \emptyset$.
To see the latter, consider a variable $x\in \Dom(\pi\theta|_X)$.
First assume that $x\in\Dom(\theta)\cap X$. Then all the variables in $\theta(x)$ belong to $\VRan(\theta)$, and thus all the variables in
$\pi\theta|_X(x)$ belong to $Y$, which is disjoint with $X$. If $x\in X\setminus\Dom(\theta)$, then $\pi\theta|_X(x) = \pi(x) \neq x$.
Since $\Dom(\pi) = \VRan(\theta) \cup Y$ and $Y$ is disjoint with $X$, this implies $x\in \VRan(\theta)$, and thus $\pi(x)\in Y$.
 
Let $\Smc'$ be the set of substitutions obtained from $\Smc$ by applying this renaming and domain restriction process to every element of $\Smc$.
Due to the $\sim_E^{X}$-equivalence of the elements of $\Smc$ with their modified variants in $\Smc'$ and the fact that
$X$ contains all the variables occurring in $\Gamma$, it is easy to see that $\Smc'$ is also a minimal complete set of 
$E$-unifiers of $\Gamma$ w.r.t.\ $\leq_E^{X}$. In a second step, we restrict the domains of the elements of $\Smc'$ to $X_0$.
Given an element $\theta$ of $\Smc'$, we define $\theta|_{X_0}$ to be the substitution that coincides
with $\theta$ on $X_0$ and maps each variable $x\not\in X_0$ to $x$. Since $\theta$ is an $E$-unifier of $\Gamma$ and this unification problem
contains only variables from $X_0$, the substitution $\theta|_{X_0}$ is clearly also an $E$-unifier of $\Gamma$.
We claim that $\theta|_{X_0} \leq_E^{X} \theta$.
To prove this, we define the substitution $\lambda$ by setting $\lambda(x) := \theta(x)$ if $x\in X\setminus X_0$ and
$\lambda(x) := x$ for all other variables. We now show that $\lambda\theta|_{X_0} \id_E^X \theta$.
If $x\in X_0\cap\Dom(\theta)$, then $\lambda(\theta|_{X_0}(x)) = \lambda(\theta(x)) = \theta(x)$. 
The first identity holds by the definition of $\theta|_{X_0}$ and the second since the variables occurring in $\theta(x)$ are elements
of $\VRan(\theta)$, and thus do not belong to $X$.
If $x\in X_0\setminus\Dom(\theta)$, then $\lambda(\theta|_{X_0}(x)) = \lambda(x) = x = \theta(x)$.
Finally, if $x\in X\setminus X_0$, then $\lambda(\theta|_{X_0}(x)) = \lambda(x) = \theta(x)$. 
Again, the first identity holds by the definition of $\theta|_{X_0}$ and the second by the definition of $\lambda$.
Summing up, we have shown that the following holds for every element $\theta$ of $\Smc'$: 
$\theta|_{X_0}$ is an $E$-unifier of $\Gamma$ and $\theta|_{X_0} \leq_E^{X} \theta$.
Since $\Smc'$ is a minimal complete set of $E$-unifiers of $\Gamma$ w.r.t.\ $\leq_E^{X}$, its elements are minimal w.r.t.\ $\leq_E^{X}$,
which yields $\theta|_{X_0} \sim_E^{X} \theta$. Consequently, we know that the set
$\Smc'|_{X_0} := \{\theta|_{X_0} \mid \theta\in \Smc'\}$ is also a minimal complete set of $E$-unifiers of $\Gamma$ w.r.t.~$\leq_E^{X}$.

We claim that the same is true w.r.t.\ the smaller set of variables $X_0$, i.e., 
that $\Smc'|_{X_0}$ is also minimal and complete w.r.t.\ $\leq_E^{X_0}$.
Completeness trivially follows from the fact that ${\leq_E^{X}}\subseteq{\leq_E^{X_0}}$.
To prove minimality, assume that $\theta$ and $\tau$ are two distinct elements of $\Smc'|_{X_0}$. Then these
two substitutions are not comparable w.r.t.\ $\leq_E^{X}$. Assume that they are comparable w.r.t.\ $\leq_E^{X_0}$,
i.e., there is a substitution $\lambda$ such that $\lambda(\theta(x)) \id_E \tau(x)$ holds for all $x\in X_0$.
By our construction of $\Smc'|_{X_0}$, we know that $\Dom(\theta)\subseteq X_0$, $\Dom(\tau)\subseteq X_0$, and $\VRan(\theta)\cap X = \emptyset$.
If $x$ is a variable in $X\setminus X_0$, then $\lambda(\theta(x)) = \lambda(x)$ and $\tau(x) = x$.
Thus, if we modify $\lambda$ to $\lambda'$ such that $\lambda'(x) = x$ holds  for all $x\in X\setminus X_0$, then 
$\lambda'(\theta(x)) = \tau(x)$ holds for all $x\in X\setminus X_0$. This modification has no effect on the variables $x\in X_0$.
In fact, let $x$ be such a variable. If $x\in \Dom(\theta)$, then $\theta(x)$ does not contain any variable from $X$, and
thus  $\lambda'(\theta(x)) = \lambda(\theta(x)) \id_E \tau(x)$ holds. If $x\notin \Dom(\theta)$, then $\theta(x) = x$, and thus again
$\lambda'(\theta(x)) = \lambda'(x) = \lambda(x) = \lambda(\theta(x)) \id_E \tau(x)$ since $\lambda'$ coincides with $\lambda$ on the variables in $X_0$.
Summing up, we have shown that the assumption $\theta \leq_E^{X_0} \tau$ implies $\theta \leq_E^{X} \tau$, which contradicts the
fact that $\Smc'|_{X_0}$ is minimal w.r.t. $\leq_E^{X}$. Consequently, $\Smc'|_{X_0}$ is a minimal complete set w.r.t.\ $\leq_E^{X_0}$,
and this set has the same cardinality as the minimal complete set $\Smc$ of $E$-unifiers of $\Gamma$ w.r.t.\ $\leq_E^{X}$ we have started with.

Conversely, let $\Smc$ be a minimal complete set of $E$-unifiers of $\Gamma$ w.r.t.\ $\leq_E^{X_0}$.
By applying the same construction as in the proof of the other direction, we can assume without loss of generality
that every unifier $\theta\in \Smc$ satisfies $\Dom(\theta)\subseteq X_0$ and $\VRan(\theta)\cap X = \emptyset$.
In fact, by applying this construction to a minimal complete set $\Smc'$ of $E$-unifiers of $\Gamma$ w.r.t.\ $\leq_E^{X_0}$,
we obtain a new set $\Smc$ where every element $\theta'$ of the original set $\Smc'$ is replaced with an element $\theta$ that satisfies the above 
conditions and is $\sim_E^X$-equivalent to $\theta'$, and thus also $\sim_E^{X_0}$-equivalent. Consequently, this new set $\Smc$ is also
a minimal complete set of $E$-unifiers of $\Gamma$ w.r.t.\ $\leq_E^{X_0}$.

We claim that $\Smc$ is also minimal and complete w.r.t.\ the larger set of variables $X$. 
Minimality trivially follows from the fact that ${\leq_E^{X}}\subseteq{\leq_E^{X_0}}$.
To prove completeness, assumed that $\tau$ is an $E$-unifier of $\Gamma$. Completeness of $\Smc$ w.r.t.\ $\leq_E^{X_0}$
yields an element $\theta$ of $\Smc$  such that $\theta \leq_E^{X_0} \tau$, i.e., there is a substitution $\lambda$ such that
$\lambda\theta(x) \id_E \tau(x)$ holds for all $x\in X_0$. 
Let $x$ be a variable in $X \setminus X_0$. Then $\lambda\theta(x) = \lambda(x)$. 
Thus, if we modify $\lambda$ to $\lambda'$ such that $\lambda'(x) = \tau(x)$ holds  for all $x\in X\setminus X_0$, then
$\lambda'(\theta(x)) = \tau(x)$ holds for all $x\in X\setminus X_0$.
The claim that this modification has no effect for the variables $x\in X_0$ can be shown as in the proof of minimality of 
$\Smc'|_{X_0}$  w.r.t.\ $\leq_E^{X_0}$ above. This proves that $\lambda'\theta(x) \id_E \tau(x)$ holds for all $x\in X$,
and thus $\Smc$ is also complete w.r.t.\ $\leq_E^{X}$.
\end{proof}

The following theorem is an immediate consequence of this lemma.

\begin{thm}\label{infinite:unobserved:thm}
Let $E$ be an equational theory, $\Gamma$ an $E$-unification problem, and $X\subseteq V$ a set of variables such that
$\Var(\Gamma)\subseteq X$ and $V\setminus X$ is infinite. Then the restricted unification type of $\Gamma$ coincides
with the unification type of $\Gamma$ w.r.t.\ $\leq_E^{X}$.
\end{thm}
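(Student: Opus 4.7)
The plan is to derive the theorem directly from \autoref{types:same:lem} by setting $X_0 := \Var(\Gamma)$. Both hypotheses of the proposition are satisfied under the hypotheses of the theorem: the inclusion $X_0 \subseteq X$ is given and $V \setminus X$ is assumed to be infinite. Hence the proposition applies in both directions, giving a bijection on cardinalities between minimal complete sets of $E$-unifiers of $\Gamma$ w.r.t.\ $\leq_E^{X}$ and w.r.t.\ $\leq_E^{X_0}$.

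Next I would recall that, for any preorder $\leq_E^Y$ with $\Var(\Gamma) \subseteq Y \subseteq V$, the unification type of $\Gamma$ w.r.t.\ $\leq_E^Y$ is completely determined by two pieces of data: (i) whether $\Gamma$ admits a minimal complete set of $E$-unifiers w.r.t.\ $\leq_E^Y$, and (ii) if so, the cardinality of such a set. Indeed, by \autoref{th:min}, any two minimal complete sets have the same cardinality, namely the cardinality of the set of $\preceq_E^Y$-minimal elements of $[\Umc_E(\Gamma)]_E^Y$, and this cardinality places the type in exactly one of the four classes \emph{unitary}, \emph{finitary}, \emph{infinitary}, \emph{zero}.

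Combining these two observations finishes the proof. By \autoref{types:same:lem}, $\Gamma$ has a minimal complete set of $E$-unifiers w.r.t.\ $\leq_E^{X_0}$ if and only if it has one w.r.t.\ $\leq_E^{X}$, and in that case the cardinalities coincide. Therefore either both types are zero, or neither is, and in the latter case the two types fall into the same class (unitary, finitary, or infinitary) according to the common cardinality. Hence the restricted unification type of $\Gamma$, which is the type w.r.t.\ $\leq_E^{X_0}$, equals the unification type of $\Gamma$ w.r.t.\ $\leq_E^{X}$. There is no real obstacle here, since all the technical work is carried out in \autoref{types:same:lem}; the theorem is essentially a repackaging of that lemma in the language of unification types.
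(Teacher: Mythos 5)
Your proposal is correct and follows exactly the paper's route: the paper derives \autoref{infinite:unobserved:thm} as an immediate consequence of \autoref{types:same:lem}, and your argument simply spells out the (routine) step that existence and common cardinality of minimal complete sets, via \autoref{th:min}, determine the unification type. Nothing is missing.
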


Note that the condition of the theorem is in particular satisfied if $X$ is a finite superset of $\Var(\Gamma)$.
However, it is clearly not satisfied for $X = V$, which corresponds to the unrestricted instantiation preorder setting. 
We will see in the  next two sections that in this case the unification type can indeed depend on whether the restricted or the unrestricted
instantiation preorder is used. However, there is a special case where this cannot happen.

\begin{lem}\label{stays:same:lem}
	Let $E$ be an equational theory, $\Gamma$ an $E$-unification problem, and \Smc a set of $E$-unifiers of $\Gamma$ such that
	$\VRan(\sigma)\cup\Dom(\sigma) \subseteq \Var(\Gamma)$ holds for all $\sigma\in\Smc$. 
	Then $\Smc$ is a minimal complete set of $E$-unifiers of $\Gamma$ w.r.t.\ $\leq_E^{V}$
	iff it is a minimal complete set of $E$-unifiers of $\Gamma$ w.r.t.\ $\leq_E^{\Var(\Gamma)}$.
\end{lem}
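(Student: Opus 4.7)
Set $X_0 := \Var(\Gamma)$. Since ${\leq_E^V} \subseteq {\leq_E^{X_0}}$ holds trivially, two of the four required implications are immediate: completeness w.r.t.\ $\leq_E^V$ yields completeness w.r.t.\ $\leq_E^{X_0}$, and conversely minimality (i.e., incomparability of distinct elements) w.r.t.\ $\leq_E^{X_0}$ yields minimality w.r.t.\ $\leq_E^V$. The substance of the proof is therefore to show the other two implications, namely that $X_0$-completeness lifts to $V$-completeness and that $V$-minimality lifts (via its contrapositive) to $X_0$-minimality. Both will follow from a single lifting step that upgrades an instantiation witnessed on $X_0$ to one witnessed on all of $V$.

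The plan is the following unified construction. Suppose $\sigma \in \Smc$, $\tau$ is an arbitrary substitution, and $\lambda$ is a substitution with $\lambda\sigma \id_E^{X_0} \tau$. I define a new substitution $\lambda'$ by setting $\lambda'(x) := \lambda(x)$ for $x \in X_0$ and $\lambda'(x) := \tau(x)$ for $x \in V \setminus X_0$, and claim that then $\lambda'\sigma \id_E \tau$ holds on all of $V$. Note that $\lambda'$ really is a substitution since it differs from the identity only on variables in $\Dom(\lambda) \cup \Dom(\tau)$, which is finite.

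The verification splits into two cases. If $x \in X_0$, the hypothesis $\Dom(\sigma) \cup \VRan(\sigma) \subseteq X_0$ ensures that every variable occurring in $\sigma(x)$ lies in $X_0$; hence $\lambda'$ and $\lambda$ agree on all such variables, giving $\lambda'(\sigma(x)) = \lambda(\sigma(x)) \id_E \tau(x)$. If $x \in V \setminus X_0$, then $x \notin \Dom(\sigma)$, so $\sigma(x) = x$ and thus $\lambda'\sigma(x) = \lambda'(x) = \tau(x)$. Applying this lifting to an arbitrary $E$-unifier $\tau$ of $\Gamma$ supplies the missing completeness direction; applying it to a pair of distinct elements $\sigma, \tau \in \Smc$ (which by hypothesis additionally satisfy $\Dom(\tau) \cup \VRan(\tau) \subseteq X_0$, so the construction is well-defined) shows that any $X_0$-instantiation between elements of $\Smc$ is already a $V$-instantiation, yielding the missing minimality direction.

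I do not foresee a genuine obstacle. The only point that requires any care is ensuring that $\lambda'$ has finite domain when $\tau$ is an arbitrary $E$-unifier whose domain may extend beyond $X_0$; this is handled directly by the finiteness of $\Dom(\tau)$ noted above.
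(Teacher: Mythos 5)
Your proposal is correct and follows essentially the same route as the paper: both arguments note that the two ``easy'' implications follow from ${\leq_E^V}\subseteq{\leq_E^{\Var(\Gamma)}}$, and both establish the remaining two by modifying the witnessing substitution $\lambda$ outside $\Var(\Gamma)$, using $\Dom(\sigma)\cup\VRan(\sigma)\subseteq\Var(\Gamma)$ to ensure the modification does not disturb the identities on $\Var(\Gamma)$. Your single unified lifting (setting $\lambda'$ to $\tau$ outside $\Var(\Gamma)$) subsumes the paper's two slightly different modifications, since for $\tau\in\Smc$ one has $\tau(x)=x$ off $\Var(\Gamma)$ anyway; this is a mild streamlining, not a different method.
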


\begin{proof}
	Let $\Smc$ be a minimal complete set of $E$-unifiers of $\Gamma$ w.r.t.\ $\leq_E^{V}$.
	Since ${\leq_E^{V}}\subseteq{\leq_E^{\Var(\Gamma)}}$, this set is also complete w.r.t.\ $\leq_E^{\Var(\Gamma)}$.
	To show minimality w.r.t.\ $\leq_E^{\Var(\Gamma)}$, 
	assume to the contrary that $\sigma,\theta$ are two distinct elements of \Smc such that $\sigma\leq_E^{\Var(\Gamma)}\theta$,
	i.e., there is a substitution $\lambda$ such that $\lambda\sigma(x) \id_E \theta(x)$ holds for all $x\in\Var(\Gamma)$.
	We modify $\lambda$ to $\lambda'$ by setting $\lambda'(x) = x$ for all variables $x\in V\setminus\Var(\Gamma)$.
	For $x\in\Var(\Gamma)$, we know that $\sigma(x)$ contains only variables from $\VRan(\sigma)\subseteq\Var(\Gamma)$ if $x\in \Dom(\sigma)$ or
	$\sigma(x) = x \in \Var(\Gamma)$. Since $\lambda$ and $\lambda'$ coincide on $\Var(\Gamma)$, this yields
	$\lambda'\sigma(x) = \lambda\sigma(x) \id_E \theta(x)$. 
	For $x\in V\setminus\Var(\Gamma)$, this variable does not belong to any of the sets $\Dom(\sigma)$, $\Dom(\theta)$, and $\Dom(\lambda')$,
	and thus $\lambda'\sigma(x) = \lambda'(x) = x = \theta(x)$. Summing up, we have shown that $\sigma\leq_E^{V}\theta$, which contradicts
	our assumption that $\Smc$ is minimal w.r.t.\ $\leq_E^{V}$.
	
	Conversely, assume that $\Smc$ is a minimal complete set of $E$-unifiers of $\Gamma$ w.r.t.\ $\leq_E^{\Var(\Gamma)}$.
	Since ${\leq_E^{V}}\subseteq{\leq_E^{\Var(\Gamma)}}$, minimality also holds w.r.t.\ $\leq_E^{V}$.
	To show completeness w.r.t.\ $\leq_E^{V}$, assume that $\theta$ is an $E$-unifier of $\Gamma$. Completeness of \Smc w.r.t.\ $\leq_E^{\Var(\Gamma)}$
	yields a substitution $\sigma\in \Smc$ such that $\sigma\leq_E^{\Var(\Gamma)}\theta$. Similarly to the first part of the proof, we can show that
	this also implies $\sigma\leq_E^{V}\theta$. The difference is that now $\theta$ is an arbitrary unifier, and thus
	$\VRan(\theta)\cup\Dom(\theta) \subseteq \Var(\Gamma)$ need not hold.
	Let $\lambda$ be such that $\lambda\sigma(x) \id_E \theta(x)$ holds for all $x\in\Var(\Gamma)$.
	We modify $\lambda$ to $\lambda'$ by setting $\lambda'(x) = \theta(x)$ for all variables $x\in V\setminus\Var(\Gamma)$.
	For $x\in\Var(\Gamma)$, we obtain $\lambda'\sigma(x) = \lambda\sigma(x) \id_E \theta(x)$ as in the first part of the proof.
	For $x\in V\setminus\Var(\Gamma)$, this variable does not belong to $\Dom(\sigma)$, and thus
	$\lambda'\sigma(x) = \lambda'(x) = \theta(x)$.
\end{proof}

Note that the proof of completeness in each direction did not make use of the minimality assumption. Thus, it also shows the following lemma.

\begin{lem}\label{stays:same:lem2}
	Let $E$ be an equational theory, $\Gamma$ an $E$-unification problem, and \Smc a set of $E$-unifiers of $\Gamma$ such that
	$\VRan(\sigma)\cup\Dom(\sigma) \subseteq \Var(\Gamma)$ holds for all $\sigma\in\Smc$. 
	Then $\Smc$ is a complete set of $E$-unifiers of $\Gamma$ w.r.t.\ $\leq_E^{V}$
	iff it is a complete set of $E$-unifiers of $\Gamma$ w.r.t.\ $\leq_E^{\Var(\Gamma)}$.
\end{lem}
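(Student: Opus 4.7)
The plan is to extract the argument directly from the proof of the preceding \autoref{stays:same:lem}, where the note immediately above already signals that minimality is never invoked when establishing completeness in either direction. So I would not redo the work but just identify the two completeness arguments and check that they are self-contained.

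For the ``only if'' direction, the plan is to observe that $\leq_E^V\ \subseteq\ \leq_E^{\Var(\Gamma)}$, so any set that is complete w.r.t.\ the finer preorder $\leq_E^V$ is automatically complete w.r.t.\ the coarser preorder $\leq_E^{\Var(\Gamma)}$. This is a one-line argument that does not touch the hypothesis on $\VRan(\sigma)\cup\Dom(\sigma)$.

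For the ``if'' direction, the plan is to copy the completeness argument from the second half of the proof of \autoref{stays:same:lem}. Given an arbitrary $E$-unifier $\theta$ of $\Gamma$, completeness of \Smc w.r.t.\ $\leq_E^{\Var(\Gamma)}$ yields $\sigma\in\Smc$ and $\lambda$ with $\lambda\sigma(x)\id_E\theta(x)$ for all $x\in\Var(\Gamma)$. Define $\lambda'$ to agree with $\lambda$ on $\Var(\Gamma)$ and to send every $x\in V\setminus\Var(\Gamma)$ to $\theta(x)$. For $x\in \Var(\Gamma)$, the condition $\VRan(\sigma)\cup\Dom(\sigma)\subseteq\Var(\Gamma)$ guarantees that $\sigma(x)$ is either $x$ itself or a term whose variables all lie in $\Var(\Gamma)$, so $\lambda'\sigma(x)=\lambda\sigma(x)\id_E\theta(x)$. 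For $x\in V\setminus\Var(\Gamma)$, the same condition gives $x\notin\Dom(\sigma)$, hence $\lambda'\sigma(x)=\lambda'(x)=\theta(x)$. Therefore $\sigma\leq_E^V\theta$, proving completeness w.r.t.\ $\leq_E^V$.

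There is really no obstacle here, since the lemma is just the ``completeness only'' residue of \autoref{stays:same:lem}; the only thing to be careful about is to state explicitly that the minimality assumption was never used in those two completeness arguments, so dropping it causes no damage.
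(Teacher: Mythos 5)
Your proposal is correct and matches the paper's own argument: the paper proves this lemma precisely by observing that the two completeness arguments in the proof of \autoref{stays:same:lem} never invoke minimality, with the ``only if'' direction following from ${\leq_E^{V}}\subseteq{\leq_E^{\Var(\Gamma)}}$ and the ``if'' direction using exactly the modification $\lambda'(x):=\theta(x)$ for $x\in V\setminus\Var(\Gamma)$ that you describe. No gaps.
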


The following theorem is an easy consequence of these two lemmas.

\begin{thm}\label{stays:same:thm}
	Let $E$ be an equational theory, $\Gamma$ an $E$-unification problem, and \Smc a set of $E$-unifiers of $\Gamma$ such that
	$\VRan(\sigma)\cup\Dom(\sigma) \subseteq \Var(\Gamma)$ holds for all $\sigma\in\Smc$ and \Smc is complete w.r.t.\ the restricted or
	the unrestricted instantiation preorder. Then the restricted and the unrestricted $E$-unification types of $\Gamma$ coincide.
\end{thm}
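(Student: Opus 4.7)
The plan is to derive the theorem cleanly from Lemmas~\ref{stays:same:lem} and~\ref{stays:same:lem2}, with Theorem~\ref{th:min} supplying the needed order-theoretic characterization of minimal complete sets. First, I would apply Lemma~\ref{stays:same:lem2} to the given set $\Smc$: regardless of which of the two preorders $\Smc$ is assumed to be complete for, completeness transfers to the other preorder immediately. Thus we may regard $\Smc$ as complete w.r.t.\ both $\leq_E^{V}$ and $\leq_E^{\Var(\Gamma)}$, and the task reduces to showing the following: $\Gamma$ has a minimal complete set of $E$-unifiers of cardinality $\kappa$ w.r.t.\ one preorder iff it has one of cardinality $\kappa$ w.r.t.\ the other. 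This coincidence, across all (finite or infinite) cardinalities as well as the case of non-existence, yields the equality of the four possible Siekmann types, with type zero corresponding precisely to the non-existence of any minimal complete set.

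For the forward direction, suppose $\Smc_u$ is a minimal complete set w.r.t.\ $\leq_E^{V}$. By Theorem~\ref{th:min}, the classes $\{[\sigma]_E^{V} : \sigma\in \Smc_u\}$ are exactly the $\preceq_E^{V}$-minimal elements of $[\Umc_E(\Gamma)]_E^{V}$. For each $\sigma\in \Smc_u$, completeness of $\Smc$ w.r.t.\ $\leq_E^{V}$ produces some $\tau_\sigma\in \Smc$ with $\tau_\sigma \leq_E^{V} \sigma$; minimality of $[\sigma]_E^{V}$ then forces $\tau_\sigma \sim_E^{V} \sigma$, so $\tau_\sigma$ represents the same class as $\sigma$. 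Since the elements of $\Smc_u$ lie in pairwise distinct minimal classes, the $\tau_\sigma$ are pairwise distinct, and $\Smc^{\star} := \{\tau_\sigma : \sigma\in \Smc_u\}\subseteq \Smc$ is a set of representatives for these minimal classes. The converse part of Theorem~\ref{th:min} then yields that $\Smc^{\star}$ is itself a minimal complete set w.r.t.\ $\leq_E^{V}$, of cardinality $|\Smc_u|$. Because $\Smc^{\star}\subseteq \Smc$, every element of $\Smc^{\star}$ inherits the containment condition $\VRan(\tau)\cup\Dom(\tau)\subseteq\Var(\Gamma)$, so Lemma~\ref{stays:same:lem} upgrades $\Smc^{\star}$ to a minimal complete set w.r.t.\ $\leq_E^{\Var(\Gamma)}$ of the same cardinality.

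The reverse direction is entirely symmetric: starting from a minimal complete set $\Smc_r$ w.r.t.\ $\leq_E^{\Var(\Gamma)}$, the analogous extraction (using completeness of $\Smc$ w.r.t.\ $\leq_E^{\Var(\Gamma)}$) yields a subset $\Smc^{\star\star}\subseteq \Smc$ that is a minimal complete set w.r.t.\ $\leq_E^{\Var(\Gamma)}$ with $|\Smc^{\star\star}|=|\Smc_r|$, to which Lemma~\ref{stays:same:lem} applies in the opposite direction. The main step, and the only one requiring any real care, is the extraction of such a representing subfamily from $\Smc$: this is what secures the containment condition needed by Lemma~\ref{stays:same:lem}, and it hinges on Theorem~\ref{th:min} together with the observation that representatives of distinct minimal equivalence classes cannot coincide.
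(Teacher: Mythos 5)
Your proposal is correct and follows essentially the same route as the paper: first transfer completeness of $\Smc$ via Lemma~\ref{stays:same:lem2}, then use Theorem~\ref{th:min} to extract from $\Smc$ a set of representatives of the minimal equivalence classes (which is a minimal complete set contained in $\Smc$, hence satisfies the variable containment condition), and finally apply Lemma~\ref{stays:same:lem} to carry minimal completeness over to the other preorder; the type-zero case follows by elimination. Your write-up merely makes the extraction step more explicit than the paper does.
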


\begin{proof} 
	By \autoref{stays:same:lem2}, the assumption that \Smc is complete w.r.t.\ one of the two instantiation preorders implies that it is
	also complete w.r.t.\ the other.
	
	If $\Gamma$ has a minimal complete set of $E$-unifiers w.r.t.\ $\leq_E^{\Var(\Gamma)}$,
	then it also has such a minimal complete set that is contained in \Smc. In fact, for \Smc to be complete, it must contain a representative of 
	each minimal $\sim_E^{\Var(\Gamma)}$-equivalence class, and thus we obtain a minimal complete set contained in \Smc by taking one representative in  \Smc
	for each class. By \autoref{stays:same:lem}, this minimal complete set w.r.t.\ $\leq_E^{\Var(\Gamma)}$ is also minimal complete w.r.t.\ $\leq_E^{V}$.
	Thus, types unitary, finitary, and infinitary transfer from the restricted instantiation preorder to the unrestricted one. In the same way, we can also show that 
	types unitary, finitary, and infinitary transfer from the unrestricted instantiation preorder to the restricted one.
	
	Since type zero is the only other type, it thus follows immediately that it also transfers between the two types.
\end{proof}

Examples of equational theories where the conditions of this theorem are always satisfied are the empty theory (unitary),
the theory $\Commu$ (finitary), the theory $\Assoc$ (infinitary), and the theory $\Idem$ axiomatizing
idempotency of a binary function symbol (finitary).

\begin{cor}\label{stays:same:cor}
	For the theories $\emptyset$, $\Commu$, $\Assoc$, and $\Idem$, the restricted and the unrestricted unification types coincide.
\end{cor}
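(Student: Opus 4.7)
The plan is to apply \autoref{stays:same:thm} separately to each of the four theories. For this, I would show that, for every solvable elementary $E$-unification problem $\Gamma$ (with $E\in\{\emptyset,\Commu,\Assoc,\Idem\}$), there exists a complete set $\Smc$ of $E$-unifiers of $\Gamma$ satisfying $\VRan(\sigma)\cup\Dom(\sigma)\subseteq\Var(\Gamma)$ for every $\sigma\in\Smc$. Once this is established, \autoref{stays:same:thm} directly yields coincidence of the restricted and unrestricted $E$-unification types of $\Gamma$, and the statement about theories then follows by taking worst-case types over all solvable $\Gamma$.

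For $\emptyset$ this is immediate: Robinson's algorithm produces a syntactic mgu $\sigma$ with $\Dom(\sigma)\cup\VRan(\sigma)\subseteq\Var(\Gamma)$, and $\{\sigma\}$ is a complete set. For $\Commu$, I would reduce $\Gamma$ to a finite disjunction of syntactic unification problems, one for each choice of how to orient the arguments of each occurrence of $f$ in $\Gamma$; the syntactic mgus of the solvable disjuncts involve only variables from $\Var(\Gamma)$ and together form a (finite) complete set of $\Commu$-unifiers. For $\Assoc$, I would view $\Gamma$ as a system of word equations over $\Var(\Gamma)$ and, using a Levi-style case analysis on how the two sides of each equation can be factored, show that every $\Assoc$-unifier is an instance (w.r.t.\ $\leq_\Assoc^V$) of some unifier that only redistributes and duplicates the variables of $\Gamma$; the (possibly infinite) family of all such ``internal'' unifiers then constitutes the required complete set. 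For $\Idem$, a similar argument applies because the identity $f(t,t)\id t$ lets one collapse any repeated occurrences introduced by a unifier, so that fresh variables can always be folded back onto existing ones without losing completeness.

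The main obstacle is the $\Assoc$ case, where the minimal complete set is already known to be infinite: one has to argue not about a single mgu but about a whole family of variable-preserving unifiers whose elements collectively cover every $\Assoc$-unifier of $\Gamma$, relying on the Levi/Plotkin combinatorics of word equations. A secondary difficulty is to verify the same covering property for $\Idem$ without introducing auxiliary variables through the collapsing process. Once the required complete sets have been exhibited in all four cases, \autoref{stays:same:thm} delivers the desired coincidence of the restricted and unrestricted unification types.
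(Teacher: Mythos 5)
Your proposal is correct and follows essentially the same route as the paper: both reduce the corollary to \autoref{stays:same:thm} by exhibiting, for each of the four theories, a complete set of unifiers whose domains and variable ranges stay inside $\Var(\Gamma)$, obtained from the known restricted-preorder unification algorithms (Robinson's for $\emptyset$, argument-flipping/mutation for $\Commu$, Plotkin's word-equation enumeration for $\Assoc$). The only substantive difference is the $\Idem$ case, where the paper replaces your somewhat ad hoc ``folding back'' argument by invoking basic narrowing with the canonical rule $f(x,x)\rightarrow x$ together with Hullot's termination criterion, which yields a finite complete set built entirely from syntactic unification steps and is therefore automatically variable-conservative.
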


\begin{proof}
	For the first three theories $\emptyset$, $\Commu$, and $\Assoc$, applicability of Theorem~\ref{stays:same:thm} is an easy consequence of the known algorithms
	\cite{DBLP:journals/toplas/MartelliM82,DBLP:conf/eurosam/Siekmann79a,Plotkin72}
	computing (or enumerating, in the case of $\Assoc$) minimal complete sets of unifiers for these theories w.r.t.\ the restricted instantiation preorder.
	
	For idempotency $\Idem$, dedicated algorithms are described in~\cite{DBLP:conf/ijcai/KuhnerMRS77,DBLP:conf/ki/Herold87}. However, the fact that $\Idem$
	is finitary already follows from the fact that the rewrite system consisting of the single rule $f(x,x) \rightarrow x$ is clearly canonical (i.e.,
	confluent and terminating) and that this system satisfies Hullot's termination criterion for basic narrowing. Thus, basic narrowing enumerates a finite, complete
	set of $\Idem$-unifiers w.r.t.\ the restricted instantiation preorder (see Theorem~4, Proposition~1, and Example~1 in~\cite{DBLP:conf/cade/Hullot80}).
	This set satisfies the conditions formulated for the set \Smc in \autoref{stays:same:thm}.
	In fact, both the narrowing steps and the final unification step employ syntactic unification (i.e., unification modulo the theory $\emptyset$), and
	thus the computed substitutions use only variables occurring in the respective unification problem in their domains and variable ranges. Consequently,
	the same holds for their compositions, which are the elements of the complete set.
\end{proof} 

Another equational theory for which \autoref{stays:same:thm} turns out to be useful is the theory \ModK. In \cite{DBLP:journals/logcom/Jerabek15}, it was shown by Jer{\'{a}}bek that \ModK has unification type zero w.r.t.\ the \emph{unrestricted} instantiation
preorder. He mentions in the paper that his proof can easily be modified to show this result also for the restricted case, but does not give details.
However, for the unification problem he considers (unify $p\rightarrow \Box p$ with $\top$, where $p$ is a variable), he exhibits a complete set of unifiers (w.r.t.\ the unrestricted instantiation preorder) consisting of infinitely many unifiers whose domains and variable 
ranges are contained in $\{p\}$.  
He shows that this unification problem has type zero w.r.t.\ the unrestricted instantiation preorder. \autoref{stays:same:thm}
yields that this problem then also has type zero w.r.t.\ the restricted instantiation preorder. Thus, the theory is of type zero in both cases.

\begin{cor}\label{modal:k:result}
	For the equational theory axiomatizing equivalence in the modal logic \ModK, the restricted and the unrestricted unification types coincide,
        and is thus zero in both cases.
        
\end{cor}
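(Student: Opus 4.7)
The plan is to reduce the corollary directly to \autoref{stays:same:thm} using the concrete unification problem and complete set of unifiers exhibited by Jeřábek in~\cite{DBLP:journals/logcom/Jerabek15}. More precisely, take $\Gamma$ to be the unification problem that asks to unify $p\rightarrow\Box p$ with $\top$, so that $\Var(\Gamma)=\{p\}$. Jeřábek proves that the unrestricted unification type of $\Gamma$ is zero, and his argument proceeds by displaying a particular infinite complete set $\Smc$ of $\ModK$-unifiers of $\Gamma$ w.r.t.\ $\leq_\ModK^V$ and then showing that no minimal complete set exists w.r.t.\ this preorder.

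The crucial observation, already highlighted in the paragraph preceding the corollary, is that every $\sigma\in\Smc$ satisfies $\Dom(\sigma)\cup\VRan(\sigma)\subseteq\{p\}=\Var(\Gamma)$. Thus the hypothesis of \autoref{stays:same:thm} is met: we have a set $\Smc$ of $\ModK$-unifiers of $\Gamma$ that is complete w.r.t.\ the unrestricted instantiation preorder and whose domains and variable ranges sit inside $\Var(\Gamma)$. The theorem then concludes that the restricted and the unrestricted $\ModK$-unification types of $\Gamma$ coincide. Since the unrestricted type of $\Gamma$ is zero by Jeřábek's theorem, so is its restricted type.

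Finally, to obtain the statement about the \emph{theory} \ModK rather than just about this particular problem, recall that the unification type of a theory is the worst type among its solvable unification problems. Because zero is the greatest element of the ordering on types, the existence of a single problem $\Gamma$ of type zero forces the theory to have type zero, both in the restricted and in the unrestricted setting.

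There is no real obstacle here: all the difficult work is done by Jeřábek's construction of $\Smc$ and his proof that $\Gamma$ has no minimal complete set w.r.t.\ $\leq_\ModK^V$. The only thing to verify on our side is the syntactic fact that Jeřábek's unifiers use no variables other than $p$, which is visible from his explicit description of them, so that \autoref{stays:same:thm} applies and transfers the "no minimal complete set" property from the unrestricted to the restricted setting.
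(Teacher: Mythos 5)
Your proposal is correct and follows essentially the same route as the paper: it invokes Je\v{r}\'abek's unification problem $p\rightarrow\Box p\id^?\top$, observes that his complete set of unifiers has domains and variable ranges contained in $\{p\}=\Var(\Gamma)$, and applies \autoref{stays:same:thm} to transfer type zero from the unrestricted to the restricted setting. The concluding remark that a single type-zero problem forces the theory to have type zero is exactly how the paper closes the argument as well.
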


\section{Lower bounds}\label{lbounds:sect}
In this section, we show the lower bound ``at least infinitary'' for the unrestricted unification type of the equational theories contained in the third column of \autoref{results:table}, with the exception of the theory \Assoc, whose unrestricted unification type has already been determined in the previous section (see \autoref{stays:same:cor}). Also note that the results of the first two columns of \autoref{results:table} are stated in \autoref{stays:same:cor} and the one in the last column in \autoref{modal:k:result}.

For theories whose restricted unification type is at least infinitary, the contrapositive of \autoref{typ:no:worse:thm} yields this lower bound also for the unrestricted unification type. This observation applies to the theories \AI, \D, \EL, \FLzero, and \ACUh. The theories \AI, \EL, \FLzero, and \ACUh are of type zero for elementary unification w.r.t.\ the restricted instantiation preorder. For the theoy \AI of an associative and idempotent function symbol, type zero was shown in~\cite{DBLP:journals/jar/Schmidt-Schauss86} for unification with constants and in~\cite{DBLP:journals/jar/Baader86} for elementary unification.  Type zero for \FLzero and \EL were respectively proved in \cite{DBLP:journals/jsc/BaaderN01,DBLP:journals/jsc/Baader89} and \cite{DBLP:conf/rta/BaaderM09,DBLP:journals/corr/abs-1006-2289}. For \ACUh, type zero w.r.t.\ the restricted instantiation preorder was established in~\cite{DBLP:conf/ctcs/Baader89}.
For both-sided distributivity \D, type infinitary for unification with constants was shown in~\cite{Szabo83}. For elementary unification, the lower bound
``at least infinitary'' was established in Section~5.4 of~\cite{DBLP:conf/lics/KirchnerK90}.
\begin{cor}\label{EL:D:cor}
	The unrestricted unification type of \AI, \D, \EL, \FLzero, and \ACUh for elementary unification is at least infinitary.
\end{cor}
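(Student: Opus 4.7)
The plan is to derive the corollary immediately from the contrapositive of \autoref{typ:no:worse:thm} together with known results about the restricted unification type of each theory. \autoref{typ:no:worse:thm} says that unrestricted type unitary (respectively finitary) forces restricted type unitary (respectively unitary or finitary). Contrapositively, whenever the restricted unification type is \emph{not} unitary and not finitary, i.e., is infinitary or zero, the unrestricted unification type cannot be unitary or finitary either, hence must be at least infinitary.

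The next step is therefore to pair each of the five theories with a citation showing that its restricted elementary unification type is either infinitary or zero. Concretely, for \AI one invokes the type zero result for elementary unification in~\cite{DBLP:journals/jar/Baader86}; for \EL the type zero results of~\cite{DBLP:conf/rta/BaaderM09,DBLP:journals/corr/abs-1006-2289}; for \FLzero the type zero result of~\cite{DBLP:journals/jsc/BaaderN01,DBLP:journals/jsc/Baader89}; for \ACUh the type zero result of~\cite{DBLP:conf/ctcs/Baader89}; and for \D the lower bound ``at least infinitary'' for elementary unification established in Section~5.4 of~\cite{DBLP:conf/lics/KirchnerK90}.

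Applying the contrapositive of \autoref{typ:no:worse:thm} to each of these cases finishes the argument. There is essentially no obstacle: all of the content is already carried by the paragraph introducing the corollary, and the only formal ingredient beyond the citations is the straightforward logical transposition of \autoref{typ:no:worse:thm}. The proof can be stated in one or two sentences.
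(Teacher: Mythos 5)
Your proposal is correct and matches the paper's own argument exactly: the paper likewise derives the corollary from the contrapositive of \autoref{typ:no:worse:thm} combined with the cited restricted-type results (type zero for \AI, \EL, \FLzero, and \ACUh, and at least infinitary for \D). The citations you name are the same ones the paper uses, so there is nothing to add.
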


We continue by considering the theories \AC, \ACI, \ACU, and \ACUI. For elementary unification, \ACU and \ACUI are unitary w.r.t.\ the restricted instantiation preorder, whereas \AC and \ACI are finitary~\cite{DBLP:books/el/RV01/BaaderS01}. 
In contrast to the theories in \autoref{EL:D:cor}, where we can use \autoref{typ:no:worse:thm} to deduce that they are not unitary or finitary  w.r.t.\ the unrestricted instantiation 
preorder from the fact that they are of type at least infinitary for the restricted instantiation preorder, we must prove this directly for \ACUI, \ACU, \AC, and \ACI. To this end, following \cite{DBLP:conf/unif/BaaderL16}, we apply a more general result that holds for regular equational theories satisfying certain \emph{additional restrictions}.

Recall that \AC, \ACI, \ACU, and  \ACUI are regular, and thus the following result from~\cite{DBLP:conf/unif/BaaderL16} applies to them.

\begin{lem}[\cite{DBLP:conf/unif/BaaderL16}]
	\label{lem:uset}
	Let $E$ be a regular theory and $\Gamma=\{ s\id_E^? t \}$ an $E$-unification problem s.t.\ $\Var(s)\cap\Var(t) = \emptyset$.
	Then the set $\Cmc_E(\Gamma)$ consisting of all $E$-unifiers $\sigma$ of $\Gamma$ satisfying
	\[
	\forall y\in \VRan(\sigma). \exists x,x'\in V\text{ s.t. }
	x\ne x' \text{ and }
	y\in \Var(\sigma(x))\cap \Var(\sigma(x'))
	\]
	is complete w.r.t.\ $\le_E^V$.
\end{lem}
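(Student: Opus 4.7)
Given any $\theta\in\Umc_E(\Gamma)$, my plan is to construct $\sigma\in\Cmc_E(\Gamma)$ with $\sigma\leq_E^V\theta$. Let $X:=\Var(s)\cup\Var(t)$. The natural candidate is to restrict $\theta$ to $X$, so that $\Dom(\sigma)\subseteq X$; this confines $\VRan(\sigma)$ to the variables of $\theta(s)$ and $\theta(t)$, which by regularity coincide.

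Concretely, starting from $\sigma(x):=\theta(x)$ for $x\in X$ and $\sigma(x):=x$ otherwise, the $\Cmc$-condition falls out of regularity in a uniform way. For $y\in\VRan(\sigma)$ pick $x_0\in\Dom(\sigma)\subseteq X$ with $y\in\Var(\sigma(x_0))=\Var(\theta(x_0))$; without loss of generality $x_0\in\Var(s)$, so $y\in\Var(\theta(s))$. By \autoref{regular:implies:all:regular:lem}, $\Var(\theta(s))=\Var(\theta(t))$, so $y\in\Var(\theta(x'))$ for some $x'\in\Var(t)\subseteq X$. The disjointness $\Var(s)\cap\Var(t)=\emptyset$ forces $x_0\ne x'$, and since $\sigma(x')=\theta(x')$ we obtain $y\in\Var(\sigma(x_0))\cap\Var(\sigma(x'))$ with two distinct witnesses.

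The main technical obstacle is establishing $\sigma\leq_E^V\theta$, which requires a single substitution $\lambda$ with $\lambda\sigma(v)\id_E\theta(v)$ for every $v\in V$. The natural attempt $\lambda(v):=\theta(v)$ for $v\notin X$ and $\lambda(v):=v$ for $v\in X$ works everywhere except on variables $v\in\Var(\theta(x))$ (for some $x\in X\cap\Dom(\theta)$) that lie in $\Dom(\theta)\setminus X$: there $\lambda$ rewrites $v$ to $\theta(v)\ne v$, and the non-triviality of $E$ rules out $\theta(v)\id_E v$. I would address this by a preliminary variable renaming of $\theta$, applying a permutation $\pi$ of $V$ that sends the finite set $\VRan(\theta)$ to fresh variables disjoint from $\Dom(\theta)\cup X$ and working with $\pi\theta\sim_\emptyset^V\theta$ in place of $\theta$. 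The technically delicate part, which I expect to be the main challenge, is arranging the renaming together with the restriction-to-$X$ construction so that the substitution $\lambda$ can be defined consistently on every variable of $V$, in particular on the fresh renaming targets falling outside $X$, where $\lambda$ must simultaneously invert $\pi$ on the variables appearing inside $\sigma(x)$ for $x\in X$ and send $v\notin X$ to $\theta(v)$ without contradiction.
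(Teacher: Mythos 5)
The paper itself gives no proof of this lemma (it is imported from \cite{DBLP:conf/unif/BaaderL16}), so your argument has to stand on its own, and it does not. Your verification that the candidate lies in $\Cmc_E(\Gamma)$ (via \autoref{regular:implies:all:regular:lem} and $\Var(s)\cap\Var(t)=\emptyset$) is fine, but the step you yourself flag as ``the main technical obstacle'' is not a matter of delicate bookkeeping: the construction you chose cannot be repaired. Take $E=\Commu$, $\Gamma=\{x_1\id_\Commu^? x_2\}$ and the unifier $\theta=\{x_1\mapsto f(z,z),\ x_2\mapsto f(z,z),\ z\mapsto f(w,w),\ w\mapsto f(v,v)\}$. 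For $\sigma=\theta|_X=\{x_1\mapsto f(z,z),\ x_2\mapsto f(z,z)\}$, any $\lambda$ with $\lambda\sigma\id_E\theta$ must satisfy $\lambda(z)\id_E f(w,w)$ (from the position $z\notin\Dom(\sigma)$) and $f(\lambda(z),\lambda(z))\id_E f(z,z)$ (from the position $x_1$); by \autoref{regular:implies:all:regular:lem} the second forces $\Var(\lambda(z))=\{z\}$ while the first forces $\Var(\lambda(z))=\{w\}$, a contradiction. The renaming fix fails for the same structural reason: if you replace $z$ in $\sigma(x_1),\sigma(x_2)$ by a fresh $z'$, then $z'\notin\Dom(\theta)\cup\Dom(\sigma)$ forces $\lambda(z')\id_E\theta(z')=z'$, hence $\Var(\lambda(z'))=\{z'\}$ by regularity and non-triviality, so $\lambda$ can never map $z'$ back to $z$. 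A fresh variable in $\VRan(\sigma)$ that lies outside both domains can never be ``re-identified'' with a variable of $\theta$ by the witnessing substitution; this is exactly the phenomenon that distinguishes $\leq_E^V$ from $\leq_E^{\Var(\Gamma)}$ throughout the paper, so one cannot import the restriction-plus-renaming idiom from the proof of \autoref{types:same:lem}, which crucially needs $V\setminus X$ infinite.

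The gap is not only in the execution but in the strategy: for the $\theta$ above, \emph{no} substitution with domain contained in $X$ is $\leq_E^V$-below $\theta$ (any such $\lambda$ is forced to satisfy $\lambda(z)\id_E f(w,w)$ and the identity off $\{x_1,x_2,z,w\}$, after which no variable $u$ has $z\in\Var(\lambda(u))$, so $\lambda(\sigma(x_1))$ cannot be $\id_E$-equal to $f(z,z)$). A correct witness in $\Cmc_E(\Gamma)$ is instead $\sigma=\{x_1\mapsto f(w,w),\ x_2\mapsto f(w,w),\ w\mapsto f(v,v)\}$ with $\lambda=\{w\mapsto z,\ z\mapsto f(w,w)\}$: here the intermediate variable $z$ of $\theta$ is collapsed away by \emph{reusing} the variable $w$ already present in $\Dom(\theta)$, and $\Dom(\sigma)\not\subseteq X$. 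So the completeness proof requires a genuinely different construction (a collapsing/reuse argument on the variables of $\theta$ itself), not a restriction of $\theta$ to $\Var(\Gamma)$.
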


Together with \autoref{th:min}, this lemma yields the following result.

\begin{lem}\label{complete:lem}
	Let $E$ be a regular theory and $\Gamma=\{ s\id_E^? t \}$ an $E$-unification problem s.t.\ $\Var(s)\cap\Var(t) = \emptyset$.
	If $\Gamma$ has a minimal complete set of $E$-unifiers w.r.t.\ $\le_E^V$, then it has one that is contained in $\Cmc_E(\Gamma)$.
\end{lem}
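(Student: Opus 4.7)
The plan is to combine \autoref{lem:uset} with \autoref{th:min}. By \autoref{lem:uset}, the set $\Cmc_E(\Gamma)$ is complete w.r.t.\ $\le_E^V$. The hypothesis that $\Gamma$ has a minimal complete set of $E$-unifiers w.r.t.\ $\le_E^V$ implies, via \autoref{th:min}, that the set $M$ of $\preceq_E^V$-minimal elements of $[\Umc_E(\Gamma)]_E^V$ is itself complete in $[\Umc_E(\Gamma)]_E^V$. So by the converse part of \autoref{th:min}, \emph{any} choice of one representative per element of $M$ yields a minimal complete set. Hence it suffices to show that every element of $M$ has at least one representative in $\Cmc_E(\Gamma)$.

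This is the key step. Fix a $\preceq_E^V$-minimal equivalence class $[\sigma]_E^V \in M$. Since $\Cmc_E(\Gamma)$ is complete w.r.t.\ $\le_E^V$, there exists $\sigma'\in \Cmc_E(\Gamma)$ with $\sigma' \le_E^V \sigma$. But $\sigma'$ is in particular an $E$-unifier of $\Gamma$, so $[\sigma']_E^V$ belongs to $[\Umc_E(\Gamma)]_E^V$ and satisfies $[\sigma']_E^V \preceq_E^V [\sigma]_E^V$. Minimality of $[\sigma]_E^V$ in the partial order $\preceq_E^V$ then forces $[\sigma']_E^V = [\sigma]_E^V$, i.e., $\sigma' \sim_E^V \sigma$. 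Thus $\sigma' \in [\sigma]_E^V \cap \Cmc_E(\Gamma)$, as required.

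With this observation in hand, I would apply the axiom of choice to pick, for each $[\sigma]_E^V \in M$, an element $\sigma^* \in [\sigma]_E^V \cap \Cmc_E(\Gamma)$. The resulting set $\Smc^* := \{\sigma^* \mid [\sigma]_E^V \in M\}$ is contained in $\Cmc_E(\Gamma)$ by construction, and by the converse direction of \autoref{th:min} it is a minimal complete set of $E$-unifiers of $\Gamma$ w.r.t.\ $\le_E^V$. This proves the lemma.

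No real obstacle is anticipated: the statement is essentially a direct bridge between the completeness result of \autoref{lem:uset} and the order-theoretic characterization of minimal complete sets in \autoref{th:min}. The only subtle point is being precise that the comparison $\sigma' \le_E^V \sigma$ guaranteed by completeness of $\Cmc_E(\Gamma)$, combined with $\preceq_E^V$-minimality of $[\sigma]_E^V$, actually yields $\sigma'\sim_E^V\sigma$ rather than merely $\sigma'\le_E^V\sigma$; this uses that $\sigma'$ is itself a unifier, so both classes live in $[\Umc_E(\Gamma)]_E^V$ where $\preceq_E^V$ is defined.
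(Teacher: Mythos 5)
Your proof is correct and follows essentially the same route as the paper's: both use that the existence of a minimal complete set makes the set $M$ of $\preceq_E^V$-minimal classes complete (via \autoref{th:min}), that completeness of $\Cmc_E(\Gamma)$ (from \autoref{lem:uset}) forces each class in $M$ to have a representative in $\Cmc_E(\Gamma)$, and then select one such representative per class. You merely spell out in more detail the step the paper states tersely, namely why completeness of $\Cmc_E(\Gamma)$ together with minimality of a class yields a representative of that class inside $\Cmc_E(\Gamma)$.
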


\begin{proof}
	Since  $\Gamma$ has a minimal complete set of $E$-unifiers, the set $M$ of minimal elements of $[\Umc_E(\Gamma)]_E^V$ is complete.
	For $\Cmc_E(\Gamma)$ to be complete, it must contain for every equivalence class in $M$ at least one representative. Thus,
	by selecting for each class in $M$ one of its representatives in $\Cmc_E(\Gamma)$, we obtain a minimal complete set that is contained in $\Cmc_E(\Gamma)$.
\end{proof}

We are now ready to formulate the ``additional restrictions'' mentioned above.

\begin{defi}
	Given a regular equational theory $E$, we say that the $E$-unification problem $\Gamma$ is \emph{NUOF} 
	if the following conditions are satisfied:
	\begin{itemize}
		\item
		$\Gamma = \{ s\id_E^? t \}$ for terms $s, t$ satisfying $\Var(s) \cap\Var(t) = \emptyset$,
		\item
		there is a $\le_E^V$-minimal unifier $\sigma$ of $\Gamma$ that uses fresh variables, i.e.,
		$\VRan(\sigma)\setminus X\ne\emptyset$ where $X = \Var(s) \cup\Var(t)$, and
		\item
		this unifier $\sigma$ belongs to the set $\Cmc_E(\Gamma)$ defined in the formulation of Lemma~\ref{lem:uset}.
	\end{itemize}
\end{defi}

Intuitively, NUOF stands for ``not unitary or finitary,'' but we still need to show that this name is justified.
Given a NUOF $E$-unification problem $\Gamma$,
let $x_0\in \VR(\sigma)\setminus X$ and consider the following construction of substitutions:
\[
\sigma_z:=\sigma\tau_{x_0,z}\ \ \mbox{where $z\in {V}$ and $\tau_{x_0,z} := \{x_0\mapsto z, z\mapsto x_0\}$}.
\]
One can show that, under certain conditions on $z$,
such substitutions $\sigma_z$ are $\le_E^V$-minimal unifiers that are incomparable to each other w.r.t.\ $\le_E^V$.
By \autoref{th:min}, this implies that $\Gamma$ cannot have a finite minimal complete set of unifiers w.r.t.\ $\le_E^V$
since there are infinitely many variables $z$ satisfying these conditions.

\begin{lem}[\cite{DBLP:conf/unif/BaaderL16}]
	\label{NUOF:lem}
	Let $E$ be a regular equational theory $E$, $\Gamma$ a NUOF $E$-unification problem, and $X$ and $\sigma_z$ for $z\in V$ be defined as above.
	\begin{itemize}
		\item
		For each $z\in V\setminus X$, $\sigma_z$ is a minimal $E$-unifier of $\Gamma$ w.r.t.\ $\le_E^V$.
		\item
		For any two distinct variables $z,z'\in V\setminus (\Dom(\sigma)\cup\VR(\sigma))$,
		$\sigma_z$ and $\sigma_{z'}$ are incomparable w.r.t.\ $\le_E^V$.
	\end{itemize}
\end{lem}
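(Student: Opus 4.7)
I plan to handle the two bullets independently, exploiting at key moments the involutive nature of the transposition $\tau_{x_0,z}$. Under the composition convention of the paper, where $(\sigma\tau)(x) = \sigma(\tau(x))$, the involutive property $\tau_{x_0,z}\tau_{x_0,z} = \mathrm{id}$ translates to the cancellation identity $\sigma_z\tau_{x_0,z} = \sigma$, which will allow me to move freely between statements about $\sigma$ and statements about $\sigma_z$.

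For the first bullet, I would first check that $\sigma_z$ is an $E$-unifier of $\Gamma$. Since $x_0\in \VR(\sigma)\setminus X$ and $z\in V\setminus X$, both $x_0$ and $z$ lie outside $X = \Var(s)\cup\Var(t)$, so $\tau_{x_0,z}$ fixes $s$ and $t$, yielding $\sigma_z(s)=\sigma(s)\id_E \sigma(t)=\sigma_z(t)$. For minimality, I would take an arbitrary $E$-unifier $\theta$ of $\Gamma$ with $\theta\le_E^V\sigma_z$, witnessed by some $\lambda$ satisfying $\lambda\theta\id_E\sigma_z$. Post-composing with $\tau_{x_0,z}$ gives $\lambda(\theta\tau_{x_0,z})\id_E \sigma_z\tau_{x_0,z}=\sigma$, and $\theta\tau_{x_0,z}$ is again an $E$-unifier of $\Gamma$ by the same invariance argument. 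Minimality of $\sigma$ then yields $\sigma\sim_E^V \theta\tau_{x_0,z}$, and a further post-composition with $\tau_{x_0,z}$ transports this back to $\sigma_z\sim_E^V \theta$, showing minimality of $\sigma_z$.

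The second bullet is where the membership $\sigma\in\Cmc_E(\Gamma)$ and the regularity of $E$ enter crucially, and I expect this to be the main obstacle. Suppose for contradiction that there is $\lambda$ with $\lambda\sigma_z\id_E\sigma_{z'}$. The hypothesis $z,z'\notin \Dom(\sigma)\cup\VR(\sigma)$, together with $x_0\notin \Dom(\sigma)$ (which holds for a fresh range variable of a minimal unifier), lets me evaluate both sides on the variable $z$ directly: $\sigma_z(z)=\sigma(x_0)=x_0$ while $\sigma_{z'}(z)=\sigma(z)=z$. Thus $\lambda(x_0)\id_E z$, and non-triviality of $E$ forces $\lambda(x_0)=z$. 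On the other hand, the condition $\sigma\in\Cmc_E(\Gamma)$ applied to the range variable $x_0$ produces some $x\in\Dom(\sigma)$ with $x_0\in\Var(\sigma(x))$. For such an $x$ (necessarily distinct from $x_0,z,z'$) both $\sigma_z$ and $\sigma_{z'}$ coincide with $\sigma$, so the relation $\lambda\sigma_z\id_E\sigma_{z'}$ evaluated at $x$ reads $\lambda(\sigma(x))\id_E\sigma(x)$. By regularity (\autoref{regular:implies:all:regular:lem}) this forces $\Var(\lambda(\sigma(x)))=\Var(\sigma(x))$. But $\lambda(x_0)=z$ together with $x_0\in\Var(\sigma(x))$ gives $z\in\Var(\lambda(\sigma(x)))$, whereas $z\notin \VR(\sigma)$ gives $z\notin\Var(\sigma(x))$, producing the desired contradiction.
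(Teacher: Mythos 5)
The paper does not actually prove this lemma---it imports it from \cite{DBLP:conf/unif/BaaderL16}---so there is no in-paper proof to compare against; I therefore assess your argument on its own. Your proof of the first bullet is correct: the conjugation trick $\sigma_z\tau_{x_0,z}=\sigma$ transports minimality cleanly, and the observation that $\tau_{x_0,z}$ fixes $s$ and $t$ (so that both $\sigma_z$ and $\theta\tau_{x_0,z}$ remain unifiers) is exactly what is needed. The skeleton of the second bullet---evaluate the instantiation identity at $z$ and at a witness $x$ supplied by $\sigma\in\Cmc_E(\Gamma)$, then invoke \autoref{regular:implies:all:regular:lem}---is also the right idea.

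There are, however, two problems in the second bullet. The serious one is the parenthetical assumption $x_0\notin\Dom(\sigma)$, which you justify only by saying it ``holds for a fresh range variable of a minimal unifier.'' That justification is false: if $\sigma$ is minimal and $\pi$ is the permutation swapping a fresh range variable $w$ with a brand-new variable $w'$, then $\pi\sigma$ is $\sim_E^V$-equivalent to $\sigma$, hence still minimal, yet has $w'\in\Dom(\pi\sigma)\cap(\VRan(\pi\sigma)\setminus X)$. So minimality alone does not give you what you need, and the lemma permits an \emph{arbitrary} choice of $x_0\in\VRan(\sigma)\setminus X$. When $x_0\in\Dom(\sigma)$, your evaluation at $z$ yields only $\lambda(\sigma(x_0))\id_E z$, hence $\Var(\lambda(\sigma(x_0)))=\{z\}$ by regularity, and one must then argue that $z\notin\Var(\lambda(y))$ for \emph{every} $y\in\Var(\sigma(x_0))$ by applying the $\Cmc_E(\Gamma)$ condition to each such $y$ (with a sub-case analysis on whether the two witnesses for $y$ are $x_0$ or $y$ itself) before reaching the contradiction. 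This is doable, but it is a genuinely missing piece of your proof. The second, minor, problem is the claim that ``non-triviality of $E$ forces $\lambda(x_0)=z$'' from $\lambda(x_0)\id_E z$: this is wrong in general (in $\ACUI$ one has $f(z,z)\id_{\ACUI} z$). What you actually need is only $z\in\Var(\lambda(x_0))$, and that follows from regularity via \autoref{regular:implies:all:regular:lem}, which gives $\Var(\lambda(x_0))=\{z\}$; so this slip is easily repaired with a tool you already use two sentences later.
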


Since $V\setminus (X\cup \Dom(\sigma)\cup\VR(\sigma))$ is infinite, this lemma together with \autoref{th:min}
implies that $\Gamma$ cannot have a finite minimal complete set w.r.t.\ $\le_E^V$.

\begin{thm}\label{NUOF:thm}
	If $E$ is a regular equational theory and $\Gamma$ a NUOF $E$-unification problem, then $\Gamma$ does not have a 
	finite minimal complete set of $E$-unifiers w.r.t.\ the unrestricted instantiation preorder $\le_E^V$.
\end{thm}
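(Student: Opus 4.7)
The plan is to derive the theorem directly as a consequence of \autoref{NUOF:lem}, combined with the order-theoretic characterization of minimal complete sets given by \autoref{th:min}. Because $\Gamma$ is NUOF, we may fix a $\le_E^V$-minimal $E$-unifier $\sigma \in \Cmc_E(\Gamma)$ together with a variable $x_0 \in \VRan(\sigma) \setminus X$, where $X = \Var(s)\cup\Var(t)$. This lets us form the infinite family of substitutions $\sigma_z = \sigma\tau_{x_0,z}$ for $z\in V$.

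First, I would isolate a suitable infinite subset of parameters. Let $W := V \setminus (X \cup \Dom(\sigma) \cup \VRan(\sigma))$. Since $V$ is countably infinite while each of $X$, $\Dom(\sigma)$, and $\VRan(\sigma)$ is finite, $W$ is infinite. For every $z \in W$ we have in particular $z \in V\setminus X$, so the first item of \autoref{NUOF:lem} guarantees that $\sigma_z$ is a $\le_E^V$-minimal $E$-unifier of $\Gamma$. Moreover, for any two distinct $z, z' \in W \subseteq V \setminus(\Dom(\sigma)\cup\VRan(\sigma))$, the second item of \autoref{NUOF:lem} tells us that $\sigma_z$ and $\sigma_{z'}$ are incomparable w.r.t.\ $\le_E^V$; in particular, $\sigma_z \not\sim_E^V \sigma_{z'}$, so the equivalence classes $[\sigma_z]_E^V$ are pairwise distinct. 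Thus $\{[\sigma_z]_E^V \mid z\in W\}$ is an infinite set of pairwise distinct $\preceq_E^V$-minimal elements of $[\Umc_E(\Gamma)]_E^V$.

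Next, I would conclude by a contradiction argument. Assume toward a contradiction that $\Gamma$ has a finite minimal complete set $\Smc$ of $E$-unifiers w.r.t.\ $\le_E^V$. By \autoref{th:min}, the set $M$ of $\preceq_E^V$-minimal elements of $[\Umc_E(\Gamma)]_E^V$ equals $\{[\sigma']_E^V \mid \sigma'\in \Smc\}$, which is finite. But we have just exhibited an infinite subset of $M$, namely $\{[\sigma_z]_E^V \mid z\in W\}$, which yields the desired contradiction.

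Since \autoref{NUOF:lem} does all the substantive work (both the minimality of each $\sigma_z$ and their mutual incomparability), there is no real obstacle in this argument; the only items to verify are that $W$ is infinite and that pairwise incomparable minimal unifiers give rise to pairwise distinct classes in $M$, both of which are immediate. The theorem then follows: $\Gamma$ admits no finite minimal complete set of $E$-unifiers w.r.t.\ the unrestricted instantiation preorder $\le_E^V$.
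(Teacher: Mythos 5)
Your proposal is correct and follows essentially the same route as the paper: the paper derives the theorem directly from \autoref{NUOF:lem} together with \autoref{th:min}, observing that $V\setminus (X\cup\Dom(\sigma)\cup\VRan(\sigma))$ is infinite and hence yields infinitely many pairwise-incomparable minimal unifiers. Your write-up merely makes explicit the steps the paper leaves implicit (that incomparability gives distinct $\sim_E^V$-classes, and that these classes all lie in the set $M$ of minimal elements, which must be finite if a finite minimal complete set exists).
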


We are now ready to apply this result to \ACUI. 

\begin{cor}[\cite{DBLP:conf/unif/BaaderL16}]\label{ACUI:cor}
	The unrestricted unification type of \ACUI for elementary unification is at least infinitary.
\end{cor}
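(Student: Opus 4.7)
The plan is to invoke \autoref{NUOF:thm} directly. Since \ACUI is a regular theory, it suffices to exhibit one NUOF \ACUI-unification problem. The natural candidate is the disjoint-variable problem
\[
\Gamma = \{ f(x_1, x_2) \id_{\ACUI}^? f(y_1, y_2) \}
\]
together with the candidate unifier
\[
\sigma : x_1 \mapsto f(z_{11}, z_{12}),\ x_2 \mapsto f(z_{21}, z_{22}),\ y_1 \mapsto f(z_{11}, z_{21}),\ y_2 \mapsto f(z_{12}, z_{22}),
\]
where $z_{11}, z_{12}, z_{21}, z_{22}$ are fresh variables. It is immediate that $\Var(f(x_1,x_2)) \cap \Var(f(y_1,y_2)) = \emptyset$, that applying $\sigma$ and reducing modulo \ACUI makes both sides equal to the same term built from the four $z_{ij}$, and that $\VRan(\sigma) \setminus \Var(\Gamma) = \{z_{11}, z_{12}, z_{21}, z_{22}\}$, which verifies the freshness condition. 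Moreover, each $z_{ij}$ appears in exactly one $\sigma(x_k)$ and exactly one $\sigma(y_\ell)$, so $\sigma \in \Cmc_{\ACUI}(\Gamma)$.

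The main obstacle is to show that $\sigma$ is $\le_{\ACUI}^V$-minimal, which is the last remaining NUOF condition. The approach is to use the standard semilattice-with-unit representation of \ACUI-terms, under which $\sigma$ corresponds to the $2 \times 2$ ``grid'' decomposition of the common union $\{z_{11}, z_{12}, z_{21}, z_{22}\}$ into four atomic cells, with each $z_{ij}$ tagged by the unique pair $(x_k, y_\ell)$ in whose images it lies. A unifier $\tau$ with $\tau <_{\ACUI}^V \sigma$ would yield, via a witnessing $\lambda$, a strictly finer decomposition of the common union that can still be collapsed back to $\sigma$'s four cells; but the four cells are already forced by the requirement that $\tau(x_1) \cup \tau(x_2) = \tau(y_1) \cup \tau(y_2)$ contain a nonempty part in each of the four sectors $(x_k, y_\ell)$, so any such refinement is either $\sim_{\ACUI}^V$-equivalent to $\sigma$ (contradicting strictness) or fails to be a unifier. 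This combinatorial core is precisely the argument carried out in~\cite{DBLP:conf/unif/BaaderL16}, on which the present proof can rely.

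Once minimality of $\sigma$ is established, $\Gamma$ is NUOF and \autoref{NUOF:thm} immediately yields that $\Gamma$ admits no finite minimal complete set of \ACUI-unifiers w.r.t.\ $\le_{\ACUI}^V$, which is the claimed ``at least infinitary'' lower bound for the unrestricted unification type of \ACUI.
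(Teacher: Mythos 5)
Your high-level strategy (exhibit a NUOF problem and invoke \autoref{NUOF:thm}) is the same as the paper's, but your witness problem fails, and this is fatal. The problem $\Gamma = \{f(x_1,x_2)\id_\ACUI^? f(y_1,y_2)\}$ has a most general unifier w.r.t.\ the restricted preorder that uses \emph{no} fresh variables, namely
\[
\theta := \{x_1\mapsto f(x_1,y_2),\ x_2\mapsto f(y_1,x_2),\ y_1\mapsto f(x_1,y_1),\ y_2\mapsto f(y_2,x_2)\}.
\]
This $\theta$ is a unifier (both sides get variable set $\{x_1,x_2,y_1,y_2\}$), and $\lambda := \{x_1\mapsto z_{11},\ x_2\mapsto z_{22},\ y_1\mapsto z_{21},\ y_2\mapsto z_{12}\}$ satisfies $\lambda\theta\id_\ACUI\sigma$ on \emph{all} of $V$, so $\theta\le_\ACUI^V\sigma$ and a fortiori $\theta\le_\ACUI^{\Var(\Gamma)}\sigma$; since your grid unifier $\sigma$ is a restricted mgu, so is $\theta$, by transitivity. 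Because $\Dom(\theta)\cup\VRan(\theta)\subseteq\Var(\Gamma)$, \autoref{stays:same:lem} makes $\{\theta\}$ a minimal complete set also w.r.t.\ $\le_\ACUI^V$: your $\Gamma$ is \emph{unitary} in the unrestricted setting. In particular $\sigma$ is not $\le_\ACUI^V$-minimal: any $\mu$ with $\mu\sigma\id_\ACUI\theta$ would need $\mu(z_{11})\id_\ACUI\theta(z_{11})=z_{11}$, hence $z_{11}\in\Var(\mu(\sigma(x_1)))$ by \autoref{regular:implies:all:regular:lem}, contradicting $\mu(\sigma(x_1))\id_\ACUI f(x_1,y_2)$. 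So $\theta<_\ACUI^V\sigma$, $\Gamma$ is not NUOF, and the ``combinatorial core'' you defer to \cite{DBLP:conf/unif/BaaderL16} cannot exist for this problem.

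The paper sidesteps both difficulties. First, it uses the larger problem $\{f(x,f(y,z))\id_\ACUI^? f(u,v)\}$, for which Corollary~3.6 of \cite{DBLP:journals/tcs/BaaderB88} guarantees that \emph{every} restricted mgu introduces a fresh variable (six grid cells versus only five problem variables; in your $2\times 2$ case the four cells can be realized by the four problem variables, as $\theta$ shows). Second, it never proves $\le_\ACUI^V$-minimality of a concrete unifier -- the step you only sketch. Instead it argues: if $\Gamma$ has no minimal complete set w.r.t.\ $\le_\ACUI^V$, the type is zero and we are done; otherwise, by \autoref{complete:lem} and \autoref{th:min} such a set inside $\Cmc_\ACUI(\Gamma)$ contains a $\le_\ACUI^V$-minimal unifier below the known restricted mgu, and that unifier is then itself a restricted mgu and so must use a fresh variable. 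To salvage your constructive route you would have to both switch to a problem covered by Corollary~3.6 and actually carry out the minimality proof, which your appeal to the literature does not supply.
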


\begin{proof}
	Since \ACUI is regular, it is sufficient to show that there is an \ACUI-unification problem $\Gamma$ that is NUOF.
	According to Corollary~3.6 in \cite{DBLP:journals/tcs/BaaderB88}, any most general unifier (w.r.t.\ restricted instantiation)
	of the \ACUI-unification problem 
	$\Gamma = \{ f(x,f(y,z)) \id_\ACUI^? f(u ,v) \}$ must use a fresh variable.
	Let $\theta$ be such an mgu. %
	
	If $\Gamma$ does not have a minimal complete set of \ACUI-unifiers w.r.t.\ unrestricted instantiation,
	then we are done.
	Thus, assume that $\Gamma$ has a minimal complete set $\Smc$ w.r.t.\ unrestricted instantiation. By
	\autoref{complete:lem}, we can assume without loss of generality that $\Smc\subseteq \Cmc_E(\Gamma)$,
	and by \autoref{th:min} we know that the elements of $\Smc$ are $\le_\ACUI^V$-minimal. Since
	$\theta$ is an \ACUI-unifier of $\Gamma$, there is a $\sigma\in \Smc$ such that $\sigma\le_\ACUI^V \theta$.
	Since ${\le_\ACUI^V} \subseteq {\le_\ACUI^{\Var(\Gamma)}}$, this implies that $\sigma$ is also an mgu of
	$\Gamma$ w.r.t.\ the restricted instantiation preorder, and thus it introduces a fresh variable.
	
	Consequently, we have shown that $\Gamma$ is NUOF, and thus \autoref{NUOF:thm} is applicable, which proves the corollary.
\end{proof}

Next, we consider the theory \ACI. We show that the unification problem $\Gamma$ used in the proof of \autoref{ACUI:cor} is also a NUOF \ACI-unification problem. 

\begin{cor}\label{ACI:cor}
	The unrestricted unification type of \ACI for elementary unification is at least infinitary.
\end{cor}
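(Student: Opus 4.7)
The plan is to follow the template of the proof of \autoref{ACUI:cor}: show that the same unification problem $\Gamma := \{f(x,f(y,z)) \id_\ACI^? f(u,v)\}$ is a NUOF \ACI-unification problem, and then invoke \autoref{NUOF:thm}. Since \ACI is regular and the variable sets on the two sides of $\Gamma$ are disjoint, what remains is to exhibit a $\le_\ACI^V$-minimal \ACI-unifier $\sigma$ of $\Gamma$ that uses a fresh variable and belongs to $\Cmc_\ACI(\Gamma)$.

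As candidate I would take
\[
\sigma := \{\, x\mapsto a,\; y\mapsto b,\; z\mapsto f(u,v),\; u\mapsto f(a,u),\; v\mapsto f(b,v)\,\}
\]
for fresh variables $a,b$. Because \ACI-equivalence classes correspond to sets of occurring variables, both $\sigma(f(x,f(y,z)))$ and $\sigma(f(u,v))$ reduce to the variable set $\{a,b,u,v\}$, so $\sigma$ is an \ACI-unifier. It uses the fresh variables $a,b$, and each variable in $\VRan(\sigma)=\{a,b,u,v\}$ appears in at least two images of $\sigma$ (for example, $a$ in both $\sigma(x)$ and $\sigma(u)$), so $\sigma\in\Cmc_\ACI(\Gamma)$.

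The main obstacle is proving $\le_\ACI^V$-minimality of $\sigma$. For any \ACI-unifier $\sigma'$ with $\lambda\sigma' \id_\ACI^V \sigma$, the key is to translate each equation $\lambda\sigma'(w) \id_\ACI \sigma(w)$ for $w\in\{x,y,z,u,v\}$ into a set-equation on the $\lambda$-image of the variable set of $\sigma'(w)$. Writing $A,B,Z,U,V$ for $\Var(\sigma'(x)),\Var(\sigma'(y)),\Var(\sigma'(z)),\Var(\sigma'(u)),\Var(\sigma'(v))$, the distinctness of $a,b,u,v$ yields the structural constraints $A\subseteq U$, $B\subseteq V$, $A\cap Z=B\cap Z=U\cap V=\emptyset$, and a disjoint decomposition $Z=(U\cap Z)\sqcup(V\cap Z)$ in which both parts are non-empty. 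Crucially, the equations $\lambda\sigma'(w) \id_\ACI w$ for $w\notin\Var(\Gamma)\cup\{a,b\}$ force $\lambda$ to act as the identity on any fresh variables that $\sigma'$ might use outside its domain, ruling out the renaming tricks that would otherwise permit strictly greater generality. From these constraints I then construct $\mu$ with $\Var(\mu(a))=A$, $\Var(\mu(b))=B$, $\Var(\mu(u))=U\cap Z$, $\Var(\mu(v))=V\cap Z$, and $\mu(w):=\sigma'(w)$ elsewhere, and verify $\mu\sigma\id_\ACI^V\sigma'$ by routine set-bookkeeping; the delicate cases $w=a,b$ are handled using the consequence $\Var(\sigma'(a))=A$ and $\Var(\sigma'(b))=B$ of the same identity constraint on $\lambda$. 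Having established $\Gamma$ as NUOF, \autoref{NUOF:thm} delivers the claimed ``at least infinitary'' lower bound.
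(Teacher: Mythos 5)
Your overall strategy departs from the paper's. The paper never exhibits an explicit minimal unifier: it argues indirectly, taking a restricted \ACUI-mgu $\theta$ of $\Gamma$, showing via the unifier $\gamma=\{u\mapsto x, v\mapsto f(y,z)\}$ that $\theta$ maps no variable of $\Gamma$ to $\unit$ and hence is also an \ACI-unifier, and then deducing that any element of a hypothetical minimal complete set lying below $\theta$ is itself a restricted \ACUI-mgu and therefore must use a fresh variable by Corollary~3.6 of \cite{DBLP:journals/tcs/BaaderB88}. Your plan of directly producing a $\le_\ACI^V$-minimal unifier in $\Cmc_\ACI(\Gamma)$ that uses fresh variables would be legitimate if it worked, but your candidate fails.

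Concretely, your $\sigma = \{x\mapsto a,\ y\mapsto b,\ z\mapsto f(u,v),\ u\mapsto f(a,u),\ v\mapsto f(b,v)\}$ is \emph{not} $\le_\ACI^V$-minimal. Consider
\[
\sigma_0 := \{\, z\mapsto f(u,v),\ u\mapsto f(x,u),\ v\mapsto f(y,v)\,\},
\]
which is an \ACI-unifier of $\Gamma$ (both sides instantiate to terms with variable set $\{x,y,u,v\}$). With $\lambda:=\{x\mapsto a,\ y\mapsto b\}$ one checks $\lambda\sigma_0 = \sigma$ \emph{syntactically} on every variable (in particular $\lambda\sigma_0(a)=a=\sigma(a)$ since $a\notin\Dom(\sigma_0)\cup\Dom(\lambda)$... note $\lambda(a)=a$), so $\sigma_0\le_\ACI^V\sigma$. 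Conversely, any $\mu$ with $\mu\sigma\id_\ACI^V\sigma_0$ would need $\mu(a)=\mu(\sigma(x))\id_\ACI\sigma_0(x)=x$ and simultaneously $\mu(a)=\mu(\sigma(a))\id_\ACI\sigma_0(a)=a$, forcing $\Var(\mu(a))$ to equal both $\{x\}$ and $\{a\}$ by \autoref{regular:implies:all:regular:lem} -- impossible. Hence $\sigma_0<_\ACI^V\sigma$, so $\sigma$ is not minimal, the NUOF conditions are not witnessed by it, and your minimality argument cannot be completed for this substitution. The problem is that your $\sigma$ spends its fresh variables $a,b$ merely renaming $x,y$ in a non-invertible way (non-invertible precisely because the unrestricted preorder watches $a$ and $b$). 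Any correct direct witness must use fresh variables in an essential, ``shared'' way (as the \ACUI-mgu does), and proving its $\le_\ACI^V$-minimality from scratch is exactly the work the paper's indirect argument is designed to avoid; you would also still need to handle the case that no minimal complete set exists at all, which the paper dispatches by noting that type zero already gives the claimed lower bound.
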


\begin{proof}
	Consider the unification problem $\Gamma$ used in the proof of \autoref{ACUI:cor}. 
	If $\Gamma$ does not have a minimal complete set of \ACI-unifiers w.r.t.\ unrestricted instantiation, then we are done. Thus, assume that $\Gamma$ has a minimal complete set $\Smc$ of unifiers w.r.t.\ unrestricted instantiation. By \autoref{complete:lem}, we can assume without loss of generality that $\Smc\subseteq \Cmc_E(\Gamma)$. We show that $\Smc$ contains a substitution $\sigma$ that is also an \ACUI-mgu of $\Gamma$ w.r.t.\ restricted instantiation.
	This will imply that $\sigma$ must use a fresh variable (according to Corollary~3.6 in \cite{DBLP:journals/tcs/BaaderB88}), and consequently, $\Gamma$ is also a NUOF \ACI-unification problem. 
	
	Let $\theta$ be an \ACUI-mgu of $\Gamma$ and let $\gamma$ be the substitution:
	\[
	\gamma := \{ u\mapsto x,\ \ v \mapsto f(y,z)\}. 
	\]
	Clearly, $\gamma$ is a unifier of $\Gamma$. Thus, we have that $\theta\le_\ACUI^{\Var(\Gamma)} \gamma$. This implies that $\theta$ cannot map any of the variables occurring in $\Gamma$ to the unit $\unit$, for otherwise $\gamma$ cannot be an instance of $\theta$ w.r.t.\ \ACUI and $\Var(\Gamma)$.
	As a direct consequence of this, we have that $\theta$ is also an \ACI-unifier of $\Gamma$. Hence, we can proceed as in the case of \ACUI to show that there is $\sigma\in\Smc$ such that $\sigma {\le_\ACI^{\Var(\Gamma)}} \theta$. In addition, $\sigma$ is obviously also an \ACUI-unifier of $\Gamma$. 
	Hence, since ${\le_\ACI^{\Var(\Gamma)}} \subseteq {\le_\ACUI^{\Var(\Gamma)}}$ and $\theta$ is an \ACUI-mgu of $\Gamma$, we have that:
	\[
	\sigma \le_\ACUI^{\Var(\Gamma)}\theta \ \ \  \mbox{and}\ \ \  \theta\le_\ACUI^{\Var(\Gamma)} \sigma.
	\]  
	This means that $\sigma$ is an \ACUI-mgu of $\Gamma$ w.r.t.\ the restricted instantiation preorder, and thus, it introduces a fresh variable.
	 
	Consequently, we have shown that $\Gamma$ is a NUOF \ACI-unification problem. Thus, since \ACI is regular, we can apply \autoref{NUOF:thm} to conclude the proof of the corollary. 
\end{proof}

\autoref{NUOF:thm} also applies to elementary unification in \ACU.
It is well-known (see, e.g., Section~10.3 in \cite{DBLP:books/daglib/0092409}) that %
a given elementary \ACU-unification problem $\Gamma$ can be translated into a system of homogeneous linear diophantine equations. 
W.r.t.\ the restricted instantiation preorder, the mgu of the problem $\Gamma$ can then be obtained from the minimal generating set 
of the solutions of this system, also called its Hilbert base, where the number of variables used in the range of this mgu corresponds
to the cardinality of the Hilbert base of the system. 
As pointed out in Example~2 of \cite{DBLP:journals/jar/Lankford89}, the cardinality of the Hilbert base for 
equations of the form $ny = x_1 + 2x_2 + \ldots nx_n$ grows at least exponentially in $n$, and thus there are clearly instances
where the mgu of the corresponding \ACU-unification problem needs more than $n+1$ variables. Given this, one can now proceed as in
the case of \ACUI to show that \autoref{NUOF:thm} applies.

\begin{cor}\label{ACU:cor}
	The unrestricted unification type of \ACU for elementary unification is at least infinitary.
\end{cor}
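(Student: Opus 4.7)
The plan is to mirror the strategy of \autoref{ACUI:cor}: construct a concrete elementary \ACU-unification problem $\Gamma$ that is NUOF, from which the conclusion follows immediately via \autoref{NUOF:thm} (since \ACU is regular). Following the hint in the paragraph preceding the statement, I would choose an integer $n$ for which the Hilbert base of the homogeneous linear diophantine equation $n y = x_1 + 2 x_2 + \cdots + n x_n$ has cardinality strictly greater than $n+1$; such an $n$ exists by the exponential lower bound on Hilbert bases cited from \cite{DBLP:journals/jar/Lankford89}. The associated elementary \ACU-unification problem is $\Gamma = \{ s \id_\ACU^? t \}$, with $s$ the $f$-term having $n$ occurrences of the single variable $y$ and $t$ the $f$-term in which each $x_i$ occurs exactly $i$ times. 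Then $\Var(s) = \{y\}$ and $\Var(t) = \{x_1,\ldots,x_n\}$ are disjoint, which supplies the first clause of NUOF.

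The key technical step is to argue that any restricted \ACU-mgu of $\Gamma$ introduces at least one fresh variable. Stickel's algorithm for elementary \ACU-unification (Section~10.3 of \cite{DBLP:books/daglib/0092409}) yields a restricted mgu $\theta$ whose range variables correspond bijectively to the elements of the Hilbert base, so that $|\VRan(\theta)|$ equals this cardinality, which by the choice of $n$ strictly exceeds $n+1 = |\Var(\Gamma)|$. Since this cardinality is an invariant of the $\sim_\ACU^{\Var(\Gamma)}$-equivalence class of a restricted mgu, no restricted mgu of $\Gamma$ can have its range entirely contained in $\Var(\Gamma)$.

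Once this is in place, the remainder of the argument copies the proof of \autoref{ACUI:cor} step by step. If $\Gamma$ has no minimal complete set of \ACU-unifiers w.r.t.\ $\le_\ACU^V$, its unrestricted type is already zero and we are done. Otherwise, \autoref{complete:lem} supplies a minimal complete set $\Smc \subseteq \Cmc_\ACU(\Gamma)$ w.r.t.\ $\le_\ACU^V$, whose elements are $\le_\ACU^V$-minimal by \autoref{th:min}. Completeness yields some $\sigma \in \Smc$ with $\sigma \le_\ACU^V \theta$, hence $\sigma \le_\ACU^{\Var(\Gamma)} \theta$. Because \ACU is unitary w.r.t.\ the restricted instantiation preorder, $\sigma$ is itself a restricted mgu, and thus introduces a fresh variable by the previous paragraph. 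This makes $\Gamma$ a NUOF \ACU-unification problem, and \autoref{NUOF:thm} delivers the claimed ``at least infinitary'' lower bound.

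The main obstacle I anticipate is justifying the cardinality invariance used in the second paragraph, that is, showing that \emph{every} restricted \ACU-mgu of $\Gamma$, and not only the one produced by Stickel's algorithm, needs at least as many range variables as there are elements in the Hilbert base. This is an algebraic statement about the free rank of the equalizer in the variety of commutative monoids; the rest of the plan is essentially bookkeeping imported verbatim from the proof of \autoref{ACUI:cor}.
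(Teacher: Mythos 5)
Your proposal follows essentially the same route as the paper: the same unification problem derived from the equation $ny = x_1 + 2x_2 + \cdots + nx_n$, the same appeal to Example~2 of \cite{DBLP:journals/jar/Lankford89} for the exponential lower bound on the Hilbert base, and the same reduction to the \ACUI argument to conclude that $\Gamma$ is NUOF so that \autoref{NUOF:thm} applies. The cardinality-invariance point you flag as the main obstacle is indeed the step the paper asserts rather than proves in detail (``the number of variables used in the range of this mgu corresponds to the cardinality of the Hilbert base''), so your treatment is, if anything, slightly more careful than the paper's.
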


The theory \AC, which is obtained from \ACU by removing
the unit $0$ from the signature and the identity containing it from the axiomatization, is finitary w.r.t.\ the 
restricted instantiation preorder. 
Proving that the unrestricted unification type of \AC cannot be unitary or finitary is very similar to our proofs for \ACUI and \ACU.
In contrast to \ACU, the theory \AC is finitary rather than unitary in the restricted setting. A minimal complete set of unifiers
is obtained by taking appropriate subsets of the Hilbert base and turning them into unifiers that have a variable in the range for
each element of the subset (see, e.g., \cite{DBLP:books/daglib/0092409,DBLP:conf/lpar/HermannJK99}). 
Since the full Hilbert base is an appropriate subset, the minimal complete set of \AC-unifiers for the unification problem $\Gamma_n$ 
corresponding to the linear diophantine equation $ny = x_1 + 2x_2 + \ldots nx_n$ for a large enough natural number $n$
must contain a unifier $\theta$ that introduces a fresh variable. As in the case of \ACUI we can now show that, under the assumption that $\Gamma$ has
a minimal complete set \Smc of \AC-unifiers w.r.t.\ the unrestricted instantiation preorder, this set \Smc contains a $\leq_\AC^V$-minimal
\AC-unifier $\sigma$ satisfying $\sigma\le_\AC^V \theta$, and thus also  $\sigma\le_\AC^{\Var(\Gamma_n)} \theta$.
Since $\theta$ contains a variable in the range for each element of the Hilbert base and elements of the Hilbert base cannot be generated
by a sum of other vectors, this implies that $\sigma$ must also contain at least as many variables in its range as $\theta$.
Consequently, we have shown that $\Gamma_n$ is NUOF, and thus \autoref{NUOF:thm} is applicable.

\begin{cor}\label{AC:cor}
	The unrestricted unification type of \AC for elementary unification is at least infinitary.
\end{cor}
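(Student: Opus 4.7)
The plan is to follow the same template used for \ACUI and \ACU, applying \autoref{NUOF:thm}. Since \AC is regular, it suffices to exhibit a NUOF \AC-unification problem, and the candidate suggested by the preceding discussion is the elementary \AC-problem $\Gamma_n$ corresponding to the homogeneous linear Diophantine equation $ny = x_1 + 2x_2 + \cdots + nx_n$ for a sufficiently large $n$. Written explicitly as a single equation $\{s\id_\AC^? t\}$ with $s$ the $n$-fold $f$-sum of $y$ on the left and $t$ the $f$-sum of $i$ copies of $x_i$ for $i=1,\ldots,n$ on the right, the two sides share no variables, which gives the first condition in the definition of NUOF.

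Next I would recall the correspondence between elementary \AC-unification and finite subsets of the Hilbert base of the associated Diophantine system (see Section~10.3 of \cite{DBLP:books/daglib/0092409}): a minimal complete set of $\AC$-unifiers of $\Gamma_n$ w.r.t.\ the restricted instantiation preorder is obtained by assigning a fresh variable to each Hilbert-base vector in an appropriate subset. The whole Hilbert base yields one such unifier $\theta$; by the estimate in Example~2 of~\cite{DBLP:journals/jar/Lankford89}, for large enough $n$ its cardinality exceeds $n+1$, so $\theta$ necessarily introduces variables not occurring in~$\Gamma_n$. This gives the ``fresh variables'' ingredient of NUOF once we lift $\theta$ to a $\leq_\AC^V$-minimal unifier in $\Cmc_\AC(\Gamma_n)$.

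For this lifting, assume $\Gamma_n$ admits a minimal complete set $\Smc$ of \AC-unifiers w.r.t.\ $\leq_\AC^V$ (otherwise the desired lower bound is already attained). By \autoref{complete:lem} we may take $\Smc\subseteq\Cmc_\AC(\Gamma_n)$, and by \autoref{th:min} its elements are $\leq_\AC^V$-minimal. Pick $\sigma\in\Smc$ with $\sigma\leq_\AC^V\theta$; this also gives $\sigma\leq_\AC^{\Var(\Gamma_n)}\theta$. The crucial point is now that, since every Hilbert-base vector is minimal in the componentwise order on solutions and hence cannot be written as a sum of other solutions with positive coefficients, any $\lambda$ witnessing $\lambda\sigma\id_\AC^{\Var(\Gamma_n)}\theta$ must map some variable of $\VRan(\sigma)$ to an $f$-sum that still contributes a fresh variable of $\theta$ for every Hilbert-base vector used in $\theta$. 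Consequently $|\VRan(\sigma)|\geq|\VRan(\theta)|>n+1$, so $\sigma$ also introduces fresh variables and witnesses that $\Gamma_n$ is NUOF. An application of \autoref{NUOF:thm} then yields that $\Gamma_n$ has no finite minimal complete set of \AC-unifiers w.r.t.\ $\leq_\AC^V$.

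I expect the subtle step to be the indecomposability argument of the previous paragraph: one has to translate the fact that Hilbert-base vectors cannot be combined into one another back into a statement about substitutions modulo \AC, noting that instantiation in \AC corresponds precisely to nonnegative integer linear combinations of the exponent vectors describing $\sigma(x)$. Once this is spelled out, the argument is parallel to the \ACUI and \ACU cases, and the rest is routine checking of the definition of NUOF and of the containment $\sigma\in\Cmc_\AC(\Gamma_n)$ supplied by \autoref{complete:lem}.
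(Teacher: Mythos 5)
Your proposal follows essentially the same route as the paper: the same problem $\Gamma_n$ from the Diophantine equation $ny = x_1 + 2x_2 + \cdots + nx_n$, the same unifier $\theta$ built from the full Hilbert base whose cardinality exceeds $n+1$ by Lankford's example, the same lifting to a $\leq_\AC^V$-minimal $\sigma\in\Cmc_\AC(\Gamma_n)$ via \autoref{complete:lem} and \autoref{th:min}, and the same indecomposability argument for Hilbert-base vectors to force $\sigma$ to use fresh variables before applying \autoref{NUOF:thm}. The argument is correct and matches the paper's proof in all essentials.
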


\section{Upper bounds}\label{upper:bounds:sec}
This section is devoted to obtaining matching ``at most infinitary'' upper bounds for the ``at least infinitary'' lower bounds established in the previous section,
by proving that none of these theories can have type zero. In this way, we determine the exact unrestricted unification type (infinitary) of these equational theories.

To demonstrate that the theories in question cannot have unrestricted unification type zero, we prove that all these theories are Noetherian. 

\begin{defi}
We call an equational theory $E$ \emph{Noetherian} w.r.t.\ the unrestricted instantiation preorder if
\begin{itemize}
        \item
        every strictly descending chain of substitutions $\sigma_1 >_E^V \sigma_2>_E^V \sigma_3 >_E^V \cdots $ is finite.
\end{itemize}
\end{defi}
The notion of a Noetherian equational theory was introduced in~\cite{BurckertHS89} for the restricted instantiation preorder where non-existence
of an infinite strictly descending chain of substitutions was required for $>_E^X$ for all finite subsets $X$ of $V$. 
The authors of~\cite{BurckertHS89} show that Noetherian in their sense precludes restricted unification type zero.
We will see that there are theories that are not Noetherian in the sense of~\cite{BurckertHS89}, but are Noetherian w.r.t.\ the unrestricted instantiation preorder.
In fact, the theories of restricted type zero considered in this section (\EL, \FLzero, \ACUh, \AI) cannot be
Noetherian in the restricted setting, but we will show that these theories are Noetherian in the unrestricted setting.\footnote{%
See Section~3 in~\cite{BaaderG25} for a concrete example of an \EL unification problem where the set of unifiers contains an infinite strictly descending chain
w.r.t.\ the restricted instantiation preorder, but not w.r.t.\ the unrestricted one.
}
In the following, when we say Noetherian we will mean Noetherian w.r.t.\ the unrestricted instantiation preorder.

\begin{prop}\label{noeth:prop}
If $E$ is a Noetherian theory, then it cannot have unrestricted unification type zero.
\end{prop}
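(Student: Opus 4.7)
The plan is to argue by contraposition: assume $E$ has unrestricted unification type zero and construct from this an infinite strictly descending chain of substitutions, contradicting Noetherianness. By \autoref{th:min}, the assumption that $E$ has type zero means that there is a solvable $E$-unification problem $\Gamma$ such that the set $M$ of $\preceq_E^V$-minimal elements of $[\Umc_E(\Gamma)]_E^V$ is not complete. Equivalently, there exists an $E$-unifier $\theta_0$ of $\Gamma$ such that no element of $M$ lies below $[\theta_0]_E^V$, i.e., no $\le_E^V$-minimal $E$-unifier of $\Gamma$ satisfies $\sigma \le_E^V \theta_0$.

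Starting from this $\theta_0$, I would construct by induction an infinite sequence of $E$-unifiers $\theta_0, \theta_1, \theta_2, \ldots$ of $\Gamma$ with $\theta_i >_E^V \theta_{i+1}$ for all $i \geq 0$, and such that each $\theta_i$ also has no $\le_E^V$-minimal $E$-unifier of $\Gamma$ below it. The inductive step works as follows: since $\theta_i$ has no minimal unifier below it, $\theta_i$ is in particular not itself minimal, so there must exist a unifier $\theta_{i+1}$ with $\theta_{i+1} <_E^V \theta_i$. This $\theta_{i+1}$ still cannot have a $\le_E^V$-minimal unifier $\sigma$ below it, since otherwise transitivity of $\le_E^V$ would yield $\sigma \le_E^V \theta_i$, contradicting the inductive hypothesis on $\theta_i$. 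Hence the construction continues indefinitely.

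The resulting sequence $\theta_0 >_E^V \theta_1 >_E^V \theta_2 >_E^V \cdots$ is an infinite strictly descending chain of substitutions w.r.t.\ $>_E^V$, which directly contradicts the assumption that $E$ is Noetherian w.r.t.\ the unrestricted instantiation preorder. Consequently, $E$ cannot have unrestricted unification type zero.

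I do not foresee a real obstacle here; the argument is a standard use of the descending chain condition, and the only point to be careful about is that the property ``has no $\le_E^V$-minimal unifier below it'' is preserved under taking strict predecessors, which is immediate from transitivity of $\le_E^V$. The statement is also implicitly supported by \autoref{th:min}, which lets us translate the non-existence of a minimal complete set into the non-existence of a minimal element of $[\Umc_E(\Gamma)]_E^V$ below some $[\theta_0]_E^V$.
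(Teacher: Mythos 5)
Your proposal is correct and takes essentially the same route as the paper: both reduce the claim via \autoref{th:min} to the completeness of the set of $\preceq_E^V$-minimal classes and then invoke the descending-chain condition, the only difference being that you phrase the argument contrapositively (building an explicit infinite chain from a unifier with no minimal unifier below it) while the paper argues directly that the chain must terminate at a minimal unifier below the given one.
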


\begin{proof}
By \autoref{th:min} it is sufficient to show that the set of $\preceq_E^V$-minimal elements in the set of all unifiers is complete, i.e.,
that every unifier $\theta$ is above a minimal unifier. If $\theta$ itself is minimal, then we are done. Otherwise, there is a unifier
$\sigma_1$ such that $\theta>_E^V \sigma_1$. Again, if $\sigma_1$ is minimal, we are done. Otherwise, there is a unifier $\sigma_2$ such that 
$\sigma_1 >_E^V \sigma_2$. Continuing like this, we create a strictly descending sequence
$\theta>_E^V \sigma_1>_E^V \sigma_2>_E^V \sigma_3 >_E^V \cdots$. Since such a sequence must be finite, there is a minimal unifier $\sigma_n$
such that $\theta>_E^V \sigma_1>_E^V \sigma_2>_E^V \sigma_3 >_E^V \cdots >_E^V \sigma_n$. This shows that any unifier is above a minimal one.
\end{proof}

Our proof that the equational theories under consideration here are Noetherian proceeds in two steps.
First, we show in \autoref{regular:theory:prop} that regular theories satisfy a property that allows us to reduce
the search for a minimal unifier $\sigma$ below a given unifier $\theta$ to substitutions that do not use more variables than $\theta$. 
Then, we prove that finite theories are Noetherian w.r.t.\ the unrestricted instantiation preorder. This allows us to establish the 
``at most infinitary'' upper bounds for the theories \AC and \D.
Next, we consider regular, locally finite theories, such as \AI, \ACI, and \ACUI, and proof that they are Noetherian. The theory \EL is regular,
but not locally finite. However, a proof similar to the one employed for locally finite theories can be used to establish that
\EL is Noetherian.
Finally we turn our attention to the class of monoidal theories. We define the subclass of ``restrictive'' monoidal theories and prove that the regular theories of this subclass are also Noetherian. Since \ACU, \ACUh, and \FLzero are all members of this subclass, this implies that their unrestricted unification type cannot be zero.

\subsection{A crucial proposition}\label{proof:regular:theories:prop:sec}

The following proposition is a vital part of our proofs of the ``at most infinitary'' upper bounds in the subsequent subsections.
\begin{prop}\label{regular:theory:prop}
        Let $E$ be a regular theory and let $\sigma, \theta$ be substitutions such that $\sigma\leq^V_E \theta$.
        Then there is a substitution $\sigma'$ such that $\sigma \sim_E^V \sigma'$ and $\Dom(\sigma') \cup \VRan(\sigma')\subseteq\Dom(\theta)\cup\VRan(\theta)$.
\end{prop}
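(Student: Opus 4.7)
The plan is to construct $\sigma'$ explicitly by absorbing the action of $\sigma$ outside $W:=\Dom(\theta)\cup\VRan(\theta)$ into an equivalent substitution acting only on $W$, using regularity to control that action. First I would fix a witness $\lambda$ of $\sigma\le_E^V\theta$, so $\lambda\sigma(x)\equiv_E\theta(x)$ for every $x\in V$. The crucial consequence of regularity, via \autoref{regular:implies:all:regular:lem}, is the variable-level identity $\Var(\lambda\sigma(x))=\Var(\theta(x))$ for every $x$. Unfolding the left-hand side as $\bigcup_{y\in\Var(\sigma(x))}\Var(\lambda(y))$ yields two structural constraints: for $x\in W$, every $y\in\Var(\sigma(x))\setminus W$ satisfies $\Var(\lambda(y))\subseteq W$; while for $x\notin W$, the variables of $\sigma(x)$ are collectively mapped by $\lambda$ into $\{x\}$, since $\theta(x)=x$. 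Together these force the action of $\sigma$ on $V\setminus W$ to have an essentially invertible chain/cycle-like structure.

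Using this information, I would define $\sigma'$ satisfying $\Dom(\sigma')\cup\VRan(\sigma')\subseteq W$ by setting $\sigma'(x):=x$ for $x\notin W$ and, for $x\in W$, by building $\sigma'(x)$ from $\sigma(x)$ via a renaming that replaces each variable in $\Var(\sigma(x))\setminus W$ by its $\lambda$-image, whose variables lie in $W$ by the first constraint above. The equivalence $\sigma\sim_E^V\sigma'$ is then verified by producing two witnesses, one per direction of $\le_E^V$. The witness $\mu$ for $\sigma\le_E^V\sigma'$ is obtained from $\lambda$ and must simultaneously implement the renaming on $W$ and collapse the chain/cycle action of $\sigma$ on $V\setminus W$; the witness $\nu$ for the reverse direction $\sigma'\le_E^V\sigma$ must reintroduce the variables of $\VRan(\sigma)\setminus W$ and reproduce the action of $\sigma$ on $\Dom(\sigma)\setminus W$, neither of which is visible in $\sigma'$.

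The main obstacle is the joint design of $\sigma'$ and its two witnesses so that a single substitution per direction correctly handles both the $W$-part and the $V\setminus W$-part of the equivalence. Especially delicate is ensuring that a variable appearing in $\sigma(x)$ for several $x$ (some inside $W$, some outside) is not constrained in conflicting ways—this consistency is precisely what the chain/cycle analysis coming from regularity guarantees. The reverse direction $\sigma'\le_E^V\sigma$ is the combinatorial heart of the argument, since one must reconstruct the erased variables of $\sigma$; regularity is indispensable here, because without it the extra variables of $\sigma$ could carry genuine structural information not recoverable from $\sigma'$.
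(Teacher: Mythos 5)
Your structural observations from regularity are correct, but the construction of $\sigma'$ that you specify does not work, and the steps you defer (the two witnesses and the ``chain/cycle analysis'') are exactly where the proof lives. Concretely, replacing each $y\in\Var(\sigma(x))\setminus W$ in $\sigma(x)$ (for $x\in W$) by $\lambda(y)$ is not a renaming: $\lambda(y)$ need not be a variable, and even when it is one it can collide with a variable of $\Var(\sigma(x))\cap W$ that must be kept. Take $E=\emptyset$ (regular), $\sigma=\{x_1\mapsto f(y,x_2),\ y\mapsto w,\ w\mapsto x_2,\ x_2\mapsto y\}$ and $\lambda=\{y\mapsto x_2,\ x_2\mapsto w,\ w\mapsto y\}$; then $\lambda\sigma=\theta=\{x_1\mapsto f(x_2,w)\}$ on all of $V$, so $\sigma\leq_\emptyset^V\theta$, and $W=\Dom(\theta)\cup\VRan(\theta)=\{x_1,x_2,w\}$ with $y\notin W$. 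Your construction yields $\sigma'(x_1)=f(x_2,x_2)$ (and $\sigma'(w)=\sigma'(x_2')=x_2$, $\sigma'(x_2)=x_2$), which collapses the two distinct argument positions of $\sigma(x_1)=f(y,x_2)$; any $\nu$ witnessing $\sigma'\leq_\emptyset^V\sigma$ would need $\nu(x_2)=y$ and $\nu(x_2)=x_2$ simultaneously, so $\sigma\sim_E^V\sigma'$ fails and your $\sigma'$ is a strict instance of $\sigma$. The correct equivalent substitution here is $\{x_1\mapsto f(x_2,w)\}$, obtained by an \emph{injective} renaming of $\{y,w,x_2\}$, not by applying $\lambda$.

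The paper's proof is organized precisely to avoid this trap, in two stages (\autoref{reg:theories:crucial:lem} and \autoref{regular:theory:main:lem}). First, regularity plus a finiteness/injectivity argument on the variables $z_1^x$ shows that $\sigma$ restricted to $\Dom(\sigma)\setminus\Dom(\theta)$ essentially permutes these variables, so one may compose $\sigma$ with a genuine permutation $\pi^{-1}$ to achieve $\Var(\sigma'(x))=\{x\}$ there; since $\pi^{-1}\sigma\sim_\emptyset^V\sigma$, no $E$-reasoning is needed for this equivalence. This is the cycle analysis you allude to but do not carry out, and its outcome is an injective variable-to-variable map rather than $y\mapsto\lambda(y)$. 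Second, the part of the (normalized) $\sigma$ outside $\Dom(\theta)\cup\VRan(\theta)$ is simply erased, while $\sigma(x)$ for $x\in W$ is left untouched; regularity then guarantees that the erased variables occur in no other $\sigma(x')$, which is what makes the two witnesses $\tau_1,\tau_2$ straightforward to define and verify. In your one-step scheme the reverse witness $\nu$ would have to invert $\lambda$ on $\VRan(\sigma)\setminus W$ while fixing $W$ pointwise, and the example above shows these requirements are incompatible.
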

In \cite{BaaderG25}, we have proved the result stated in \autoref{regular:theory:prop} for the specific regular theories \ACUI, \ACU, \AC, and \EL. 
Our proofs consisted of two main steps. First, we were able to show that a substitution $\sigma'$ always exists such that\footnote{%
See, for example, Lemma~18 in \cite{BaaderG25} for \EL}
\begin{equation}\label{previous:crucial:lem}
	\sigma \sim^V_E \sigma'\ \ \mbox{ and }\ \ \Dom(\sigma') \subseteq \Dom(\theta).
\end{equation}
From this, it was relatively easy to show that $\VRan(\sigma') \subseteq \Dom(\theta) \cup \VRan(\theta)$ also holds.\footnote{%
See, for example, Lemma 19 in \cite{BaaderG25} for \EL} 
However, if we consider an arbitrary regular theory, a substitution $\sigma'$ as required in \eqref{previous:crucial:lem} need not always exist. 
For this reason, showing the result in \autoref{regular:theory:prop} (for all regular theories) requires a new proof. Before providing this new proof, we give an example of a regular theory not satisfying \eqref{previous:crucial:lem}.
\begin{exa}\label{previous:crucial:lem:not:true:ex}
	Consider a signature $\Sigma$ consisting of two unary function symbols $f$ and $g$, and let $E$ be the following equational theory:
	\[
	E:=\{f(g(x)) \id x\}.
	\]
	Obviously, $E$ is a regular theory. In addition, it is not hard to show that for every term $t\in T(\Sigma,V)$ and variable $y \in V$, if $t \id_E y$ then $t=y$ or $t$ is of the form $f(\cdots (g(y))\cdots)$.
	
	Now, let $\sigma$ and $\theta$ be the following substitutions:
	\[
	    \sigma := \{x \mapsto y,\  y \mapsto f(y)\}\ \ \ \mbox{and}\ \ \theta := \{x \mapsto g(y)\},
	\]
	which satisfy $\Dom(\sigma) \not\subseteq \Dom(\theta)$. It is easy to verify that $\{y\mapsto g(y)\}\sigma \id_E \theta$, and thus, $\sigma\leq_E^V \theta$.
	Let us then assume that there is a substitution $\sigma'$ satisfying \eqref{previous:crucial:lem}. This means that $\sigma'$ must satisfy the following:
	\begin{itemize}
		\item 
		$y \not \in \Dom(\sigma')$, and
		\item 
		there are substitutions $\lambda$ and $\tau$ such that $\lambda\sigma \id_E \sigma'$ and $\tau\sigma' \id_E \sigma$.
	\end{itemize}
	From this, we obtain that
	\[
	    \lambda(\sigma(x)) = \lambda(y) \id_E \sigma'(x) \ \ \mbox{and} \ \ \lambda(\sigma(y)) = \lambda(f(y)) \id_E \sigma'(y)=y. 
	\]
	Since $\lambda(f(y)) \id_E y$, we know that $\lambda(f(y))$ is of the form $f(\cdots (g(y))\cdots)$. Hence, the term $\lambda(y)$ must be of the form $\Xi(\cdots(g(y))\cdots)$, where $\Xi \in \{f, g\}$ may also coincide with the inner $g$. 
	Now, by using $\tau\sigma' \id_E \sigma$, we can further derive that
	\[
	\tau(\sigma'(x)) \id_E \tau(\lambda(y)) = \tau(\Xi(\cdots(g(y))\cdots)) \id_E \sigma(x) = y\ \ \mbox{and} \ \ \tau(\sigma'(y)) = \tau(y) \id_E f(y).
	\]
	By combining these equivalences, we obtain that
	\[
	  \tau(\Xi(\cdots(g(y))\cdots)) = \Xi(\cdots(g(\tau(y)))\cdots)) \id_E \Xi(\cdots(g(f(y)))\cdots)) \id_E y.
	\]
	However, as pointed out above, a term $t$ of the form $\Xi(\cdots(g(f(y)))\cdots))$ does not satisfy $t\id_E y$. Thus, this contradicts the existence of a substitution $\sigma'$ satisfying \eqref{previous:crucial:lem}. 
\end{exa}

Let us continue with the proof of \autoref{regular:theory:prop}. The first step is to show that regular theories satisfy the property stated in the following lemma. 
Intuitively, this property plays a r\^ole similar to the one of \eqref{previous:crucial:lem} in the proofs in~\cite{BaaderG25}.

\begin{lem}\label{reg:theories:crucial:lem}
	Let $E$ be a regular theory an let $\sigma, \theta$ be substitutions such that $\sigma\leq_E^V \theta$. Then there is a substitution $\sigma'$ such that
	\[
	\sigma \sim_E^V \sigma'\ \ \mbox{ and }\ \ \Var(\sigma'(x)) =\{x\} \mbox{ for all } x \in \Dom(\sigma')\setminus \Dom(\theta).
	\]
\end{lem}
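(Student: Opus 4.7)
Fix a substitution $\lambda$ witnessing $\sigma\leq_E^V\theta$, so that $\lambda(\sigma(w))\id_E\theta(w)$ for every $w\in V$, and set $Y:=V\setminus\Dom(\theta)$. My plan is to first extract, from regularity, the structural information that constrains $\lambda$ on variables occurring in $\sigma(y)$ for $y\in Y$. Since $\theta(y)=y$ for $y\in Y$, we obtain $\lambda\sigma(y)\id_E y$, and Lemma~\ref{regular:implies:all:regular:lem} then forces $\Var(\lambda\sigma(y))=\{y\}$. Unfolding the left-hand side as $\bigcup_{z\in\Var(\sigma(y))}\Var(\lambda(z))$ gives the key inclusion $\Var(\lambda(z))\subseteq\{y\}$ for each $z\in\Var(\sigma(y))$, and at least one such $z$ satisfies $y\in\Var(\lambda(z))$. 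In particular, $\lambda(\sigma(y))$ is a term whose only variable is $y$.

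Building on this, I would define $\sigma'$ by $\sigma'(x):=\sigma(x)$ for $x\in\Dom(\theta)$, $\sigma'(y):=\lambda(\sigma(y))$ for $y\in Y\cap\Dom(\sigma)$, and $\sigma'(y):=y$ for the remaining $y$. The required property $\Var(\sigma'(x))=\{x\}$ for $x\in\Dom(\sigma')\setminus\Dom(\theta)$ is then immediate from the first paragraph, since for such $x$ we have $\sigma'(x)=\lambda(\sigma(x))$ and $\Var(\lambda(\sigma(x)))=\{x\}$.

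The bulk of the proof consists in verifying $\sigma\sim_E^V\sigma'$. For $\sigma\leq_E^V\sigma'$ the natural candidate witness is a substitution $\mu$ that coincides with $\lambda$ on $\bigcup_{y\in Y\cap\Dom(\sigma)}\Var(\sigma(y))$, with the identity on $\bigcup_{x\in\Dom(\theta)}\Var(\sigma(x))$, and with the identity elsewhere; symmetrically, for $\sigma'\leq_E^V\sigma$ the candidate $\nu$ sends $y\mapsto\sigma(y)$ for $y\in Y\cap\Dom(\sigma)$ and acts as the identity elsewhere. Once these witnesses are verified to satisfy $\mu\sigma\id_E\sigma'$ and $\nu\sigma'\id_E\sigma$ respectively, the equivalence follows.

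The step I expect to be the main obstacle is resolving conflicts in the definition of $\mu$ (and analogously of $\nu$) when a single variable $z$ occurs both in $\Var(\sigma(x))$ for some $x\in\Dom(\theta)$ and in $\Var(\sigma(y))$ for some $y\in Y\cap\Dom(\sigma)$: the first constraint asks for $\mu(z)\id_E z$, while the second asks for $\mu(z)\id_E\lambda(z)$. The way to handle this, and the crucial exploitation of regularity, is to observe that the inclusion $\Var(\lambda(z))\subseteq\{y\}$ from the first paragraph leaves only a narrow set of possibilities: either $z=y$ (and then $\lambda(z)\id_E z$ can be verified directly from $\lambda\sigma(y)\id_E y$ together with the self-referential form of $\sigma(y)$), or $\lambda(z)$ is ground, in which case a refined construction—renaming such $z$ to fresh variables locally in $\sigma(y)$ before taking $\sigma'(y):=\lambda(\sigma(y))$—makes the competing demands compatible modulo $E$. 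Carrying out this case analysis carefully, and checking that the resulting $\mu$ and $\nu$ indeed certify both inequalities pointwise on all of $V$, is the technical heart of the proof.
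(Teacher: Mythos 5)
Your construction of $\sigma'$ is the step that fails, and it fails on the paper's own Example~\ref{previous:crucial:lem:not:true:ex}. Take $E=\{f(g(x))\id x\}$, $\sigma=\{x\mapsto y,\ y\mapsto f(y)\}$, $\theta=\{x\mapsto g(y)\}$, $\lambda=\{y\mapsto g(y)\}$. Your definition gives $\sigma'(x)=\sigma(x)=y$ and $\sigma'(y)=\lambda(\sigma(y))=f(g(y))\id_E y$. Any $\mu$ with $\mu\sigma\id_E\sigma'$ must then satisfy $\mu(y)\id_E y$ (from the equation at $x$) and $f(\mu(y))\id_E y$ (from the equation at $y$), which forces $f(y)\id_E y$ --- false in this theory; symmetrically, no $\nu$ with $\nu\sigma'\id_E\sigma$ exists. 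So $\sigma\not\sim_E^V\sigma'$. The same example refutes your conflict resolution: the conflicting variable is $z=y$ itself, yet $\lambda(y)=g(y)\not\id_E y$ even though $\lambda(\sigma(y))\id_E y$, so the inference ``$z=y$ implies $\lambda(z)\id_E z$'' is exactly what the non-invertibility of $g$ blocks. (Moreover, $\Var(\lambda(z))\subseteq\{y\}$ only yields the dichotomy $\Var(\lambda(z))=\{y\}$ or $\lambda(z)$ ground; it does not yield $z=y$.) The underlying problem is that by setting $\sigma'(y):=\lambda(\sigma(y))\id_E y$ you are in effect trying to empty $\Dom(\sigma')\setminus\Dom(\theta)$ modulo $E$, i.e., to establish the stronger property \eqref{previous:crucial:lem}, which the paper introduces this very example to show is unattainable for general regular theories.

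The paper's proof goes a different way and never applies $\lambda$ to $\sigma$. After renaming so that $\VRan(\sigma)\cap\Dom(\theta)=\emptyset$, it uses regularity to show that for every $x\in\Xmc:=\Dom(\sigma)\setminus\Dom(\theta)$ the term $\sigma(x)$ contains exactly one variable $z_1^x$, that this variable again lies in $\Xmc$, and that $x\mapsto z_1^x$ is injective, hence a permutation of the finite set $\Xmc$. Extending this to a permutation $\pi$ of $V$ and setting $\sigma':=\pi^{-1}\sigma$ yields a substitution that is a literal variable renaming of $\sigma$ (so $\sigma'\sim_\emptyset^V\sigma$, a fortiori $\sigma'\sim_E^V\sigma$) and satisfies $\Var(\sigma'(x))=\{x\}$. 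Note that $\sigma'(x)$ remains a possibly non-trivial term in the single variable $x$ (in the example, $\sigma'(y)=f(y)$), not the variable itself; preserving that term structure, rather than collapsing it via $\lambda$, is what makes the equivalence with $\sigma$ come for free. Your first paragraph does recover the correct starting observation ($\Var(\lambda(z))\subseteq\{y\}$ for $z\in\Var(\sigma(y))$), but the argument needs to continue with the injectivity/permutation analysis rather than with composition by $\lambda$.
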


\begin{proof}
	First, note that we can assume without loss of generality that $\VRan(\sigma)\cap\Dom(\theta) = \emptyset$. Otherwise, we can apply a permutation
	to $\sigma$ that renames the variables in $\VRan(\sigma)$ appropriately, which yields a substitution that is $\sim_E^V$-equivalent to $\sigma$
	and satisfies the required disjointness condition
	(see the proof of \autoref{types:same:lem} for how such a permutation can be obtained).
	
	Since $\sigma\leq_E^V \theta$, we know that there is a substitution $\lambda$ such that $\lambda\sigma\id_E\theta$.
	Consider the set $\Xmc$ of all variables $x$ such that $x\in\Dom(\sigma)\setminus \Dom(\theta)$. Then 
	$\lambda(\sigma(x))\id_E\theta(x) = x$ holds for all $x\in\Xmc$. 
	Hence, an application of \autoref{regular:implies:all:regular:lem} yields $\Var(\lambda(\sigma(x))) = \{x\}$, and thus %
	\[
	\Var(\sigma(x)) = \{z_1^x,\ldots, z_{n_x}^x\}\ \ \mbox{and}\ \ \Var(\lambda(z_1^x)) \cup \cdots\cup  \Var(\lambda(z_{n_x}^x)) = \{x\}.
	\]
	Let us assume without loss of generality that $\Var(\lambda(z_1^x)) = \{x\}$. This implies that $z_1^x\neq z_1^y$ for different elements $x, y$ of \Xmc.
	
	Since $\VRan(\sigma)\cap\Dom(\theta) = \emptyset$, we know that $z_i^x\not\in\Dom(\theta)$ for all $x\in\Xmc$ and $1\leq i\leq n_x$. We claim that $z_i^x\in\Dom(\sigma)$ also holds, and thus $z_i^x\in\Xmc$. 
	Otherwise, $\lambda(\sigma(z_i^x)) = \lambda(z_i^x)$ and $\Var(\lambda\sigma(z_i^x)) \subseteq \{x\}$. However, since $E$ is regular and $\lambda(\sigma(z_i^x))\id_E\theta(z_i^x) = z_i^x$, an application of \autoref{regular:implies:all:regular:lem} yields that $\Var(\lambda(\sigma(z_i^x)))=\{z_i^x\}$.
	For this not to yield a contradiction, $z_i^x = x$ must hold. But then we also have $z_i^x\in\Dom(\sigma)$ since $x\in\Dom(\sigma)$.
	We have thus shown that all the variables $z_i^x$ for $x\in \Xmc$ also belong to $\Xmc = \Dom(\sigma)\setminus\Dom(\theta)$.
	Since $z_1^x\neq z_1^y$ for different elements $x, y$ of \Xmc, this implies that the $z$-variables with index $1$ already ``use up''
	all of \Xmc, and thus $n_x=1$ holds for all $x\in \Xmc$.
	
	Consider the substitution $\tau$ defined as:
	\[
	\tau(x) := \begin{cases}
		z_1^x & \mbox{if }x \in \Dom(\sigma)\setminus\Dom(\theta),\\[.3em]
		x & \mbox{otherwise}.
	\end{cases}
	\]
	According to what we have shown above, this substitution $\tau$ is an $\Xmc$-renaming for $\Xmc = \Dom(\sigma)\setminus\Dom(\theta)$, 
        where according to Definition~2.11 in \cite{DBLP:journals/jsc/Eder85}
	a substitution $\tau$ is a $W$-renaming for a set $W\subseteq V$ if $\tau(x)$ is a variable for all $x\in W$ and $\tau$ is injective on $W$.
	Since $\tau$ is the identity on $V\setminus (\Dom(\sigma)\setminus\Dom(\theta))$ and 
	elements of $\Dom(\sigma)\setminus\Dom(\theta)$ are mapped to $\Dom(\sigma)\setminus\Dom(\theta)$ by $\tau$, 
	the substitution $\tau$ is also a $W$-renaming for $W = V$.
	By Lemma~2.12 in~\cite{DBLP:journals/jsc/Eder85}, there is a permutation $\pi$ that coincides with $\tau$ on $V$.
	Let $\sigma' := \pi^{-1}\sigma$. Then $\sigma'\sim_\emptyset^V \sigma$, and thus also $\sigma'\sim_E^V \sigma$. In addition, it is easy to see that $\Dom(\sigma') \subseteq \Dom(\sigma)$. Hence, for all $x \in \Dom(\sigma') \setminus \Dom(\theta)$ we also have $x \in \Dom(\sigma) \setminus \Dom(\theta)$.
	From this, it follows that
	\[
	\Var(\sigma(x)) = \{z_1^x\}\ \ \mbox{and}\ \ \pi(x) = \tau(x) = z_1^x.
	\]
	Thus, since $\sigma' (x) = \pi^{-1}(\sigma(x))$, we obtain $\Var(\sigma'(x)) = \{x\}$.
\end{proof}

 This lemma together with the next result shows that a substitution $\sigma'$ as required in \autoref{regular:theory:prop} always exist.

\begin{lem}\label{regular:theory:main:lem}
	Let $E$ be a regular theory and let $\sigma, \theta$ be substitutions such that 
	\[
	\sigma\leq^V_E \theta\ \ \mbox{and}\ \ \Var(\sigma(x)) =\{x\} \mbox{ for all } x \in \Dom(\sigma)\setminus \Dom(\theta).
	\] 
	Then there is a substitution $\sigma'$ such that $\sigma \sim_E^V \sigma'$ and $\Dom(\sigma') \cup \VRan(\sigma')\subseteq\Dom(\theta)\cup\VRan(\theta)$.
\end{lem}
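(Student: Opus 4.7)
The plan is to obtain $\sigma'$ by restricting $\sigma$ to the ``good'' part of its domain. Setting $W := \Dom(\theta)\cup\VRan(\theta)$ and $P := \Dom(\sigma)\setminus W$, I would define
\[
  \sigma'(y) := \begin{cases} \sigma(y) & \text{if } y\in\Dom(\sigma)\cap W,\\ y & \text{otherwise,}\end{cases}
\]
so that $\Dom(\sigma')\subseteq \Dom(\sigma)\cap W\subseteq W$ holds by construction. The remaining work is to verify $\VRan(\sigma')\subseteq W$ and $\sigma\sim_E^V\sigma'$.

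Fix a witness $\lambda$ with $\lambda\sigma\id_E\theta$. I would first use the hypothesis together with \autoref{regular:implies:all:regular:lem} to deduce the auxiliary fact that $\Var(\lambda(v))=\{v\}$ for every $v\notin\Dom(\theta)$: if $v\in\Dom(\sigma)\setminus\Dom(\theta)$, the hypothesis gives $\Var(\sigma(v))=\{v\}$, hence $\Var(\lambda\sigma(v))=\Var(\lambda(v))$, and regularity applied to $\lambda\sigma(v)\id_E v$ yields the claim, while the case $v\notin\Dom(\sigma)\cup\Dom(\theta)$ is immediate from $\lambda(v)\id_E v$. The central step is then to show $\Var(\sigma(y))\subseteq W$ for every $y\in\Dom(\sigma)\cap W$. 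For $y\in\Dom(\sigma)\cap\Dom(\theta)$, regularity yields $\Var(\lambda\sigma(y))=\Var(\theta(y))\subseteq W$, and if some $z\in\Var(\sigma(y))$ lay outside $W$, then $z\notin\Dom(\theta)$ and the auxiliary fact would force $\{z\}=\Var(\lambda(z))\subseteq\Var(\lambda\sigma(y))\subseteq W$, a contradiction; and for $y\in\Dom(\sigma)\cap(\VRan(\theta)\setminus\Dom(\theta))$ the hypothesis already gives $\Var(\sigma(y))=\{y\}\subseteq W$ directly. This yields $\VRan(\sigma')\subseteq W$.

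To finish, I would prove $\sigma\sim_E^V\sigma'$ by exhibiting a substitution on each side. For $\sigma\leq_E^V\sigma'$, take $\mu$ to equal $\lambda$ on $P$ and the identity elsewhere: on $y\in\Dom(\sigma)\cap W$ the variables of $\sigma(y)$ lie in $W$ and hence outside $P$, so $\mu\sigma(y)=\sigma(y)=\sigma'(y)$; on $y\in P$ we have $\mu\sigma(y)=\lambda\sigma(y)\id_E\theta(y)=y=\sigma'(y)$ since $y\notin\Dom(\theta)$; and on every remaining $y$ both sides reduce to $y$. For $\sigma'\leq_E^V\sigma$, take $\mu'(y):=\sigma(y)$ for $y\in P$ and the identity elsewhere, and verify $\mu'\sigma'=\sigma$ by a straightforward case distinction on whether $y$ lies in $\Dom(\sigma)\cap W$, in $P$, or outside $\Dom(\sigma)$. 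The main obstacle is the central step of the second paragraph: regularity must be combined with the hypothesis on $\Dom(\sigma)\setminus\Dom(\theta)$ in a genuinely joint way to rule out that variables outside $W$ could appear in the images of ``good'' domain variables. Once that is in place, the construction of $\sigma'$ and the equivalence checks reduce to bookkeeping.
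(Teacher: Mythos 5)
Your proposal is correct and follows essentially the same route as the paper: you construct the same substitution $\sigma'$ (the restriction of $\sigma$ to $\Dom(\sigma)\cap(\Dom(\theta)\cup\VRan(\theta))$) and witness the equivalence $\sigma\sim_E^V\sigma'$ with the same two substitutions ($\lambda$ on the discarded part of the domain, resp.\ $\sigma$ on it, and the identity elsewhere), relying on the same combination of regularity and the hypothesis $\Var(\sigma(x))=\{x\}$. The only differences are organizational: you prove a slightly more general auxiliary fact about $\lambda$ and obtain $\VRan(\sigma')\subseteq\Dom(\theta)\cup\VRan(\theta)$ directly from your central step, where the paper derives the analogous non-occurrence property only for variables in $\Dom(\sigma)\setminus(\Dom(\theta)\cup\VRan(\theta))$ and handles the variable range by a separate contradiction argument at the end.
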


\begin{proof}
	Let $\lambda$ be such that $\lambda\sigma\id_E\theta$. Consider the set $\Xmc$ of all variables $x$ such that $x \in \Dom(\sigma) \setminus (\Dom(\theta)\cup\VRan(\theta))$. Then, $\lambda(\sigma(x)) \id_E \theta(x) = x$ holds for all $x \in \Xmc$. Hence, an application of \autoref{regular:implies:all:regular:lem} yields that $\Var(\lambda(\sigma(x))) = \{x\}$.	
	Due to $\Var(\sigma(x)) =\{x\}$ we have that $\Var(\lambda(\sigma(x)))=\Var(\lambda(x))$. It follows that $\Var(\lambda(x)) = \{x\}$ for all $x \in \Xmc$. From this, we can conclude that the following holds (for all $x\in \Xmc$):
	\begin{equation}\label{x:not:in:vars:y}
		x \not \in \Var(\sigma(y))\ \ \mbox{for all variables } y \neq x.
	\end{equation}
	In fact, if $x \in \Var(\sigma(y))$, then $\Var(\lambda(x)) = \{x\}$ implies that $x \in 
	\Var(\lambda(\sigma(y)))$. Hence, since $\lambda(\sigma(y)) \id_E \theta (y)$ and $E$ is regular, \autoref{regular:implies:all:regular:lem} tells us that $x \in \Var(\theta(y))$. If $y\in\Dom(\theta)$, then this contradicts the fact that $x \not\in\VRan(\theta)$; otherwise, it contradicts the fact that $y \neq x$.
	
	Let us define $\sigma'$ as the following substitution:
	\[
	\sigma'(x) := \begin{cases}
		x & \mbox{if } x \in \Xmc,\\[.3em]
		\sigma(x) & \mbox{otherwise}. 
	\end{cases}
	\]
	First, we show that $\sigma \sim_E^V \sigma'$. It suffices to find two substitutions $\tau_1$ and $\tau_2$ such that $\tau_1\sigma\id_E\sigma'$ and $\tau_2\sigma' \id_E \sigma$. Define $\tau_1$ and $\tau_2$ as
	\begin{align*}
		\tau_1(x) :=
		\begin{cases}
			\lambda(x) & \mbox{if } x\in \Xmc\\[.3em]
			x & \mbox{otherwise}. 
		\end{cases}
		&&
		\tau_2(x):=\begin{cases}
			\sigma(x) & \mbox{if } x\in \Xmc\\[.3em]
			x & \mbox{otherwise}. 
		\end{cases}
	\end{align*}
	To show that these substitutions satisfy the required properties, we make the following case distinction:
	\begin{itemize}
		\item
		$x \in \Xmc$. 
		We know that $\Var(\sigma(x)) = \{x\}$ and $\tau_1(x)=\lambda(x)$, and thus $\tau_1(\sigma(x)) = \lambda(\sigma(x))$. By the definition of $\sigma'$, we consequently obtain
		\[
		\tau_1(\sigma(x)) = \lambda(\sigma(x)) \id_E \theta(x) = x = \sigma'(x).
		\]
		In addition, the definitions of $\tau_2$ and $\sigma'$ yield $\tau_2(\sigma'(x)) = \tau_2(x) = \sigma(x)$.
		\item
		$y \not\in \Xmc$. Then \eqref{x:not:in:vars:y} yields $x \not \in \Var(\sigma(y))$ for all $x \in \Xmc$. Hence, since $\sigma(y)=\sigma'(y)$ and $\tau_1$ and $\tau_2$ behave as the identity on $V \setminus \Xmc$, it follows that
		\[
		\tau_1(\sigma(y)) = \sigma(y) = \sigma'(y) = \tau_2(\sigma'(y)).
		\]     
	\end{itemize}
	This proves that $\tau_1(\sigma(x))\id_E\sigma'(x)$ and $\tau_2(\sigma'(x)) \id_E \sigma(x)$ for all $x \in V$. Thus, we have shown that $\tau_1\sigma\id_E\sigma'$ and $\tau_2\sigma' \id_E \sigma$, as required.
	
	Second, we prove that $\Dom(\sigma') \cup \VRan(\sigma') \subseteq \Dom(\theta) \cup \VRan(\theta)$.
	The definition of $\sigma'$ clearly yields $\Dom(\sigma') = \Dom(\sigma) \setminus \Xmc$, and thus %
        (since $\Xmc = \Dom(\sigma) \setminus (\Dom(\theta)\cup\VRan(\theta))$)
	\[
	\Dom(\sigma') \subseteq \Dom(\theta) \cup \VRan(\theta).
	\]
	Moreover, for all $z \in \VRan(\sigma')$ there is $y \not \in \Xmc$ such that $z$ occurs in $\sigma(y)$. Then, \eqref{x:not:in:vars:y} implies that $z \not \in \Xmc$. This shows that
	\[
	\VRan(\sigma') \subseteq \VRan(\sigma) \ \ \mbox{and}\ \ \VRan(\sigma') \cap \Xmc = \emptyset.
	\]
	Assume there is $z \in \VRan(\sigma')$, but $z \not\in \Dom(\theta) \cup \VRan(\theta)$. Then, $\Dom(\sigma') \subseteq \Dom(\theta) \cup \VRan(\theta)$ implies that $z \not \in \Dom(\sigma')$. Hence, since $\Dom(\sigma') = \Dom(\sigma) \setminus \Xmc$ and $z\not\in \Xmc$, we have that $z \not \in \Dom(\sigma)$. 
	This means that $\lambda(z) = \lambda(\sigma(z)) \id_E \theta(z) = z$. Hence, an application of \autoref{regular:implies:all:regular:lem} yields that $\lambda(z)$ contains $z$. Since $z\in \VRan(\sigma)$, there is a variable $x\in\Dom(\sigma)$ such that $\sigma(x)$ contains $z$ and $x \neq z$. This implies that $\lambda(\sigma(x))$ contains $z$.
	Therefore, since $\lambda(\sigma(x)) \id_E \theta(x)$ and $E$ is regular, it follows from \autoref{regular:implies:all:regular:lem} that $\theta(x)$ contains $z$. Then, $x \neq z$ implies that $x \in \Dom(\theta)$. Hence, $z \in \Var(\theta(x))$ implies that $z \in \VRan(\theta)$, which contradicts our assumptions on $z$.
	Thus, we have shown that $\VRan(\sigma') \subseteq \Dom(\theta) \cup \VRan(\theta)$, which concludes the proof of this lemma.
\end{proof}

Thus, given two substitutions $\sigma$ and $\theta$ such that $\sigma\leq^V_E \theta$, the consecutive applications of \autoref{reg:theories:crucial:lem} and \autoref{regular:theory:main:lem} yield a substitution $\sigma'$ satisfying the properties required in \autoref{regular:theory:prop}.
This concludes the proof of our crucial proposition.

\subsection{Finite theories}

In \cite{BurckertHS89}, it was shown that finite equational theories cannot have unification type zero w.r.t.\ the restricted instantiation preorder. 
Our proof that the same holds for the unrestricted case
is similar to the one employed in \cite{BurckertHS89} for the restricted case, but it requires several adjustments to deal with the unrestricted 
instantiation preorder $\leq_E^V$. We will make use of the following simple observation.
\begin{rem}\label{finite:set:rem}
	Let $t \in T(\Sigma,V)$ and $X\subseteq V$ be a finite set of variables. Then, there are only finitely many terms $s \in T(\Sigma,X)$ such that $\theta(s) \id_\emptyset t$ for some substitution $\theta$.
\end{rem}

The following lemma shows that finite equational theories are  Noetherian w.r.t.\ $\leq_E^V$. It is the analog, for the unrestricted case, of item i) in Lemma 3.3.8 of \cite{BurckertHS89}.%

\begin{lem}\label{finite:implies:noetherian:lem}
	If a theory is finite, then it is Noetherian w.r.t.\ the unrestricted instantiation preorder.
\end{lem}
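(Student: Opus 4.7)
The plan is a proof by contradiction. Suppose there is an infinite strictly descending chain $\sigma_1 >_E^V \sigma_2 >_E^V \sigma_3 >_E^V \cdots$ in the unrestricted instantiation preorder. Since $E$ is finite, \autoref{finite:is:regular:cfree:lem} tells us that $E$ is regular, so \autoref{regular:theory:prop} is applicable. Set $X := \Dom(\sigma_1) \cup \VRan(\sigma_1)$, a finite set of variables. For each $i \geq 2$, transitivity of $\leq_E^V$ along the chain gives $\sigma_i \leq_E^V \sigma_1$, and \autoref{regular:theory:prop} then supplies a substitution $\sigma_i'$ with $\sigma_i' \sim_E^V \sigma_i$ and $\Dom(\sigma_i') \cup \VRan(\sigma_i') \subseteq X$. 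Writing $\sigma_1' := \sigma_1$ and substituting each $\sigma_i'$ for $\sigma_i$ in the chain preserves every preorder relation (equivalent substitutions being interchangeable w.r.t.\ $\leq_E^V$), so the chain $\sigma_1' >_E^V \sigma_2' >_E^V \sigma_3' >_E^V \cdots$ is again strictly descending, and every $\sigma_i'$ has domain and variable range contained in $X$.

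Next, for each $i \geq 2$ fix a substitution $\lambda_i$ witnessing $\sigma_i' \leq_E^V \sigma_1$, so $\lambda_i(\sigma_i'(x)) \id_E \sigma_1(x)$ for every $x \in V$. Fix an arbitrary $x \in X$. Since $E$ is finite, the $\id_E$-equivalence class of $\sigma_1(x)$ is a finite set of terms, and every $\lambda_i(\sigma_i'(x))$ belongs to it. Applying \autoref{finite:set:rem} to each of the finitely many terms in this class yields only finitely many $s \in T(\Sigma,X)$ that can be instantiated to any of them; since $\sigma_i'(x) \in T(\Sigma,X)$ and $\lambda_i(\sigma_i'(x))$ is one of these targets, the term $\sigma_i'(x)$ ranges over only finitely many values as $i$ varies. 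Because $X$ is finite and each $\sigma_i'$ is determined by its images on $X$ (acting as the identity elsewhere), only finitely many distinct substitutions occur among the $\sigma_i'$. By the pigeonhole principle there exist indices $i < j$ with $\sigma_i' = \sigma_j'$, contradicting the strictness of the chain between positions $i$ and $j$.

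The conceptual difficulty the proof has to overcome is that in the unrestricted setting a descending chain could a priori draw on arbitrarily many fresh variables and thus escape any finite term-space; the rôle of \autoref{regular:theory:prop} is precisely to confine the whole tail of the chain to the finite term-space $T(\Sigma,X)$ determined by $\sigma_1$. Once this confinement is secured, finiteness of the theory together with \autoref{finite:set:rem} immediately bounds the number of syntactic possibilities for the $\sigma_i'$, and the chain cannot be infinite.
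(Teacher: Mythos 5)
Your proposal is correct and follows essentially the same route as the paper's proof: reduce to a chain of $\sim_E^V$-equivalent substitutions confined to $T(\Sigma,X)$ via \autoref{regular:theory:prop}, then combine finiteness of the $\id_E$-class of $\sigma_1(x)$ with \autoref{finite:set:rem} and the pigeonhole principle to force two equal substitutions, contradicting strictness. The only cosmetic difference is that you explicitly form a new strictly descending chain of the $\sigma_i'$, whereas the paper argues directly that $\sigma_i=\sigma_j$ yields $\theta_i\sim_E^V\theta_j$; both are sound.
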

\begin{proof}
	Let $E$ be a finite equational theory. By \autoref{finite:is:regular:cfree:lem}, we know that $E$ is also a regular theory. 
	Suppose that $E$ is not Noetherian, i.e., there exists a strictly decreasing chain $\theta_1 >_E^V \theta_2>_E^V \theta_3 >_E^V \cdots$ of substitutions. Since $\leq_E^V$ is a transitive relation, we have that $\theta_i \leq^V_E \theta_1$ for all $i \geq 2$. Hence, since $E$ is regular, we can apply \autoref{regular:theory:prop} to obtain substitutions $\sigma_2, \sigma_3,\ldots$ such that the following holds for all $i \geq 2$:
	\[
	\theta_i \sim_E^V \sigma_i\ \ \mbox{and}\ \ \Dom(\sigma_i) \cup \VRan(\sigma_i) \subseteq \Dom(\theta_1) \cup \VRan(\theta_1).
	\]
		Based on this, we prove that there are indices $i, j$ such that $i \neq j$ and $\sigma_i(x) = \sigma_j(x)$ for all $x \in V$. Note that this implies that $\theta_i \sim_E^V \theta_j$. Hence, since $i \neq j$, this would contradict the assumption that the above chain of $\theta_i$'s is strictly decreasing.
	
	From $\Dom(\sigma_i) \subseteq \Dom(\theta_1) \cup \VRan(\theta_1)$, we obtain that all substitutions $\sigma_i$ coincide on $V \setminus (\Dom(\theta_1) \cup \VRan(\theta_1))$.
	Now, let $x \in \Dom(\theta_1) \cup \VRan(\theta_1)$. We make the following observations:
	\begin{itemize}
		\item
		From $\theta_i \leq^V_E \theta_1$ and $\theta_i \sim_E^V \sigma_i$, it follows that there is a substitution $\lambda_i$ such that $\lambda_i \sigma_i \id_E^V  \theta_1 $. This means that, for all $i \geq 2$, the following holds:
		\[
		\lambda_i (\sigma_i(x)) \id_E \theta_1(x).
		\]
		\item
		Consider the following set of terms:
		\[
		S := \{\lambda_i(\sigma_i(x)) \mid i \geq 2\}.
		\]
		Since $E$ is a finite equational theory, the $\id_E$-equivalence class of $\theta_1(x)$ is finite. Hence, the set $S$ is also finite.
		\item
		Since $\VRan(\sigma_i) \subseteq \Dom(\theta_1)\cup\VRan(\theta_1)$ for all $i \geq 2$, we know that $\sigma_i(x) \in T(\Sigma,X)$ for all $i \geq 2$, where $X=\Dom(\theta_1)\cup\VRan(\theta_1)$ is a finite set of variables. Hence, since $\lambda_i(\sigma_i(x)) \in S$ and $S$ is a finite set, it follows from \autoref{finite:set:rem} that $\{\sigma_i(x) \mid i \geq 2\}$ is a finite set.
	\end{itemize}
	Thus, since all substitutions $\sigma_i$ coincide on $V \setminus (\Dom(\theta_1) \cup \VRan(\theta_1))$ and $\Dom(\theta_1) \cup \VRan(\theta_1)$ is a finite set, the third observation implies that $\sigma_i = \sigma_j$ for some $i\neq j$. This, as argued above, contradicts the existence of an infinite strictly decreasing chain.
	Thus, we have proved our claim.
\end{proof}

By \autoref{noeth:prop}, this lemma implies that the unrestricted unification type of a finite equational theory cannot be zero. This was already established for the finite theories $\emptyset$, $\Commu$, and $\Assoc$ in Section~\ref{general:results:sect}, where \autoref{stays:same:cor} also yields their exact unrestricted unification type. 
In \autoref{finite:theories:prop}, we also identified \AC and \D as finite theories. We have shown, in \autoref{AC:cor} for \AC and in \autoref{EL:D:cor} for \D, that the unrestricted unification type of these two theories for elementary unification is at least infinitary.

\begin{thm}\label{Distr:AC:thm}
	The unrestricted unification type of \AC and \D for elementary unification is infinitary.
\end{thm}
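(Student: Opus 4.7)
The plan is to assemble this theorem as an immediate synthesis of results already established in the excerpt, with essentially no new work required beyond citing and combining them correctly. The strategy is a two-sided squeeze: the previous section already gave the "at least infinitary" lower bound for both theories, so it remains only to rule out type zero, i.e., to prove an "at most infinitary" upper bound.

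First, I would recall that by \autoref{finite:theories:prop}, both \AC and \D are finite equational theories. Then, applying \autoref{finite:implies:noetherian:lem} (the main technical result of this subsection), both theories are Noetherian with respect to the unrestricted instantiation preorder $\leq_E^V$. By \autoref{noeth:prop}, a Noetherian theory cannot have unrestricted unification type zero, so neither \AC nor \D can be of type zero.

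Next, I would invoke the corresponding lower bounds: \autoref{AC:cor} shows that the unrestricted unification type of \AC for elementary unification is at least infinitary, and \autoref{EL:D:cor} gives the same conclusion for \D. Combining "at least infinitary" with "not type zero," and ruling out unitary or finitary (both excluded by the lower bounds from the previous section), leaves only the type infinitary, which is exactly the claim.

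The only step that carries genuine content is \autoref{finite:implies:noetherian:lem}, but this has already been proven in the subsection, so the proof of the present theorem reduces to a short citation chain. There is no real obstacle here; the main point is simply to state clearly which earlier results deliver the upper and lower bounds and to note that together they pinpoint type infinitary exactly.
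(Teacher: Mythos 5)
Your proposal is correct and follows exactly the same route as the paper: it combines \autoref{finite:theories:prop} with \autoref{finite:implies:noetherian:lem} and \autoref{noeth:prop} to exclude type zero, and cites \autoref{AC:cor} and \autoref{EL:D:cor} for the matching lower bounds. Nothing is missing; the citation chain is precisely the one the paper uses.
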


\subsection{The unrestricted unification types of locally finite theories and of \EL}\label{EL:sect}

In universal algebra~\cite{DBLP:books/daglib/0067494}, an equational theory $E$ is called locally finite if all finitely generated algebras that are models of $E$ are finite. In terms of $\id_E$-equivalence classes, this condition can be formulated as follows.

\begin{defi}
Let $E$ be an equational theory over the signature $\Sigma$, and let $[t]_{\id_E}$ denote the $\id_E$-equivalence class of the term $t\in T(\Sigma,V)$. 
Then $E$ is called \emph{locally finite} if the set $\{[t]_{\id_E} \mid t\in T(\Sigma,X)\}$ is finite for all finite subsets $X$ of $V$.
\end{defi}

Note that, in a certain sense, finite and locally finite theories are opposites of each other. Finite means that each equivalence class $[t]_{\id_E}$ consists of finitely many terms, and
thus there must be many of them. Locally finite means that there are only finitely many classes, and thus the classes must be large. In particular,
if $\Sigma$ contains a function symbol of arity $> 0$, then $T(\Sigma,X)$ is infinite for all finite sets $\emptyset\neq X\subseteq V$. Thus, if $E$ is locally finite,
then at least one equivalence class of terms in $T(\Sigma,X)$ must be infinite. Conversely, if $E$ is finite, then there must be infinitely many
equivalence classes. Thus, finite and locally finite are mutually exclusive.

Nevertheless, like regular finite theories, the locally finite theories that are regular are Noetherian.

\begin{prop}\label{locally:finite:is:noeth:prop}
If $E$ is a regular, locally finite theory over the signature $\Sigma$, then $E$ is Noetherian. 
\end{prop}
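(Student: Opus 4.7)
The plan is to mimic the proof of \autoref{finite:implies:noetherian:lem} almost verbatim, replacing the finiteness-of-each-equivalence-class argument with a local-finiteness argument, while exploiting the fact that \autoref{regular:theory:prop} already applies to arbitrary regular theories.

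Concretely, I would assume for contradiction that there is an infinite strictly decreasing chain $\theta_1 >_E^V \theta_2 >_E^V \theta_3 >_E^V \cdots$. Since $\theta_i \leq_E^V \theta_1$ for all $i \geq 2$ and $E$ is regular, \autoref{regular:theory:prop} yields substitutions $\sigma_i$ with $\theta_i \sim_E^V \sigma_i$ and $\Dom(\sigma_i) \cup \VRan(\sigma_i) \subseteq X$, where $X := \Dom(\theta_1) \cup \VRan(\theta_1)$ is a \emph{finite} set of variables. In particular $\sigma_i(y) = y$ for every $y \in V \setminus X$, and $\sigma_i(x) \in T(\Sigma,X)$ for every $x \in X$.

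The key step is then to invoke local finiteness: the set $\{[t]_{\id_E} \mid t \in T(\Sigma,X)\}$ is finite. Hence, for each fixed $x \in X$, the equivalence class $[\sigma_i(x)]_{\id_E}$ takes only finitely many values as $i$ varies. Since $X$ itself is finite, the tuple $([\sigma_i(x)]_{\id_E})_{x \in X}$ ranges over a finite set as $i$ varies over the infinite index set $\{2,3,4,\ldots\}$. By the pigeonhole principle, there exist indices $i \neq j$ such that $\sigma_i(x) \id_E \sigma_j(x)$ for every $x \in X$. Combined with $\sigma_i(y) = y = \sigma_j(y)$ for $y \notin X$, this gives $\sigma_i \id_E^V \sigma_j$, hence $\sigma_i \sim_E^V \sigma_j$ (take $\lambda$ to be the identity in both directions), and therefore $\theta_i \sim_E^V \theta_j$, contradicting strictness of the chain.

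I do not expect a real obstacle here: \autoref{regular:theory:prop} does all the work of reducing to substitutions with finite, controlled domain and variable range, and the pigeonhole step is straightforward once local finiteness is applied to the finite set $X$. The only subtle point worth spelling out is that the pigeonhole argument must be performed on the tuple of $\id_E$-classes indexed by $x \in X$ simultaneously (not class by class), which is legitimate precisely because $X$ is finite and each coordinate has finitely many possible values.
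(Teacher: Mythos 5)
Your proof is correct and follows essentially the same route as the paper: \autoref{regular:theory:prop} reduces everything to substitutions whose domain and variable range lie in the finite set $X=\Dom(\theta_1)\cup\VRan(\theta_1)$, and local finiteness then bounds the number of such substitutions up to $\id_E$-equivalence. The paper phrases this as ``the set $\{\sigma\mid\sigma\leq_E^V\theta\}$ is finite up to $\sim_E^V$-equivalence,'' while you run the equivalent pigeonhole argument on an assumed infinite chain (mirroring the proof of \autoref{finite:implies:noetherian:lem}); the substance is identical.
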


\begin{proof}
Consider the set $\{\sigma \mid \sigma\leq^V_E \theta\}$ for a given substitution $\theta$. Since $E$ is regular, an application of \autoref{regular:theory:prop} shows that, up to $\sim_E$-equivalence, one can restrict the attention to substitutions $\sigma$ such that $\Dom(\sigma) \cup \VRan(\sigma) \subseteq \Dom(\theta) \cup \VRan(\theta)$.
Let $S$ be the subset of such substitutions and $X=\Dom(\theta) \cup \VRan(\theta)$. Then $\sigma(x) \in T(\Sigma,X)$ for all $\sigma \in S$ and $x \in \Dom(\sigma)$. Therefore, since $X$ is a finite set and $\Dom(\sigma)$ is contained in the finite set $\Dom(\theta) \cup \VRan(\theta)$, the fact that $E$ is locally finite implies that $S$ is finite up to $\id_E$-equivalence. Overall, we thus can conclude that $\{\sigma \mid \sigma\leq^V_\AI \theta\}$ must be finite up to $\sim_E$-equivalence.
Thus, there cannot be an infinite, strictly decreasing chain w.r.t.\ $\leq^V_E$ issuing from $\theta$. Since $\theta$ was an arbitrary substitution, this
proves that $E$ is Noetherian.
\end{proof}

Among the theories introduced in Section~\ref{classes:equational:theories:sec}, the following theories are locally finite.

\begin{prop}
The theories \AI, \ACI, and \ACUI are locally finite.
\end{prop}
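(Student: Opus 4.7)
The plan is to treat the three theories separately, exploiting the fact that each has a well-understood normal form that allows one to count equivalence classes of terms over a finite set of variables.

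For \ACI, we would first observe that, modulo associativity and commutativity, any term $t\in T(\{f\},X)$ can be brought into the flattened form $f(x_1,f(x_2,\ldots f(x_{n-1},x_n)\ldots))$ where $x_1,\ldots,x_n$ is some enumeration (with repetition) of the variable occurrences in $t$; by commutativity, this form depends only on the multiset of occurrences. Applying idempotency $f(x,x)\id x$ repeatedly then reduces this multiset to the underlying set $\Var(t)\subseteq X$. Hence the $\id_\ACI$-equivalence classes of terms in $T(\{f\},X)$ are in bijection with the non-empty subsets of $X$, and there are exactly $2^{|X|}-1$ of them. For \ACUI, the same argument applies, except that the constant $\unit$ provides one additional equivalence class (or, equivalently, the empty subset is allowed), giving $2^{|X|}$ classes. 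In both cases the set of classes is finite whenever $X$ is, establishing local finiteness.

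The case of \AI is the only one that requires a nontrivial input. Here we would invoke the classical Green--Rees theorem from semigroup theory, which states that every finitely generated free band (i.e., free idempotent semigroup) is finite. Since, by construction, the quotient $T(\{f\},X)/{\id_\AI}$ is precisely the free band generated by $X$, this theorem directly yields that the set $\{[t]_{\id_\AI}\mid t\in T(\{f\},X)\}$ is finite for every finite $X\subseteq V$. This is the main obstacle in the sense that, unlike for \ACI and \ACUI, no elementary normal form argument suffices; one really needs the nontrivial (but standard) combinatorial fact that bands are locally finite.

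Putting the three cases together, we obtain the claim. The expected difficulty is essentially bibliographical rather than technical: \ACI and \ACUI are handled by direct inspection of normal forms, while \AI is reduced to a well-known external result, so the proof itself should be short.
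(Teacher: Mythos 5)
Correct, and essentially the same approach as the paper: the paper also handles \ACI and \ACUI via the fact that equivalence modulo these theories is determined by the set of variables occurring in a term (it cites this characterization rather than re-deriving it through the flattened normal form as you do), and it handles \AI by citing McLean's local finiteness result for idempotent semigroups, which is the same classical fact as the Green--Rees theorem you invoke.
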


\begin{proof}
For \ACI and \ACUI, this is an easy consequence of the fact that equivalence modulo these theories can be characterized by the set of variables the terms contain,
i.e., $s$ and $t$ are equivalent modulo \ACI or \ACUI iff $\Var(s) = \Var(t)$ \cite{DBLP:journals/tcs/BaaderB88}. 
For \AI, local finiteness was show by McLean~\cite{DMcLean54} (see also \cite{SiSz82}).
\end{proof}

Together with \autoref{noeth:prop}, the previous two propositions show that the unrestricted unification types of $\AI$, $\ACI$, and $\ACUI$ cannot be zero.
Since we already know that the unrestricted unification types of these theories are at least infinitary 
(see \autoref{EL:D:cor}, \autoref{ACI:cor}, and \autoref{ACUI:cor}),
this yields the following theorem.

\begin{thm}\label{AI:ACI:thm}
	The unrestricted unification type of $\AI$, $\ACI$, and $\ACUI$ for elementary unification is infinitary.
\end{thm}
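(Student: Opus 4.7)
The plan is to combine the lower bounds already established in Section~\ref{lbounds:sect} with the upper bound provided by Noetherianity. The lower bound pieces are \autoref{EL:D:cor} for \AI, \autoref{ACI:cor} for \ACI, and \autoref{ACUI:cor} for \ACUI, each of which tells us that the unrestricted unification type is at least infinitary, i.e., it is either infinitary or zero. It thus remains only to exclude type zero.

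For the upper bound, I would invoke \autoref{noeth:prop}, which says that a Noetherian theory cannot have unrestricted unification type zero. By \autoref{locally:finite:is:noeth:prop}, every regular, locally finite theory is Noetherian w.r.t.\ $\le_E^V$. Since \AI, \ACI, and \ACUI are all regular (immediate from inspection of their defining identities, which all satisfy $\Var(s)=\Var(t)$) and locally finite (by the preceding proposition, using the variable-set characterization of $\id_\ACI$ and $\id_\ACUI$ and McLean's result for \AI), these three theories are Noetherian. Therefore none of them can have unrestricted type zero.

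Putting the two bounds together, the unrestricted unification type of each of \AI, \ACI, and \ACUI for elementary unification is exactly infinitary. There is no real obstacle here; the proof is essentially a bookkeeping step that stitches together results already proved. The only point worth stating carefully is that the locally finite/Noetherian argument, via \autoref{regular:theory:prop}, genuinely requires regularity, which is why we have to record regularity explicitly for all three theories before citing \autoref{locally:finite:is:noeth:prop}.
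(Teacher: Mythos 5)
Your proposal is correct and follows exactly the paper's own argument: the lower bounds come from \autoref{EL:D:cor}, \autoref{ACI:cor}, and \autoref{ACUI:cor}, and the upper bound comes from combining \autoref{noeth:prop} with \autoref{locally:finite:is:noeth:prop} and the local finiteness of \AI, \ACI, and \ACUI, using regularity of all three theories. Nothing is missing.
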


The theory $\EL$ is regular, but not locally finite since the terms in the sequence
\[
   x,\ \ \exists r.x,\ \ \exists r.(\exists r.x),\ \ \exists r.(\exists r.(\exists r.x)), \ldots 
\]
are pairwise not $\id_\EL$-equivalent. Thus, \autoref{locally:finite:is:noeth:prop} cannot be applied to \EL. However, a proof very similar
to the one of \autoref{locally:finite:is:noeth:prop} can be used to show that $\EL$ is Noetherian.
The reason is that one can actually bound the role depth of the terms relevant in the proof.

The \emph{role depth $\rd(C)$} of an \EL concept (viewed as a term with variables from $V$) is the the maximal nesting of existential restrictions,
i.e., 
\begin{itemize}
	\item
	$\rd(x) := \rd(\top) := 0$ for $x\in V$,
	\item
	$\rd(C\sqcap D) := \max\{\rd(C),\rd(D)\}$ and $\rd(\exists r.C) := 1+\rd(C)$.
\end{itemize}
 
The following lemma is an easy consequence of the fact that $\idEL$ preserves the role depth as well as the set of role names occurring in a concept description, and applying a substitution to a concept description can only increase the role depth and add role names, but not decrease the role depth or remove role names.

\begin{lem}\label{bounds:lem}
	Let $\sigma, \theta$ be substitutions such that $\sigma\leqEL \theta$. 
	Then the following holds for all $x\in V$: the role depth of $\sigma(x)$ is bounded by the role depth of $\theta(x)$, and the role names occurring in $\sigma(x)$ also occur in $\theta(x)$.
\end{lem}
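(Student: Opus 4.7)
The strategy is to factor the lemma through two structural invariants of $\EL$. The first (Fact A) is that $\idEL$ preserves both the role depth $\rd(\cdot)$ and the set of role names occurring in a concept. The second (Fact B) is that applying any substitution $\mu$ to an $\EL$ concept $C$ can only increase $\rd(C)$ and can only enlarge its set of role names. Given these, the lemma is one line: from $\sigma \leqEL \theta$ we obtain $\lambda$ with $\lambda(\sigma(x)) \idEL \theta(x)$ for every $x \in V$, so Fact A on the right and Fact B applied to $\lambda(\sigma(x))$ yield $\rd(\sigma(x)) \leq \rd(\lambda(\sigma(x))) = \rd(\theta(x))$, and analogously the claimed inclusion of role names.

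For Fact A, the plan is to observe that both $\rd$ and the map ``set of role names occurring in'' are defined by structural recursion using only $\max$, $+1$, and set union, so they automatically descend to the congruence closure of any set of identities whenever they agree on every generating identity. Hence it suffices to check each axiom of $\bSLmO$ individually: commutativity, associativity, and idempotency of $\sqcap$ are immediate; $x \sqcap \top \idEL x$ uses $\rd(\top)=0$ and the fact that $\top$ contains no role names; and for the axiom scheme $\exists r.x \sqcap \exists r.(x \sqcap y) \idEL \exists r.(x \sqcap y)$ a direct computation shows that both sides have role depth $1 + \max\{\rd(x),\rd(y)\}$ and carry the set $\{r\}$ together with the role names appearing in $x$ and $y$.

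For Fact B, a routine structural induction on $C$ suffices. The base cases are $C = x$, where $\rd(\mu(x)) \geq 0 = \rd(x)$, and $C = \top$, where $\mu(C) = C$; the inductive cases $C = C_1 \sqcap C_2$ and $C = \exists r.C_1$ go through because $\max$ and $+1$ are monotone in the induction hypothesis. The corresponding role-name statement is handled identically, with set union in place of $\max$.

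I expect no substantive obstacle. The only mild subtlety is to remember that compositional invariants transfer automatically from the generating identities to the full congruence $\idEL$, so there is no need to reason about arbitrary equational derivations; this is precisely what justifies calling the lemma an ``easy consequence'' of the role-depth/role-name preservation properties highlighted in its statement.
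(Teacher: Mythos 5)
Your proposal is correct and matches the paper's argument: the paper derives the lemma from exactly your Fact A ($\idEL$ preserves role depth and the set of role names) and Fact B (applying a substitution can only increase role depth and add role names), composed via the witnessing substitution $\lambda$ with $\lambda\sigma \idEL \theta$. Your axiom-by-axiom verification of Fact A and the structural induction for Fact B simply fill in details the paper leaves implicit.
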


As a consequence of \autoref{regular:theory:prop} and \autoref{bounds:lem} we know that, for a given substitution $\theta$, the set of more general
substitutions is finite up to the equivalence relation $\simEL$. In fact, \autoref{regular:theory:prop} shows that one can restrict the attention
to substitutions $\sigma$ satisfying $\Dom(\sigma)\cup\VRan(\sigma)\subseteq\Dom(\theta)\cup\VRan(\theta)$.
In addition, for all $x\in \Dom(\sigma)$, the concept descriptions $\sigma(x)$ are built using role names occurring in $\theta(x)$ as well
as variables from $\Dom(\theta)\cup\VRan(\theta)$, and have a role depth that is bounded by the one of $\theta(x)$. 
It is well-known~\cite{DBLP:journals/corr/abs-1006-2289} that there are up to $\idEL$-equivalence only finitely many $\EL$ concept descriptions
satisfying these properties.

\begin{lem}\label{finite:EL:more:gen:lem}
	For a given substitution $\theta$, the set $\{\sigma \mid \sigma\leqEL \theta\}$ is finite up to $\simEL$-equivalence.
\end{lem}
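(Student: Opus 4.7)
The plan is to combine the three ingredients listed just before the lemma statement. Since $\EL$ is regular, \autoref{regular:theory:prop} applies: given any $\sigma$ with $\sigma \leqEL \theta$, there is a substitution $\sigma'$ with $\sigma \simEL \sigma'$ and $\Dom(\sigma') \cup \VRan(\sigma') \subseteq \Dom(\theta) \cup \VRan(\theta)$. Therefore, when counting the set $\{\sigma \mid \sigma \leqEL \theta\}$ up to $\simEL$-equivalence, I may restrict attention to substitutions whose domain and variable range are contained in the finite set $X := \Dom(\theta) \cup \VRan(\theta)$.

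Next, I would invoke \autoref{bounds:lem} on such a representative $\sigma'$: for every $x \in V$, the role depth of $\sigma'(x)$ is bounded by that of $\theta(x)$, and the role names occurring in $\sigma'(x)$ are among those occurring in $\theta(x)$. In particular, letting $d$ be the maximum role depth over $\{\theta(x) \mid x \in X\}$ and $N$ be the finite set of role names occurring in any $\theta(x)$ for $x \in X$, every component $\sigma'(x)$ is an $\EL$ concept description built from variables in $X$, role names from $N$, and of role depth at most $d$.

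To finish, I would appeal to the well-known fact (referenced in the paragraph preceding the lemma, citing \cite{DBLP:journals/corr/abs-1006-2289}) that, up to $\idEL$-equivalence, there are only finitely many $\EL$ concept descriptions of bounded role depth built using finitely many variables and finitely many role names. Hence, for each $x \in X$ there are only finitely many $\idEL$-equivalence classes of possible values $\sigma'(x)$. Since $\Dom(\sigma') \subseteq X$ is itself one of finitely many subsets, and $\sigma'$ is the identity outside its domain, we conclude that there are only finitely many such representatives $\sigma'$ up to $\simEL$, which proves the lemma.

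No serious obstacle is expected: all three ingredients have already been established, so this is essentially a bookkeeping argument assembling them. The only minor subtlety is to note that, although $\sigma'$ may differ from $\sigma$ outside $X$, this does not matter since $\sigma \simEL \sigma'$ by construction, so finiteness of representatives modulo $\simEL$ suffices.
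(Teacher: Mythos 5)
Your proposal is correct and follows essentially the same route as the paper: the paper likewise combines \autoref{regular:theory:prop} (to confine domains and variable ranges to $\Dom(\theta)\cup\VRan(\theta)$) with \autoref{bounds:lem} (to bound role depth and role names) and then invokes the known finiteness, up to $\idEL$-equivalence, of \EL concept descriptions over finitely many variables and role names with bounded role depth. The bookkeeping you add at the end matches the intended argument.
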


This lemma obviously implies that $\EL$ is Noetherian (see the proof of \autoref{locally:finite:is:noeth:prop}). Hence, an application of \autoref{noeth:prop} yields that the unrestricted unification type of $\EL$ cannot be zero. Since we already know that the unrestricted unification type of \EL is at least infinitary (see \autoref{EL:D:cor}), we obtain the following result.

\begin{thm}\label{EL:thm}
	The unrestricted unification type of $\EL$ for elementary unification is infinitary.
\end{thm}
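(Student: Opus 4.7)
The plan is to combine the previously established lower bound with a matching upper bound. The lower bound ``at least infinitary'' has already been recorded in \autoref{EL:D:cor}, so all that remains is to exclude unrestricted unification type zero for \EL. By \autoref{noeth:prop}, it suffices to show that \EL is Noetherian w.r.t.\ the unrestricted instantiation preorder $\leqEL$.

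First I would take an arbitrary strictly descending chain of substitutions $\sigma_1 >_\EL^V \sigma_2 >_\EL^V \sigma_3 >_\EL^V \cdots$. Since $\leqEL$ is transitive, every $\sigma_i$ (for $i\geq 2$) lies in the set $\{\sigma \mid \sigma \leqEL \sigma_1\}$, which by \autoref{finite:EL:more:gen:lem} is finite up to $\simEL$-equivalence. By the pigeonhole principle, there must exist indices $i\neq j$ with $\sigma_i \simEL \sigma_j$, contradicting the strictness of the chain. Hence no such infinite descending chain exists, so \EL is Noetherian, and \autoref{noeth:prop} yields that the unrestricted unification type of \EL cannot be zero.

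Combining this with the lower bound from \autoref{EL:D:cor} pins down the unrestricted unification type of \EL as exactly infinitary. The main conceptual work has already been done in the preparatory results: \autoref{regular:theory:prop} allows us to restrict attention to substitutions whose domain and variable range lie inside $\Dom(\theta)\cup\VRan(\theta)$, and \autoref{bounds:lem} bounds both the role depth and the set of role names appearing in $\sigma(x)$ in terms of $\theta(x)$. Together these two facts confine the candidate substitutions to a set of \EL concept descriptions of bounded role depth over a finite signature and a finite set of variables, which is known to be finite modulo $\idEL$. So no substantial new step beyond invoking \autoref{finite:EL:more:gen:lem} and \autoref{noeth:prop} is required; the apparent obstacle of dealing with the unbounded role depth potentially available in \EL has already been defused by the role-depth-preservation observation underlying \autoref{bounds:lem}.
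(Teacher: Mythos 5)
Your proposal is correct and follows essentially the same route as the paper: it cites the lower bound from \autoref{EL:D:cor}, establishes that \EL is Noetherian via \autoref{finite:EL:more:gen:lem} (which the paper likewise derives from \autoref{regular:theory:prop} together with the role-depth bound of \autoref{bounds:lem}), and concludes via \autoref{noeth:prop}. The pigeonhole step you spell out is exactly the argument the paper delegates to the proof of \autoref{locally:finite:is:noeth:prop}.
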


\subsection{Monoidal theories}\label{monoidal:theories:sec}

The theory \ACU is an example of a regular theory that belongs to the class of monoidal theories. For this theory, we proved in \cite{BaaderG25} directly that its unrestricted unification type cannot be zero. In this section, we generalize this result to a whole class of regular monoidal theories, which also contains \FLzero and \ACUh. To this purpose, we proceed as follows:
\begin{enumerate}
	\item 
	We recall some relevant aspects of the relation between monoidal theories and semirings.
	\item 
	We define the subclass of \spEName\ monoidal theories, and then show that the regular theories in this subclass are Noetherian. 
        Thus, they cannot have unrestricted unification type zero.
	\item 
	Finally, we establish that \ACU, \FLzero, and \ACUh are restrictive monoidal theories, and thus, their unrestricted unification type is infinitary.
\end{enumerate}
Let us start by introducing the notion of a semiring. A \emph{semiring} $\sR$ is a tuple $(S,\sRSum,\sZero,\sRPr,\sOne)$, where $\sRSum$ and $\sRPr$ are two binary operations over the set $S$, and $\sZero$ and $\sOne$ are two constant elements, such that:
\begin{itemize}
	\item 
	$(S,\sRSum,\sZero)$ is a commutative monoid.
	\item 
	$(S,\sRPr,\sOne)$ is a monoid.
	\item 
	the operator $\sRPr$ distributes over $\sRSum$ from left and right, i.e., for all $\sV, \sVb, \sVc \in S$:
	\[
	    \sV \sRPr (\sVb \sRSum \sVc) = (\sV \sRPr \sVb) \sRSum (\sV \sRPr \sVc)\ \ \ \mbox{and}\ \ \ (\sV \sRSum \sVb ) \sRPr \sVc = (\sV \sRPr \sVc) \sRSum (\sVb \sRPr \sVc).
	\]
	\item 
	$\sV \sRPr \sZero = \sZero \sRPr \sV = \sZero$ for all $\sV \in S$.
\end{itemize}
The operators $\sRSum$ and $\sRPr$ are usually called the \emph{addition} and \emph{multiplication} of the semiring, respectively. %

Nutt showed in \cite{Nutt90} that solving unification problems in a monoidal theory $E$ can be reduced to solving linear equations over an associated semiring $\sR_E=(S_E,\sRSum,\sZero,\sRPr,\sOne)$. The precise details of how to define $\sR_E$ from $E$ can be found in \cite{Nutt90,BaaderN96}.
A key component of showing the results in \cite{Nutt90} is that, given a finite set of variables $X_n = \{x_1,\ldots,x_n\}$, the $\id_E$-equivalence class of a term $t \in T(\Sigma,X_n)$ can be uniquely represented by a vector $\vec{v}_n(t) = (\occVar{t}{1}, \ldots, \occVar{t}{n})$, where $\occVar{t}{1}, \ldots, \occVar{t}{n}$ are elements of $\sR_E$ (this is a consequence of Corollary 5.11 in \cite{Nutt90}).
Based on this representation, a substitution $\sigma$ with $\Dom(\sigma) \subseteq X_n$ and $\VRan(\sigma) \subseteq \{y_1,\ldots,y_m\}$ ($n,m \geq 1$) can be represented as an $n \times m$ matrix with rows $\vec{v}_m(\sigma(x_1)),\ldots,\vec{v}_m(\sigma(x_n))$. 
Moreover, applying a substitution $\sigma$ to a term $t$ corresponds to multiplying the vector for $t$ with the matrix for $\sigma$,
i.e., the representation $\vec{v}_m(\sigma(t))$ of $\sigma(t)$ can be obtained by computing the following sums over $\sR_E$ (for all $j, 1\leq j \leq m$):
\[
\occVar{\sigma(t)}{j}\ =\  \sRBSum\limits_{k=1}^n \occVar{t}{k} \sRPr \occVar{\sigma(x_k)}{j}.
\]
For simplicity, given a substitution $\sigma$, we will from now on write $(\occVar{\sigma}{i1},\ldots,\occVar{\sigma}{im})$ instead of $(\occVar{\sigma(x_i)}{1},\ldots,\occVar{\sigma(x_i)}{m})$ to refer to the vector $\vec{v}_m(\sigma(x_i))$.

\begin{exa}
	The semiring $\sR_\ACU$ associated to \ACU corresponds to the semiring of natural numbers $(\mathbb{N},+,0,\cdot,1)$~\cite{DBLP:journals/jsc/Baader89,Nutt90}. 
The $\id_\ACU$-equivalence class of a term $t \in T(\Sigma,X_n)$ can be uniquely represented by the vector $\vec{v}_n(t) = (\occVar{t}{1}, \ldots, \occVar{t}{n})$, where $\occVar{t}{i}$ is the number of occurrences of the variable $x_i$ in $t$ (Lemma~10.3.1 of \cite{DBLP:books/daglib/0092409}).
	
	For example, if $t = f(x_1,f(x_2,x_2))$ and $\sigma = \{x_1\mapsto x_2, x_2 \mapsto f(x_1,x_1)\}$, then
	$\vec{v}_2(t) = (1,2)$ and the matrix representing $\sigma$ has the rows $(0,1)$ and $(2,0)$. 
	The vector representing $\sigma(t) = f(x_2,f(f(x_1,x_1),f(x_1,x_1)))$ is $(4,1)$. Note that $4 = 1 \cdot 0 + 2 \cdot 2$ and $1 = 1 \cdot 1 + 2 \cdot 0$.
\end{exa}

Let us continue by defining the subclass of \spEName\ monoidal theories. A \emph{total preorder} on a set $S$ is a reflexive and transitive binary relation $\leq$ on $S$ such that any two elements $a, b$ of $S$ are comparable, i.e., $a\leq b$ or $b\leq a$. As usual, we denote the strict part of $\leq$ as $<$, i.e., $a < b$ if
$a\leq b$, but $b\not\leq a$.
\begin{defi}\label{monoidal:special:def}
	Given a monoidal equational theory $E$, we say that $E$ is \emph{\spEName} if there is a \emph{total preorder} $\sPre{\sR_E}$ on $S_E$ such that
	\begin{enumerate}
		\item\label{pre:cond:1} 
		For all $\sV, \sVb, \sVc \in S_E$ with $\sVc \neq \sZero$: $\sV  \sPreL{\sR_E} \sVb$ implies that $\sV \sPreL{\sR_E} \sVb \sRPr \sVc$.
		\item\label{pre:cond:2}  
		For all $\sV, \sVb, \sVc \in S_E$: $\sV \sPreL{\sR_E} \sVb$ implies that $\sV \sPreL{\sR_E} \sVb \sRSum \sVc$.
		\item\label{pre:cond:3}  
		For all $\sV \in S_E$: the set $\{\sVb \mid \sVb \sPre{\sR_E} \sV\}$ is finite.
	\end{enumerate}
\end{defi}
It is easy to see that the standard total order $\leq_\mathbb{N}$ on $\mathbb{N}$ satisfies these properties, and thus, \ACU is a \spEName\ monoidal theory. We will later show that \ACUI and \FLzero are also \spEName\ theories.
It is worth asking whether there are natural examples of monoidal theories that are not \spEName. The answer to this question is positive, 
as shown by the following example.

\begin{exa}\label{AG:is:not:restrictive:lem}
        The semiring $\sR_\AG$ associated to the theory \AG of Abelian groups corresponds to the semiring of integers $(\mathbb{Z},+,0,\cdot,1)$~\cite{DBLP:journals/jsc/Baader89,Nutt90}.
	To show that the equational theory \AG is not a \spEName\ theory, we must demonstrate that $(\mathbb{Z},+,0,\cdot,1)$ does not admit a total preorder satisfying the conditions required in \autoref{monoidal:special:def}.

	 Suppose there exists a total preorder $\sPre{\mathbb{Z}}$ on $\mathbb{Z}$ satisfying these conditions. %
        Note that this preorder need not be identical to the usual total order $\leq$ on $\mathbb{Z}$.
	Let $\sV \in \mathbb{Z}$. By Condition~\ref{pre:cond:3}, the set $\{\sVc \mid \sVc \sPre{\mathbb{Z}} \sV\}$ is finite. Hence, since $\sPre{\mathbb{Z}}$ is a total preorder and $\mathbb{Z}$ is infinite, there must exist an element $\sVb \in \mathbb{Z}$ such that $\sV \neq \sVb$ and $\sV \sPreL{\mathbb{Z}} \sVb$.
	It then follows by Condition~\ref{pre:cond:2} that
	%
        \begin{equation}\label{integer:eqn}
	   \sV \sPreL{\mathbb{Z}} \sVb + \sVc\ \mbox{for all } \sVc \in \mathbb{Z}. %
        \end{equation}
	%
	However, if we set $\sVc := \sV-\sVb$, then $\sV = \sVb + \sVc$. This means that $\sVb + \sVc \sPre{\mathbb{Z}} \sV$ since $\sPre{\mathbb{Z}}$ is reflexive, which contradicts \eqref{integer:eqn}.
	Hence, such a preorder $\sPre{\mathbb{Z}}$ cannot exist, which shows that the monoidal theory \AG is not \spEName.
\end{exa}

We now turn our attention to showing that regular and \spEName\ monoidal theories are Noetherian w.r.t.\ the unrestricted instantiation preorder. To this end, it is convenient to first make some general assumptions for the rest of the section:
\begin{itemize}
	\item
	We assume that $E$ is a monoidal theory that is also \spEName. 
	\item 
	We will use $\sPre{\sR_E}$ to refer to a fixed total preorder on $S_E$ satisfying the conditions in \autoref{monoidal:special:def}.
	\item
	Given $n \geq 0$, we use $X_n$ to denote an indexed set of $n$ variables $\{x_1,\ldots,x_n\}$. In addition, we denote by $\allPairs{n}$ the set of pairs $\{1,\ldots,n\}\times\{1,\ldots,n\}$.
\end{itemize}

 Next, we introduce some notions and show a property about $\leq_E^V$ (in \autoref{three:substitutions:lem}), which will be useful later on.
\begin{defi}\label{rel:greater:than:def}
	Let $n \geq 0$ and $P \subseteq \allPairs{n}$. Additionally, let $\sigma_1$ and $\sigma_2$ be substitutions whose domains and variable ranges are contained in $X_n$. We say that \emph{$\sigma_1$ is greater than $\sigma_2$ w.r.t.\ $P$ (denoted as $\sigma_1\grRel{P}\sigma_2$)} if
	\begin{itemize}
		\item 
		for all $(i,j) \in P$ and $k \in \{1,\ldots,n\}$: $\occVar{\sigma_1}{ij}\ \sPreGr{\sR_E} \ \occVar{\sigma_2}{ik}$,
	\end{itemize}
	and that \emph{$\sigma_1$ and $\sigma_2$ coincide on $P$} if 
	\begin{itemize}
		\item 
		for all $(i,j) \in P$: $\occVar{\sigma_1}{ij}\ = \occVar{\sigma_2}{ij}$.
	\end{itemize}
\end{defi}

It is not hard to show that $\grRel{P}$ is a transitive relation.

\begin{lem}\label{lem:greater:than:is:transitive}
	Let $n \geq 0$ and $P \subseteq \allPairs{n}$. The relation $\grRel{P}$ is transitive.
\end{lem}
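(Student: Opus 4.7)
The plan is to simply unfold the definition of $\grRel{P}$ and chain the strict inequalities via transitivity of the strict part $\sPreGr{\sR_E}$ of the total preorder $\sPre{\sR_E}$ on $S_E$. Suppose $\sigma_1 \grRel{P} \sigma_2$ and $\sigma_2 \grRel{P} \sigma_3$, and fix an arbitrary pair $(i,j) \in P$ together with an arbitrary index $k \in \{1,\ldots,n\}$. The goal is to establish $\occVar{\sigma_1}{ij} \sPreGr{\sR_E} \occVar{\sigma_3}{ik}$.

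I would apply the first hypothesis at the pair $(i,j) \in P$ with the second index instantiated to $j$ itself, which yields $\occVar{\sigma_1}{ij} \sPreGr{\sR_E} \occVar{\sigma_2}{ij}$. Then I would apply the second hypothesis at the same pair $(i,j) \in P$ with the given $k$, yielding $\occVar{\sigma_2}{ij} \sPreGr{\sR_E} \occVar{\sigma_3}{ik}$. Transitivity of $\sPreGr{\sR_E}$ then delivers the desired inequality, and since $(i,j)$ and $k$ were arbitrary, we conclude $\sigma_1 \grRel{P} \sigma_3$.

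The only point worth addressing is the transitivity of $\sPreGr{\sR_E}$ itself, which is a standard consequence of $\sPre{\sR_E}$ being a preorder: from $a \sPreL{\sR_E} b$ and $b \sPreL{\sR_E} c$, transitivity of $\sPre{\sR_E}$ gives $a \sPre{\sR_E} c$, and $c \sPre{\sR_E} a$ would force either $c \sPre{\sR_E} b$ or $b \sPre{\sR_E} a$, contradicting one of the strict hypotheses. I do not anticipate any real obstacle; this is a bookkeeping lemma whose sole purpose is to support subsequent arguments about descending chains measured by $\grRel{P}$.
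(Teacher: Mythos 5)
Your proof is correct and follows essentially the same route as the paper: instantiate the first hypothesis at $(i,j)$ with second index $j$, the second hypothesis at $(i,j)$ with the arbitrary $k$, and chain via transitivity of $\sPreGr{\sR_E}$. The only difference is that you also spell out why the strict part of a total preorder is transitive, which the paper takes for granted; that verification is fine.
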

\begin{proof}
	Let $\sigma_1, \sigma_2$ and $\sigma_3$ be substitutions such that $\sigma_1 \grRel{P} \sigma_2$ and $\sigma_2 \grRel{P} \sigma_3$. To prove transitivity of $\grRel{P}$, we need to show that $\sigma_1 \grRel{P} \sigma_3$. 
	
	Let us take any pair $(i,j) \in P$. Since $\sigma_1 \grRel{P} \sigma_2$, we have that $\occVar{\sigma_1}{ij}\ \sPreGr{\sR_E} \ \occVar{\sigma_2}{ik}$ for all $k \in \{1,\ldots,n\})$. In particular, this yields $\occVar{\sigma_1}{ij}\ \sPreGr{\sR_E} \ \occVar{\sigma_2}{ij}$. Furthermore, $\sigma_2 \grRel{P} \sigma_3$ means that $\occVar{\sigma_2}{ij}\ \sPreGr{\sR_E} \ \occVar{\sigma_3}{ik}$ for all $k \in \{1,\ldots,n\}$.
	Consequently, transitivity of $\sPreGr{\sR_E}$ implies that
	$
	\occVar{\sigma_1}{ij}\ \sPreGr{\sR_E} \ \occVar{\sigma_3}{ik}\ \ \mbox{for all } k \in \{1,\ldots,n\}.
	$
	Thus, since $(i,j)\in  P$ was arbitrarily chosen, we can conclude that $\sigma_1 \grRel{P} \sigma_3$.
\end{proof}

The following property of $\leq_E^V$ is needed to prove \autoref{three:substitutions:lem}.

\begin{lem}\label{dom:range:lambda:Xn:lem}
	Let $\sigma_1$ and $\sigma_2$ be substitutions such that $\sigma_1 \leq_E^V \sigma_2$ and
	\[
	\Dom(\sigma_1) \cup \Dom(\sigma_2) \cup \VRan(\sigma_1) \cup \VRan(\sigma_2) \subseteq X_n.
	\]
	Then, there is a substitution $\lambda$ such that $\lambda\sigma_1 \id_E \sigma_2$ and $\Dom(\lambda) \cup \VRan(\lambda) \subseteq X_n$.
\end{lem}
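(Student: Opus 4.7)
The plan is to exploit the semiring-and-matrix calculus that Nutt established for monoidal theories. By definition of $\sigma_1 \leq_E^V \sigma_2$, there is some substitution $\lambda_0$ with $\lambda_0 \sigma_1 \id_E \sigma_2$. This $\lambda_0$ need not satisfy the domain/range constraint, but it can be used as a starting point: I will ``project'' its matrix representation onto the coordinates indexed by $X_n$ to obtain the required $\lambda$.

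First, I would fix a finite set $Y$ containing $X_n \cup \Dom(\lambda_0) \cup \VRan(\lambda_0)$. With respect to $Y$, I represent $\sigma_1$ and $\sigma_2$ as $n \times |Y|$ matrices over $\sR_E$, noting that every column indexed by a variable in $Y \setminus X_n$ is zero, since $\VRan(\sigma_1), \VRan(\sigma_2) \subseteq X_n$. The restriction of $\lambda_0$ to $X_n$ gives an $n \times |Y|$ matrix $M_{\lambda_0}$. The rule recalled just before the definition of \spEName\ theories says that the matrix of $\lambda_0\sigma_1$ is $M_{\sigma_1} \cdot M_{\lambda_0}$, so $\lambda_0\sigma_1 \id_E \sigma_2$ on $X_n$ translates into the matrix identity $M_{\sigma_1} \cdot M_{\lambda_0} = M_{\sigma_2}$. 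Now let $M_\lambda$ be the $n \times n$ submatrix of $M_{\lambda_0}$ consisting of the columns indexed by $X_n$. Since matrix multiplication is computed column-by-column, we immediately get $M_{\sigma_1} \cdot M_\lambda = M_{\sigma_2}$ (the remaining columns of $M_{\sigma_2}$ are zero and so are the corresponding columns of $M_{\sigma_1} \cdot M_{\lambda_0}$).

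Using the bijection between $[T(\Sigma,X_n)]_{\id_E}$ and $\sR_E^n$ from Nutt's representation theorem (Corollary~5.11 of~\cite{Nutt90}), each row of $M_\lambda$ is the image of a uniquely determined $\id_E$-equivalence class in $T(\Sigma,X_n)$. I define $\lambda(x_i)$ to be a representative of the class corresponding to the $i$-th row of $M_\lambda$, and set $\lambda(y) := y$ for every variable $y \notin X_n$. Then $\Dom(\lambda) \cup \VRan(\lambda) \subseteq X_n$ by construction, and the matrix identity $M_{\sigma_1} \cdot M_\lambda = M_{\sigma_2}$ gives $\lambda\sigma_1(x_i) \id_E \sigma_2(x_i)$ for every $x_i \in X_n$. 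For $y \notin X_n$, both $\lambda\sigma_1(y)$ and $\sigma_2(y)$ equal $y$, since such $y$ is outside both $\Dom(\sigma_1)$ and $\Dom(\sigma_2)$, and $\lambda$ is the identity there. The main obstacle I expect is bookkeeping: making sure the order of composition matches Nutt's convention for matrix multiplication (so that $\lambda_0 \sigma_1$ really corresponds to $M_{\sigma_1} \cdot M_{\lambda_0}$ and not the reverse), and spelling out why restricting to the columns indexed by $X_n$ is legitimate. Once these representation-theoretic details are in place the argument is a routine projection of matrices over $\sR_E$.
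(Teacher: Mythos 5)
Your proof is correct, but it takes a genuinely different route from the paper. The paper's argument is purely syntactic and works for \emph{any} equational theory: starting from $\lambda_0$ with $\lambda_0\sigma_1 \id_E \sigma_2$, it collapses all variables of $\VRan(\lambda_0)\setminus X_n$ to the single variable $x_1$ via a substitution $\theta$, defines $\lambda(x_i) := \theta(\lambda_0(x_i))$ and $\lambda(y) := y$ for $y\notin X_n$, and then uses closure of $\id_E$ under substitutions together with $\Var(\sigma_2(x_i))\subseteq X_n$ to conclude $\lambda\sigma_1(x_i) = \theta(\lambda_0\sigma_1(x_i)) \id_E \theta(\sigma_2(x_i)) = \sigma_2(x_i)$. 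Your argument instead runs through Nutt's semiring representation, truncating the matrix of $\lambda_0$ to the columns indexed by $X_n$; you correctly get the composition order ($M_{\sigma_1}\cdot M_{\lambda_0}$) and correctly observe that the discarded columns of $M_{\sigma_2}$ are zero because the coordinates of non-occurring variables are $\sZero$. The one point you lean on implicitly is that every row of the truncated matrix is actually realized by a term in $T(\Sigma,X_n)$ (surjectivity of the representation onto $\sR_E^n$); this does hold for monoidal theories, and can be seen concretely by noting that the truncated row is the representation of the term obtained from $\lambda_0(x_k)$ by substituting $\unit$ for all variables outside $X_n$ --- which, incidentally, shows that your construction produces essentially the same $\lambda$ as the paper's, with $\unit$ playing the r\^ole of the collapsing target instead of $x_1$. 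What your approach buys is a clean algebraic picture consistent with the rest of the section; what it costs is generality (it needs $E$ monoidal, whereas the paper's proof does not) and reliance on the representation theorem where a two-line syntactic renaming suffices.
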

\begin{proof}
	Since $\sigma_1 \leq_E^V \sigma_2$, there is a substitution $\lambda$ such that $\lambda\sigma_1 \id_E \sigma_2$.
	Let $\VRan(\lambda) \setminus X_n = \{z_1,\ldots,z_k\}$. Consider the substitution $\theta := \{z_1 \mapsto x_1,\ldots,z_k \mapsto x_1\}$, an define the substitution $\lambda'$ as follows:
	\begin{itemize}
		\item
		$\lambda'(x_i) := \theta (\lambda(x_i))$ for all $i \in \{1,\ldots,n\}$, and
		\item
		$\lambda'(y) := y$ for all $y \not \in X_n$. 
	\end{itemize}
	The definition of $\lambda'$ ensures that $\Dom(\lambda') \cup \VRan(\lambda') \subseteq X_n$, and $\Dom(\sigma_1) \cup \Dom(\sigma_2) \subseteq X_n$ implies that $\lambda'\sigma_1(y) = y = \sigma_2(y)$ for all $y \not \in X_n$. 
	Finally, since $\lambda'$  and $\theta\lambda$ coincide on $X_n$, $\id_E$ is closed under substitutions, and $\{z_1,\ldots,z_k\} \cap \VRan(\sigma_2) = \emptyset$, we obtain
	\[
	\lambda'\sigma_1(x_i) = \theta (\lambda \sigma_1(x_i)) \id_E \theta (\sigma_2(x_i)) = \sigma_2(x_i)\ \ \mbox{for all } i \in \{1,\ldots,n\}.
	\]
	Hence, we have shown that $\lambda'\sigma_1 \id_E \sigma_2$. This concludes our proof.
\end{proof}

We are now ready to prove \autoref{three:substitutions:lem}.

\begin{lem}\label{three:substitutions:lem}
	Let $n \geq 0$ and $P \subseteq \allPairs{n}$. In addition, let $\sigma_1$, $\sigma_2, \sigma_3$ be substitutions such that:	
	\begin{itemize}
		\item 
		$\Dom(\sigma_\ell) \cup \VRan(\sigma_\ell) \subseteq X_n$ for $\ell \in \{1,2,3\}$,
		\item 
		$\sigma_1 \leq_E^V \sigma_2$,
		\item 
		$\sigma_1$ is greater than $\sigma_2$ w.r.t.\ $P$, and
		\item 
		$\sigma_1$ and $\sigma_3$ coincide on $\allPairs{n}\setminus P$.
	\end{itemize} 
	Then, $\sigma_3 \leq_E^V \sigma_2$.
\end{lem}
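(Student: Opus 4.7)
The plan is to exhibit a single substitution $\lambda$ that simultaneously witnesses $\sigma_1\leq_E^V\sigma_2$ and $\sigma_3\leq_E^V\sigma_2$. Since $\sigma_1\leq_E^V\sigma_2$ and every substitution in sight has its domain and variable range in $X_n$, \autoref{dom:range:lambda:Xn:lem} furnishes such a $\lambda$ with $\Dom(\lambda)\cup\VRan(\lambda)\subseteq X_n$, so $\lambda$ itself admits a matrix representation $(\occVar{\lambda}{jk})_{j,k}$ over $\sR_E$. Translating the equality $\lambda\sigma_1\id_E\sigma_2$ into the semiring yields
\[
\occVar{\sigma_2}{ik'} \;=\; \sRBSum_{j=1}^n \occVar{\sigma_1}{ij} \sRPr \occVar{\lambda}{jk'} \qquad \text{for all }i,k'\in\{1,\ldots,n\}.
\]

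The key step is to show that whenever $(i,j)\in P$, one has $\occVar{\lambda}{jk'}=\sZero$ for every $k'$. Fix such $i,j,k'$ and assume, for the sake of contradiction, that $\occVar{\lambda}{jk'}\neq\sZero$. By the definition of $\sigma_1\grRel{P}\sigma_2$ we have $\occVar{\sigma_2}{ik'}\sPreL{\sR_E}\occVar{\sigma_1}{ij}$. Applying Condition~\ref{pre:cond:1} of \autoref{monoidal:special:def} with $\sVc := \occVar{\lambda}{jk'}$ gives $\occVar{\sigma_2}{ik'}\sPreL{\sR_E} \occVar{\sigma_1}{ij}\sRPr\occVar{\lambda}{jk'}$; then a single application of Condition~\ref{pre:cond:2}, taking $\sVc$ to be $\sRBSum_{j'\neq j} \occVar{\sigma_1}{ij'}\sRPr\occVar{\lambda}{j'k'}$, produces
\[
\occVar{\sigma_2}{ik'}\;\sPreL{\sR_E}\;\sRBSum_{j'=1}^n \occVar{\sigma_1}{ij'}\sRPr\occVar{\lambda}{j'k'}\;=\;\occVar{\sigma_2}{ik'},
\]
contradicting the irreflexivity of $\sPreL{\sR_E}$.

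To conclude, it is enough to verify $\lambda\sigma_3\id_E\sigma_2$, because this immediately gives $\sigma_3\leq_E^V\sigma_2$. On variables outside $X_n$ every substitution under consideration acts as the identity, so fix indices $i,k'\in\{1,\ldots,n\}$. Split the sum for $\lambda\sigma_3$ at position $(i,k')$ according to whether $(i,j)\in P$: the zero-coefficient fact just established kills every summand with $(i,j)\in P$, while on $(i,j)\notin P$ the coincidence hypothesis gives $\occVar{\sigma_3}{ij}=\occVar{\sigma_1}{ij}$. Adding back the (now trivially $\sZero$) $P$-indexed terms of the $\sigma_1$-sum yields $\occVar{\lambda\sigma_3}{ik'}=\sRBSum_{j=1}^n \occVar{\sigma_1}{ij}\sRPr\occVar{\lambda}{jk'}=\occVar{\sigma_2}{ik'}$, which is the required identity.

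The main subtlety I anticipate is verifying that the strict inequality in the definition of $\grRel{P}$ is really what forces $\occVar{\lambda}{jk'}$ to be \emph{exactly} $\sZero$ rather than merely small: Condition~\ref{pre:cond:1} only upgrades to a strict $\sPreL{\sR_E}$-growth when we multiply by a non-zero factor, and Condition~\ref{pre:cond:2} only then converts that single strict step into strict dominance of the whole sum. Together they give exactly the right contradiction with the semiring identity for $\lambda\sigma_1\id_E\sigma_2$; everything else is bookkeeping with the matrix calculus of \cite{Nutt90}.
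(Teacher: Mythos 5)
Your proposal is correct and follows essentially the same route as the paper's own proof: obtain $\lambda$ with $\Dom(\lambda)\cup\VRan(\lambda)\subseteq X_n$ via \autoref{dom:range:lambda:Xn:lem}, use Conditions~\ref{pre:cond:1} and~\ref{pre:cond:2} of \autoref{monoidal:special:def} together with irreflexivity of $\sPreGr{\sR_E}$ to force the $\lambda$-entries in the $P$-indexed positions to be $\sZero$, and then exploit the coincidence of $\sigma_1$ and $\sigma_3$ on $\allPairs{n}\setminus P$ to conclude $\lambda\sigma_3\id_E\sigma_2$. The only differences are notational (index naming and phrasing the key step as $\lambda\sigma_3\id_E\sigma_2$ directly rather than $\lambda\sigma_3\id_E\lambda\sigma_1$), so there is nothing substantive to add.
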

\begin{proof}
	Since $\sigma_1 \leq_E^V \sigma_2$, there is a substitution $\lambda$ such that $\lambda\sigma_1 \id_E \sigma_2$. By \autoref{dom:range:lambda:Xn:lem}, we can assume that $\Dom(\lambda) \cup \VRan(\lambda) \subseteq X_n$.
	Hence, since $\Dom(\sigma_1) \cup \Dom(\sigma_3) \cup \Dom(\lambda) \subseteq X_n$, we know that $\Dom(\lambda\sigma_1) \subseteq X_n$  and $\Dom(\lambda\sigma_3) \subseteq X_n$. To prove that $\sigma_3 \leq_E^V \sigma_2$, it thus suffices to show that
	\begin{equation}\label{to:show}
		\lambda\sigma_1(x_i)\ \id_E\ \lambda\sigma_3(x_i)\ \ \mbox{for all } i \in \{1,\ldots,n\}.
	\end{equation}
	Let $i \in \{1,\ldots,n\}$. Since $\VRan(\sigma_1) \cup \VRan(\lambda) \subseteq X_n$, the $\id_E$-equivalence class of $\lambda\sigma_1(x_i)$ can be uniquely represented with a vector of the form $(\occVar{\lambda\sigma_1}{i1},\ldots, \occVar{\lambda\sigma_1}{in})$, where each value $\occVar{\lambda\sigma_1}{ij}$ ($1 \leq j \leq n$) is determined by the following expression:
	\[
	\occVar{\lambda\sigma_1}{ij}\ =\ \sRBSum_{k=1}^n \occVar{\sigma_1}{ik} \sRPr \occVar{\lambda}{kj}.
	\]
	Since $\lambda\sigma_1(x_i) \id_E \sigma_2(x_i)$, we thus obtain 
	\begin{equation}\label{expre:coeff:instantiation}
		\occVar{\lambda\sigma_1}{ij}\  =\ \sRBSum_{k=1}^n \occVar{\sigma_1}{ik} \sRPr \occVar{\lambda}{kj}\ =\ \occVar{\sigma_2}{ij}\ \ \mbox{for all } j \in \{1,\ldots,n\}.
	\end{equation}
	Consider any pair $(i,k) \in P$. Since $\sigma_1\grRel{P} \sigma_2$, this means that $\occVar{\sigma_1}{ik} \sPreGr{\sR_E} \occVar{\sigma_2}{ij}$ for all $j \in \{1,\ldots,n\}$.
	Hence, it must be the case that $\occVar{\lambda}{kj} = \sZero$ for all $j \in \{1,\ldots,n\}$. Otherwise, there would be an index $j \in \{1,\ldots,n\}$ such that
	\begin{itemize}
		\item 
		$\occVar{\sigma_1}{ik} \sRPr \occVar{\lambda}{kj} \sPreGr{\sR_E} \occVar{\sigma_2}{ij}$ (by \eqref{pre:cond:1} in \autoref{monoidal:special:def}), and hence, 
		\item 
		the whole sum in \eqref{expre:coeff:instantiation} yields a value $\sV \sPreGr{\sR_E} \occVar{\sigma_2}{ij}$ (by \eqref{pre:cond:2} in \autoref{monoidal:special:def}).
	\end{itemize} 
	Thus, since $\sPreGr{\sR_E}$ is irreflexive, the expression in \eqref{expre:coeff:instantiation} could not be true for such in index $j$.
	Now, since $\sZero$ is an \emph{annihilator} for $\sRPr$ and an \emph{identity} for $\sRSum$, \eqref{expre:coeff:instantiation} can be turned into:       
	\[
	\occVar{\lambda\sigma_1}{ij}\  =\ \sRBSum_{\substack{k=1\\[.3em] (i,k) \in \allPairs{n}\setminus P}}^n \occVar{\sigma_1}{ik} \sRPr \occVar{\lambda}{kj}\ =\ \occVar{\sigma_2}{ij}\ \ \mbox{for all } j \in \{1,\ldots,n\}.
	\]
	Regarding $\lambda\sigma_3$, we also have $\VRan(\sigma_3) \cup \VRan(\lambda) \subseteq X_n$. Hence, the $\id_E$-equivalence class of $\lambda\sigma_3(x_i)$ has a representation of the form
	$
	(\occVar{\lambda\sigma_3}{i1},\ldots, \occVar{\lambda\sigma_3}{in})
	$,
	where (for $1 \leq j \leq n$):
	\[
	\occVar{\lambda\sigma_3}{ij}\  =\ \sRBSum_{\substack{k=1\\[.3em] (i,k) \in \allPairs{n}\setminus P}}^n \occVar{\sigma_3}{ik} \sRPr \occVar{\lambda}{kj}.
	\]
	But then, since $\sigma_1$ and $\sigma_3$ coincide on $\allPairs{n} \setminus P$, we obtain
	\[
	\occVar{\lambda\sigma_3}{ij}\  =\ \sRBSum_{\substack{k=1\\[.3em] (i,k) \in \allPairs{n}\setminus P}}^n \occVar{\sigma_3}{ik} \sRPr \occVar{\lambda}{kj}\ =\ \sRBSum_{\substack{k=1\\[.3em] (i,k) \in \allPairs{n}\setminus P}}^n \occVar{\sigma_1}{ik} \sRPr \occVar{\lambda}{kj}\ =\ \occVar{\lambda\sigma_1}{ij}.
	\]
	Hence, we have shown that $(\occVar{\lambda\sigma_1}{i1},\ldots, \occVar{\lambda\sigma_1}{in}) = (\occVar{\lambda\sigma_3}{i1},\ldots, \occVar{\lambda\sigma_3}{in})$, i.e., $\lambda\sigma_1(x_i)$ and $\lambda\sigma_3(x_i)$ have the same representation over $S_E$, which yields $\lambda\sigma_1(x_i) \id_E \lambda\sigma_3(x_i)$. 	
	Since $i$ was arbitrarily selected, we have thus proved the claim in \eqref{to:show}. This concludes the proof of the lemma.
\end{proof}

We are now ready to move into the final part of our proof that regular, \spEName\ monoidal theories are Noetherian. The main idea is to show the following: if there is an infinite decreasing chain w.r.t.\ $>_E^V$, then such a chain contains three substitutions $\sigma_1, \sigma_2$ and $\sigma_3$ such that
\begin{itemize}
	\item 
	$\sigma_3 >_E^V \sigma_2 >_E^V \sigma_1$, and
	\item
	$\sigma_1,\sigma_2$ and $\sigma_3$ satisfy the hypothesis of \autoref{three:substitutions:lem}.
\end{itemize}
This implies that $\sigma_2 \sim_E \sigma_3$, which contradicts the existence of such an infinite chain. We will establish this contradiction with the help of the following result.
\begin{lem}\label{contradiction:lem}
	Let $n \geq 0$ and $P \subseteq \allPairs{n}$.
	Suppose there is an infinite decreasing chain $\seq: \sigma_1>_E^V \sigma_2>_E^V \cdots$ such that, for all $\ell \geq 1$:
	\[
	\Dom(\sigma_\ell) \cup \VRan(\sigma_\ell) \subseteq X_n\ \ \ \mbox{and}\ \ \ \sigma_{\ell+1} \grRel{P}\sigma_\ell.
	\]
	Then there exists $(i,j) \in \allPairs{n} \setminus P$ and an infinite decreasing chain $\tau_1>_E^V \tau_2>_E^V \cdots$ such that, for all  $p \geq 1$:
	\[
	\Dom(\tau_p) \cup \VRan(\tau_p) \subseteq X_n\ \ \ \mbox{and}\ \ \ \tau_{p+1} \grRel{P \cup \{(i,j)\}} \tau_p.
	\]
\end{lem}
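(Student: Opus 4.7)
My plan is to argue by contradiction and pigeonhole that there must exist a pair $(i,j)\in\allPairs{n}\setminus P$ for which the sequence $\{\occVar{\sigma_\ell}{ij}\}_\ell$ is $\sPre{\sR_E}$-unbounded, and then to extract the desired subchain greedily from this pair. Note first that since $\grRel{P}$ is transitive (\autoref{lem:greater:than:is:transitive}), any subsequence of $\seq$ already inherits $\tau_{p+1}\grRel{P}\tau_p$, so the only nontrivial point will be to secure the extra condition at the new pair $(i,j)$; the domain/range bounds on the $\tau_p$ and the strictness of the chain $\tau_1>_E^V\tau_2>_E^V\cdots$ are inherited directly from $\seq$ by transitivity of $\leq_E^V$.

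For the unboundedness I will assume the contrary: that for every $(i,j)\in\allPairs{n}\setminus P$ the sequence $\{\occVar{\sigma_\ell}{ij}\}_\ell$ is $\sPre{\sR_E}$-bounded. By \eqref{pre:cond:3} of \autoref{monoidal:special:def}, each such sequence then takes only finitely many distinct values in $S_E$. Since $\allPairs{n}\setminus P$ is finite, the combined ``outside-$P$ profile'' $(i,j)\mapsto\occVar{\sigma_\ell}{ij}$ ranges over a finite set as $\ell$ varies, so by pigeonhole infinitely many $\ell$ share the same profile; in particular I can pick indices $\ell<m<m'$ such that $\sigma_\ell$ and $\sigma_{m'}$ coincide on $\allPairs{n}\setminus P$ in the sense of \autoref{rel:greater:than:def}. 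I then apply \autoref{three:substitutions:lem} with $\sigma_1:=\sigma_{m'}$, $\sigma_2:=\sigma_m$, $\sigma_3:=\sigma_\ell$: the relation $\sigma_{m'}\leq_E^V\sigma_m$ follows by transitivity of $\leq_E^V$ along the chain, $\sigma_{m'}\grRel{P}\sigma_m$ follows by iterated application of \autoref{lem:greater:than:is:transitive} to the hypothesis $\sigma_{j+1}\grRel{P}\sigma_j$, and the coincidence is by construction. The conclusion $\sigma_\ell\leq_E^V\sigma_m$, together with $\sigma_m\leq_E^V\sigma_\ell$ (also from the chain), yields $\sigma_\ell\sim_E^V\sigma_m$ and contradicts the strictness of $\sigma_\ell>_E^V\sigma_m$.

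Hence there is some $(i,j)\in\allPairs{n}\setminus P$ for which $\{\occVar{\sigma_\ell}{ij}\}_\ell$ is $\sPre{\sR_E}$-unbounded, and a brief bookkeeping argument (again using \eqref{pre:cond:3} to bound finite subsets in the total preorder) shows that for every $V\in S_E$ the set $\{\ell:\occVar{\sigma_\ell}{ij}\sPreGr{\sR_E}V\}$ is actually infinite. I will then define the subchain greedily: set $\tau_1:=\sigma_1$ and, given $\tau_p=\sigma_{\ell_p}$, let $V_p$ be a $\sPre{\sR_E}$-maximum of the finite set $\{\occVar{\sigma_{\ell_p}}{ik}:k\in\{1,\ldots,n\}\}$ and choose $\ell_{p+1}>\ell_p$ with $\occVar{\sigma_{\ell_{p+1}}}{ij}\sPreGr{\sR_E}V_p$. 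For every $k$ the chain $\occVar{\tau_p}{ik}\sPre{\sR_E}V_p\sPreL{\sR_E}\occVar{\tau_{p+1}}{ij}$ then gives $\occVar{\tau_{p+1}}{ij}\sPreGr{\sR_E}\occVar{\tau_p}{ik}$, which combined with the inherited $P$-part produces $\tau_{p+1}\grRel{P\cup\{(i,j)\}}\tau_p$. The main obstacle will be the pigeonhole step: I need to use \eqref{pre:cond:3} carefully to pass from ``each outside-$P$ sequence is $\sPre{\sR_E}$-bounded'' to ``only finitely many joint profiles occur'', and then to extract three indices in the correct order so that \autoref{three:substitutions:lem} forces a $\sim_E^V$-collapse that contradicts strictness of the chain.
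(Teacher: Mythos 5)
Your proposal is correct and follows essentially the same route as the paper's proof: a pigeonhole/contradiction argument via \autoref{three:substitutions:lem} to produce a pair $(i,j)\in\allPairs{n}\setminus P$ whose value sequence is $\sPre{\sR_E}$-unbounded, followed by a greedy extraction of the subchain using upper bounds of the finite sets $\{\occVar{\tau_p}{ik}\mid 1\leq k\leq n\}$. Your explicit strengthening of unboundedness to ``infinitely many indices exceed any given bound'' is a small but welcome refinement that the paper only implicitly relies on when choosing each next index strictly larger than the previous one.
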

\begin{proof}
	Suppose such an infinite decreasing chain $\seq$ exists. We claim that there is a pair $(i,j)$ in $\allPairs{n} \setminus P$ such that the sequence of values $\occVar{\sigma_1}{ij}, \occVar{\sigma_2}{ij},\ldots$ is \emph{not bounded} w.r.t.\ $\sPreGr{\sR_E}$, i.e., 
	\begin{equation}\label{no:bound}
		\text{for all } \sV \in S_E, \text{ there exists } \ell \geq 1 \text{ such that:}\ \occVar{\sigma_\ell}{ij} \sPreGr{\sR_E} \sV.
	\end{equation}
	Suppose this is not the case. Then, given $(i,j) \in \allPairs{n} \setminus P$, there is $\sV \in S_E$ such that $\occVar{\sigma_\ell}{ij} \not \sPreGr{\sR_E} \sV$ for all $\ell \geq 1$. Since $\sPre{\sR_E}$ is a total preorder, it follows that $\occVar{\sigma_\ell}{ij}  \sPre{\sR_E} \sV$ for all $\ell \geq 1$.
	\eqref{pre:cond:3} in \autoref{monoidal:special:def} yields that $\{\occVar{\sigma_\ell}{ij} \mid \ell \geq 1\}$ is a finite set. Hence, there must be infinitely many substitutions in $\seq$ that coincide on $\allPairs{n} \setminus P$. 
	As a consequence, we can select substitutions $\sigma_{\ell_1}, \sigma_{\ell_2}$ and $\sigma_{\ell_3}$ in $\seq$ such that:
	\begin{itemize}
		\item 
		$\sigma_{\ell_3} >_E^V \sigma_{\ell_2} >_E^V \sigma_{\ell_1}$,
		\item 
		$\sigma_{\ell_1}$ is greater than $\sigma_{\ell_2}$ w.r.t.\ $P$, and
		\item
		$\sigma_{\ell_1}$ and $\sigma_{\ell_3}$ coincide on $\allPairs{n}\setminus P$.
	\end{itemize}
	Hence, an application of \autoref{three:substitutions:lem} yields $\sigma_{\ell_3} \leq_E^V \sigma_{\ell_2}$, which contradicts $\sigma_{\ell_3} >_E^V \sigma_{\ell_2}$. Thus, we have shown that \eqref{no:bound} is true.
	
	Based on such a pair $(i,j)$, we define the infinite subsequence $\tau_1, \tau_2, \cdots$ of $\seq$ as follows:
	\begin{enumerate}
		\item 
		$\tau_1 = \sigma_1$.
		\item 
		The set $\{\occVar{\sigma_1}{ik} \mid 1 \leq k \leq n\}$ is finite. Then, since $\sPre{\sR_E}$ is a total preoder, there is $\sV \in S_E$ such that: 
		\[
		\occVar{\sigma_1}{ik} \sPre{\sR_E} \sV \ \ \mbox{for all } k \in \{1,\ldots,n\}.
		\]
		Hence, by \eqref{no:bound} and transitivity of $\sPre{\sR_E}$,¸ there is $p > 1$ such that $\occVar{\sigma_p}{ij} \sPreGr{\sR_E} \occVar{\sigma_1}{ik}$ for all $k \in \{1,\ldots,n\}$.
		We choose $\tau_2$ as $\sigma_p$. The following arguments show that $\tau_1$ and $\tau_2$ satisfy the properties required of $\tau_1,\tau_2, \cdots$ w.r.t.\ $P \cup \{(i,j)\}$:
		\begin{itemize}
			\item 
			Since $\sigma_1 >_E^V \sigma_p$, we have that $\tau_1 >_E^V \tau_2$.
			\item 
			By selection of $p$, we know that $\sigma_p \grRel{\{(i,j)\}}\sigma_1$. Moreover, since $\sigma_{\ell+1} \grRel{P}\sigma_\ell$ for all $\ell \geq 1$, transitivity of $\grRel{P}$ yields that $\sigma_p \grRel{P} \sigma_1$. Consequently, we can conclude that $\tau_2$ is greater than $\tau_1$ w.r.t.\ $P \cup \{(i,j)\}$.
		\end{itemize}
		\item 
		Once we fix $\sigma_p$, the same arguments given in (2) can be applied to $\sigma_p$ to obtain $q > p$ such that $\occVar{\sigma_{q}}{ij}  \sPreGr{\sR_E} \occVar{\sigma_{p}}{ik}$ for all $k \in \{1,\ldots,n\}$.
		We select $\tau_3$ as $\sigma_q$. The same arguments yield that $\tau_2$ and $\tau_3$ are as required.
		\item 
		By repeating (\emph{ad infinitum}) the described selection process, we can extract from $\seq$ an appropriate remaining sequence of substitutions $\tau_4\tau_5\cdots$.
	\end{enumerate}
	The described process ensures that each selected substitution $\tau_\ell$ ($\ell \geq 1$) belongs to $\seq$, and hence, $\Dom(\tau_\ell) \cup \VRan(\tau_\ell) \subseteq X_n$. Thus, $\tau_1\tau_2\cdots$ is an infinite decreasing chain satisfying the claim of the lemma.
\end{proof}

Finally, by using the previous lemma, we can show the main result of this section.

\begin{prop}\label{regular:spe:monoidal:implies:noetherian:lem}
	Let $E$ be a monoidal theory. If $E$ is regular and \spEName, then $E$ is Noetherian. %
\end{prop}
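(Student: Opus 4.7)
The plan is to argue by contradiction. Suppose there exists an infinite strictly descending chain $\sigma_1 >_E^V \sigma_2 >_E^V \sigma_3 >_E^V \cdots$. By transitivity of the chain, $\sigma_\ell \leq_E^V \sigma_1$ holds for every $\ell \geq 1$, so since $E$ is regular, \autoref{regular:theory:prop} provides substitutions $\sigma_\ell'$ with $\sigma_\ell' \sim_E^V \sigma_\ell$ and $\Dom(\sigma_\ell') \cup \VRan(\sigma_\ell') \subseteq \Dom(\sigma_1) \cup \VRan(\sigma_1)$. Fixing an enumeration $X_n = \{x_1,\ldots,x_n\}$ of the finite set $\Dom(\sigma_1) \cup \VRan(\sigma_1)$, and noting that $\sim_E^V$-equivalent substitutions are interchangeable with respect to $\leq_E^V$, the sequence $\sigma_1', \sigma_2', \ldots$ is still an infinite strictly descending chain, each of whose members has domain and variable range contained in $X_n$.

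Next, I would apply \autoref{contradiction:lem} iteratively. At stage $k = 0$, set $P_0 := \emptyset$; the condition $\sigma_{\ell+1}' \grRel{P_0} \sigma_\ell'$ is then vacuously true, so the chain $(\sigma_\ell')_{\ell \geq 1}$ satisfies the hypothesis of \autoref{contradiction:lem} with $P = P_0$. Inductively, assume that at stage $k$ we have obtained an infinite strictly descending chain inside $X_n$ together with a set $P_k \subseteq \allPairs{n}$ of cardinality $k$ such that consecutive elements of the chain stand in relation $\grRel{P_k}$. A further application of \autoref{contradiction:lem} yields a pair $(i,j) \in \allPairs{n} \setminus P_k$ and an infinite subchain whose consecutive elements stand in relation $\grRel{P_{k+1}}$, where $P_{k+1} := P_k \cup \{(i,j)\}$ has cardinality $k+1$.

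Since $|\allPairs{n}| = n^2$ is finite, this iteration cannot be carried out more than $n^2$ times: at stage $k = n^2$ we would have $P_{n^2} = \allPairs{n}$, and a further application of \autoref{contradiction:lem} would require selecting a pair from $\allPairs{n} \setminus \allPairs{n} = \emptyset$, which is impossible. This contradicts the existence of the assumed infinite chain and establishes that $E$ is Noetherian w.r.t.\ $\leq_E^V$.

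The only real subtlety lies in the first step, namely verifying that after replacing each $\sigma_\ell$ by an $\sim_E^V$-equivalent $\sigma_\ell'$ provided by \autoref{regular:theory:prop}, the strict inequalities of the chain survive and the initial condition $\sigma_{\ell+1}' \grRel{\emptyset} \sigma_\ell'$ holds (vacuously). Once this reduction to a fixed finite set $X_n$ of variables is in place, the remainder is a pigeonhole-style exhaustion on the finite set $\allPairs{n}$, powered entirely by the contradiction lemma.
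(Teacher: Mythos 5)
Your proof is correct and follows essentially the same route as the paper: reduce to a strictly descending chain over a fixed finite variable set $X_n$ via \autoref{regular:theory:prop}, then exhaust $\allPairs{n}$ by iterating \autoref{contradiction:lem} starting from $P_0=\emptyset$. The only difference is the endgame: the paper, after reaching a chain whose consecutive elements satisfy $\grRel{\allPairs{n}}$, explicitly extracts three substitutions and applies \autoref{three:substitutions:lem}, whereas you invoke \autoref{contradiction:lem} once more with $P=\allPairs{n}$ and observe that its conclusion (a pair in $\allPairs{n}\setminus\allPairs{n}=\emptyset$) is unsatisfiable, so its hypothesis must fail --- a valid contrapositive reading of the same lemma.
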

\begin{proof}
	Let $E$ be a regular, \spEName\ monoidal theory. Suppose that $E$ is not Noetherian, i.e., there exists a strictly decreasing chain $\theta_1 >_E^V \theta_2 >_E^V \theta_3 >_E^V\cdots$ of substitutions. Transitivity of $\leq_E^V$ yields that $\theta_i \leq_E^V \theta_1$ for all $i \geq 1$. Since $E$ is a regular theory, \autoref{regular:theory:prop} thus yields substitutions $\sigma_i$ ($i \geq 1$) such that
	\[
	\theta_i \sim_E^V \sigma_i\ \ \mbox{and}\ \ \Dom(\sigma_i) \cup \VRan(\sigma_i) \subseteq \Dom(\theta_1) \cup \VRan(\theta_1).
	\]
	From this, we obtain the following:
	\begin{itemize}
		\item
		$\seq : \sigma_1 >_E^V \sigma_2 >_E^V \sigma_3 >_E^V\cdots$ is an infinite decreasing chain.
		\item
		$\Dom(\sigma_i) \cup \VRan(\sigma_i) \subseteq X_n$ for all $i \geq 1$, where $x_1,\ldots,x_n$ is an enumeration of $\Dom(\theta_1) \cup \VRan(\theta_1)$.
	\end{itemize}
	Now, let $P_0 = \emptyset$. By \autoref{rel:greater:than:def}, we have %
        $
	\sigma_i \grRel{P_0} \sigma_j\ \mbox{for all } i,j \geq 1.
        $
	%
	This means that $\seq$ and $P_0$ satisfy the hypothesis of \autoref{contradiction:lem}. Hence, there is $(i,j) \in \allPairs{n}$ such that $\seq$ contains an infinite decreasing subchain $\tau_1>_E^V \tau_2>_E^V \cdots$ such that the following holds for all $p \geq 1$:
	\[
	\Dom(\tau_p) \cup \VRan(\tau_p) \subseteq X_n\ \ \ \mbox{and}\ \ \ \tau_{p+1} \grRel{\{(i,j)\}} \tau_p.
	\]
	This new chain now satisfies the hypothesis of \autoref{contradiction:lem} w.r.t.\ $P=\{(i,j)\}$. 
	Thus, by a sequence of $n^2 - 1$ further applications of \autoref{contradiction:lem}, we conclude that there is an infinite decreasing chain $\eta_1 >_E^V \eta_2 >_E^V \cdots$ such that the following holds for all $p \geq 1$:
	\[
	\Dom(\eta_p) \cup \VRan(\eta_p) \subseteq X_n\ \ \ \mbox{and}\ \ \ \eta_{p+1} \grRel{\allPairs{n}} \eta_p.
	\]
	Finally, it is not hard to verify that this last chain must contain three substitutions $\eta_{\ell_1}, \eta_{\ell_2}$ and $\eta_{\ell_3}$ such that
	\begin{itemize}
		\item 
		$\eta_{\ell_3} >_E^V \eta_{\ell_2} >_E^V \eta_{\ell_1}$,
		\item 
		$\eta_{\ell_1}$ is greater than $\eta_{\ell_2}$ w.r.t.\ $P=\allPairs{n}$, and
		\item 
		$\eta_{\ell_1}$ and $\eta_{\ell_3}$ coincide on $\allPairs{n}\setminus P = \emptyset$.
	\end{itemize} 
	Hence, an application of \autoref{three:substitutions:lem} yields that $\eta_{\ell_3} \leq_E^V \eta_{\ell_2}$, which contradicts the fact that $\eta_{\ell_3} >_E^V \eta_{\ell_2}$.
	Thus, we have derived a contradiction from our initial assumption that $E$ is not Noetherian. This concludes the proof of the lemma.
\end{proof}

Since the theories \ACU, \FLzero, and \ACUh are obviously regular and it is known that they are monoidal~\cite{DBLP:journals/jsc/Baader89,DBLP:conf/ctcs/Baader89}, 
it remains to show that they are also restrictive.

\begin{prop}
	The equational theories \ACU, \ACUI, and \FLzero are \spEName\ monoidal theories.
\end{prop}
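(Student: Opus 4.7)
The plan is to address the three theories in turn, in each case by identifying the associated semiring $\sR_E$ and exhibiting an explicit total preorder on $S_E$ that satisfies the three conditions of \autoref{monoidal:special:def}. In all three cases I expect conditions \eqref{pre:cond:1} and \eqref{pre:cond:2} to reduce to simple monotonicity of the semiring operations, so the interesting work will lie in condition \eqref{pre:cond:3}, which imposes a finite-down-set requirement and hence forces the chosen measure to reflect the semiring structure.

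For \ACU the associated semiring is $(\mathbb{N},+,0,\cdot,1)$, and I would take $\sPre{\sR_\ACU}$ to be the standard total order $\leq$ on $\mathbb{N}$. Then \eqref{pre:cond:1} follows because $c \neq 0$ forces $c \geq 1$ and hence $b \cdot c \geq b$; \eqref{pre:cond:2} follows because $b + c \geq b$; and \eqref{pre:cond:3} holds since $\{b \mid b \leq a\} = \{0,1,\ldots,a\}$ is finite. For \ACUI the associated semiring is the two-element Boolean semiring $(\{0,1\},\vee,0,\wedge,1)$, and taking $0 < 1$ handles all three conditions immediately, \eqref{pre:cond:3} in particular because the carrier set itself is finite.

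The interesting case is \FLzero, whose axiomatisation is \ACUIh with one homomorphism per role name. Its associated semiring is (isomorphic to) the set $\fSets$ of finite subsets of $\alphb^*$, where $\alphb$ contains one letter per role name; addition is set union with zero $\emptyset$, and multiplication is the pointwise concatenation $A \sRPr B := \{uv \mid u \in A, v \in B\}$ with unit $\{\eword\}$. For $A \in \fSets$ I would let $\ell(A)$ denote the maximal length of a word in $A$, using the convention $\ell(\emptyset) := -1$, and then define $A \sPre{\sR_\FLzero} B$ iff $\ell(A) \leq \ell(B)$. Then \eqref{pre:cond:3} holds because, for fixed $n$, there are only finitely many words of length $\leq n$ over the finite alphabet $\alphb$, and hence only finitely many finite subsets of such words. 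Conditions \eqref{pre:cond:1} and \eqref{pre:cond:2} will reduce to the identities $\ell(B \sRPr C) = \ell(B) + \ell(C)$ (valid when both arguments are non-empty) and $\ell(B \cup C) = \max(\ell(B),\ell(C))$, combined with the observation that $A \sPreL{\sR_\FLzero} B$ forces $\ell(B) \geq 0$ and hence $B \neq \emptyset$.

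The main obstacle I foresee is giving a clean, self-contained description of $\sR_\FLzero$ and handling the value $\ell(\emptyset)$ consistently, so that $B \neq \emptyset$ really follows from $A \sPreL{\sR_\FLzero} B$. The key structural fact that makes \eqref{pre:cond:3} work for \FLzero is the finiteness of $\alphb$, which in turn follows from the standing assumption that signatures are finite; if arbitrarily many role names were allowed, the same definition of $\ell$ would fail condition \eqref{pre:cond:3}.
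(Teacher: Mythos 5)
Your proposal is correct and, for \ACU and \FLzero, takes essentially the same route as the paper: the standard order on $(\mathbb{N},+,0,\cdot,1)$ for \ACU, and the maximal-word-length preorder on the semiring $\fSets$ of finite sets of words over the role-name alphabet for \FLzero, with the same verification of the three conditions (your explicit convention $\ell(\emptyset):=-1$ is a minor tidiness improvement over the paper, which leaves the value on $\emptyset$ undefined and only implicitly uses that $L_1\sPreL{\fSets}L_2$ forces $L_2\neq\emptyset$). For \ACUI, which the proposition lists but the paper's proof does not explicitly treat (the paper instead proves the analogous claim for \ACUh via the polynomial semiring $\NPol$ with the measure $c(p)+e(p)$, which it needs for the subsequent theorem), your two-element Boolean semiring argument is correct and actually covers a case the paper's own proof skips.
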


\begin{proof}
We have already discussed why \ACU is a \spEName\ monoidal theory. 

Next, we consider the theory \FLzero, which extends \ACUI with homomorphisms $h_1,\ldots,h_k$ for some $k \geq 1$. Recall that \FLzero and \ACUIh represent the same equational theory.  As pointed out in \cite{DBLP:journals/jsc/BaaderN01}, the semiring $\sR_\ACUIh$ associated to \ACUIh is isomorphic to the semiring $\fSets = (W,\cup,\emptyset,\conc, \{\eword\})$, where:
\begin{itemize}
	\item 
	the elements in $W$ are finite sets of words over the alphabet $\alphb= \{h_1,\ldots,h_k\}$,
	\item 
	the addition operation $\cup$ is union of sets with the empty set $\emptyset$ as unit, and
	\item 
	the multiplication operation $\cdot$ is element-wise concatenation of sets of words with the set $\{\varepsilon\}$ consisting of the empty word $\varepsilon$ as unit.
\end{itemize}
To show that \FLzero is \spEName, we define a relation ${\sPre{\fSets}} \subseteq W \times W$ satisfying the conditions required in \autoref{monoidal:special:def} w.r.t.\ the semiring $\fSets$.
Given $M\in W$, we denote by $m(M)$ the maximal length of a word in $M$. Based on this, we define ${\sPre{\fSets}}$ as follows:
\[
L \sPre{\fSets} M\ \ \mbox{iff}\ \ m(L) \leq m(M) \ \ \mbox{for all } L,M \in W.
\]
It is not hard to verify that $\sPre{\fSets}$ is a total preorder on $W$. It remains to show that $\sPre{\fSets}$ satisfies the three conditions in  \autoref{monoidal:special:def}.
\begin{itemize}
	\item 
	(Condition~\ref{pre:cond:1}) Let $L_1, L_2, L_3 \in W$ such that $L_3 \neq \emptyset$ and $L_1 \sPreL{\fSets} L_2$. We need to show that $L_1 \sPreL{\fSets} L_2 \conc L_3$ (recall that concatenation $\conc$ is the multiplication of $\fSets$).
	From $L_1 \sPreL{\fSets} L_2$, we obtain that $L_1 \sPre{\fSets} L_2$ and $L_2 \not \sPre{\fSets} L_1$. By definition of $\sPre{\fSets}$ this means that $m(L_1) < m(L_2)$.
	Now, since $L_3 \neq \emptyset$, we obtain 
        $
	m(L_1) < m(L_2) \leq m(L_2 \conc L_3).
        $
	%
	Thus, we have shown that $L_1 \sPreL{\fSets} L_2 \conc L_3$.
	\item 
	(Condition~\ref{pre:cond:2}) Let $L_1, L_2, L_3 \in W$ such that $L_1 \sPreL{\fSets} L_2$. We need to show that $L_1 \sPreL{\fSets} L_2 \cup L_3$. As before, $L_1 \sPreL{\fSets} L_2$ yields $m(L_1) < m(L_2)$, and obviously $m(L_2) \leq m(L_2 \cup L_3)$ holds. Hence, $m(L_1) < m(L_2 \cup L_3)$, and thus $L_1 \sPreL{\fSets} L_2 \cup L_3$.
	\item 
	(Condition~\ref{pre:cond:3}) Given $M \in W$, we need to show that $\{L \mid L\sPre{\fSets} M\}$ is a finite set. Every such language $L$ satisfies $m(L) \leq m(M)$. Since $m(M) \in \mathbb{N}$ and $\alphb$ is a finite alphabet, there are only finitely many words over $\alphb$ of length at most $m(M)$.
	Thus, the set $\{L \mid L\sPre{\fSets} M\}$  must be finite.
\end{itemize}
This finishes our proof that the theory \FLzero is restrictive.

For the theory \ACUh, it was observed in \cite{Nutt90} that the semiring $\sR_\ACUh$ associated to \ACUh is isomorphic to the polynomial semiring $(\NPol,\addPol,0,\multPol,1)$, where:
\begin{itemize}
        \item 
        $X=\{x_1,\ldots,x_k\}$ is a finite set of indeterminates,
        \item 
        the elements in $\NPol$ are polynomials defined over the indeterminates in $X$ with natural numbers as coefficients,
        \item
        the addition operation $\addPol$ is the standard sum of polynomials in $\NPol$ with the polynomial $0$ as unit, and
        \item
        the multiplication operation $\multPol$ is the standard multiplication of polynomials in $\NPol$ with the polynomial $1$ as unit.
\end{itemize}
To show that \ACUh is \spEName, we define a relation ${\sPre{\semNPol}} \subseteq \NPol \times \NPol$ satisfying the conditions required in \autoref{monoidal:special:def} w.r.t.\ the semiring $\semNPol$.
Given $p \in \NPol$, we denote by $c(p)$ the maximal coefficient occurring in $p$ and by $e(p)$ the maximal exponent of a variable in $p$. Based on this, we define ${\sPre{\semNPol}}$ as follows:
\[
p \sPre{\semNPol} q\ \ \mbox{iff}\ \ c(p) + e(p) \leq c(q) + e(q) \ \ \mbox{for all } p,q \in \NPol.
\]      
It is not hard to verify that $\sPre{\semNPol}$ is a total preorder on $\NPol$. It remains to show that $\sPre{\semNPol}$ satisfies the three conditions in  \autoref{monoidal:special:def}.
\begin{itemize}
        \item
        (Condition~\ref{pre:cond:1}) Let $p_1, p_2, p_3 \in \NPol$ such that $p_3 \neq 0$ and $p_1 \sPreL{\semNPol} p_2$. We need to show that $p_1 \sPreL{\semNPol} p_2 \multPol p_3$.
        From $p_1 \sPreL{\semNPol} p_2$, we obtain that $p_1 \sPre{\semNPol} p_2$ and $p_2 \not \sPre{\semNPol} p_1$. By definition of $\sPre{\semNPol}$ this means that $c(p_1) + e(p_1) < c(p_2) + e(p_2)$.
        Now, since $p_3 \neq 0$ and the coefficients and variable exponents occurring in $p_2$ and $p_3$ are natural numbers, we obtain that $c(p_2) \leq c(p_2 \multPol p_3)$ and $e(p_2) \leq e(p_2 \multPol p_3)$. This means that
        $
        c(p_1) + e(p_1) < c(p_2) + e(p_2) \leq c(p_2\multPol p_3) + e(p_2 \multPol p_3).
        $
        %
        Thus, we have shown that $p_1 \sPreL{\semNPol} p_2 \multPol p_3$.
        \item
        (Condition~\ref{pre:cond:2}) Let $p_1, p_2, p_3 \in \NPol$ such that $p_1 \sPreL{\semNPol} p_2$. We need to show that $p_1 \sPreL{\semNPol} p_2 \addPol p_3$. As before, $p_1 \sPreL{\semNPol} p_2$ yields $c(p_1) + e(p_1) < c(p_2) + e(p_2)$, and obviously $c(p_2) + e(p_2) \leq c(p_2 \addPol p_3) + e(p_2 \addPol p_3)$ holds. Hence, $c(p_1) + e(p_1) < c(p_2 \addPol p_3) + e(p_2 \addPol p_3)$, and thus $p_1 \sPreL{\semNPol} p_2 \addPol p_3$.
        \item
        (Condition~\ref{pre:cond:3}) Given $p \in \NPol$, we need to show that $\{q \mid q\sPre{\semNPol} p\}$ is a finite set. Every such polynomial $q$ satisfies $c(q) + e(q) \leq c(p) + e(p)$. Since $c(p) + e(p) \in \mathbb{N}$ and $X$ is a finite set of indeterminates, there are only finitely many polynomials over $X$ with maximal coefficient at most $c(p)$ and maximal variable exponent at most $e(p)$.
        Thus, the set $\{q \mid q\sPre{\semNPol} p\}$  must be finite.
\end{itemize}
This finishes our proof that the theory \ACUh is also restrictive.
\end{proof}

Hence, since \ACU, \FLzero, and \ACUh are also regular theories, applying \autoref{regular:spe:monoidal:implies:noetherian:lem} yields that they cannot have unrestricted unification type zero. In Section~\ref{lbounds:sect} we have shown that the unification types of these theories are at least infinitary w.r.t.\ the unrestricted instantiation preorder (\autoref{ACU:cor} and \autoref{EL:D:cor}).
Thus, we can conclude that the unrestricted unification type of these three theories is infinitary.

\begin{thm}\label{restrictive:monoidal:thm}
	The unrestricted unification type of \ACU, \FLzero, and \ACUh for elementary unification is infinitary.
\end{thm}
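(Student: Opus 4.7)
The plan is to combine the results just established with the general machinery from earlier in the section. First, I observe that \ACU, \FLzero, and \ACUh are all regular (easy from inspection of their defining identities), and the fact that they are monoidal is known from the literature cited at the start of Section~\ref{monoidal:theories:def:sec}. By the preceding proposition, they are also \spEName\ monoidal theories, via the total preorders built from the maximum coefficient, the maximum word length, and the sum of maximum coefficient and maximum exponent, respectively.

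Having verified these hypotheses, I would directly invoke \autoref{regular:spe:monoidal:implies:noetherian:lem} to conclude that each of \ACU, \FLzero, and \ACUh is Noetherian w.r.t.\ the unrestricted instantiation preorder. Then, by \autoref{noeth:prop}, none of these theories can have unrestricted unification type zero. Thus, for each of them the unrestricted unification type is at most infinitary.

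To match this with a lower bound, I would appeal to the results of Section~\ref{lbounds:sect}. Specifically, \autoref{ACU:cor} establishes that the unrestricted unification type of \ACU is at least infinitary, and \autoref{EL:D:cor} does the same for \FLzero and \ACUh. Combining the ``at most infinitary'' upper bound from the Noetherian argument with these ``at least infinitary'' lower bounds yields that the unrestricted unification type of each of \ACU, \FLzero, and \ACUh is exactly infinitary. There is no real obstacle here, since all the substantive work has already been carried out in the preceding proposition establishing the restrictive property and in \autoref{regular:spe:monoidal:implies:noetherian:lem}; the theorem is a direct corollary of assembling these pieces with the lower bounds.
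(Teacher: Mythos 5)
Your proposal is correct and follows essentially the same route as the paper: regularity plus the \spEName\ property established in the preceding proposition feed into \autoref{regular:spe:monoidal:implies:noetherian:lem} and \autoref{noeth:prop} for the ``not zero'' upper bound, which is then combined with the lower bounds from \autoref{ACU:cor} and \autoref{EL:D:cor}. The only minor slip is describing the preorder for \ACU as built from ``the maximum coefficient''---it is simply the standard order on $\mathbb{N}$---but this does not affect the argument.
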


\section{Conclusion}\label{conclus:sect}

In this paper, we have investigated the effect that the employed instantiation preorder has on the unification type of an equational theory.
As a rule of thumb, one can extract from this investigation that nothing changes if unifiers in a minimal complete set w.r.t.\ the restricted
instantiation preorder do not need fresh variables (\autoref{stays:same:thm}), whereas the unification type switches from unitary or finitary
to at least infinitary otherwise (\autoref{NUOF:thm}), though the latter result was only shown for regular theories. We have employed \autoref{NUOF:thm}
in Section~\ref{lbounds:sect} to prove that the unification type of the frequently used theories \ACUI, \ACI, \ACU, and \AC is at least infinitary 
w.r.t.\ the unrestricted instantiation preorder (\autoref{ACUI:cor}, \autoref{ACI:cor}, \autoref{ACU:cor}, and \autoref{AC:cor}). 
For theories whose restricted unification type is infinitary or zero, one can use \autoref{typ:no:worse:thm} to show the lower bound ``at least infinitary''
for the unrestricted unification type. In Section~\ref{lbounds:sect} we have employed this argument for the theories \AI, \D, \EL, \FLzero, and \ACUh.
Our main tool for proving ``at most infinitary'' upper bounds for regular theories in Section~\ref{upper:bounds:sec} was \autoref{regular:theory:prop}, 
which basically states that, up to equivalence, substitutions that have a given substitution $\theta$ as instance do not ``use'' more variables than $\theta$.
We have employed this result to prove ``at most infinitary'' upper bounds for several classes of regular theories:
regular finite theories, regular locally finite theories, and regular monoidal theories that satisfy an additional restriction on the associated semiring, 
called restrictiveness.

We have clarified by \autoref{infinite:unobserved:thm},
that the reason for changes in the unification type when going from the restricted to the unrestricted setting
is not that the unrestricted instantiation preorder considers infinitely many variables, but that it does
not leave infinitely many variables unobserved. In particular, this shows that nothing changes compared to the
restricted setting if one compares unifiers on finite supersets of the set of variables occurring in the unification problem instead.

Rather surprisingly, with unification in the description logic \EL, we have found (already in the conference version~\cite{BaaderG25}) an example  
where the unification type improves from type zero to infinitary when going from the restricted to the unrestricted instantiation preorder (\autoref{EL:thm}).
In the present paper we were able to show that this is not an extraordinary property of \EL, but this also happens for the theories \AI, \FLzero, and \ACUh
(\autoref{AI:ACI:thm} and \autoref{restrictive:monoidal:thm}). However, we were also able to find a case (unification in the modal logic \ModK) where the 
unification type stays zero when going from the restricted to the unrestricted setting (\autoref{modal:k:result}).

The reason for restricting the presentation of our results to elementary unification was mainly for simplicity. For most of our results,
it is easy to see that they hold accordingly for unification with constants. Lower bounds transfer from elementary unification to unification
with constants since every elementary unification problem is also a problem with constants.\footnote{%
Unification with constants only means that one is allowed to use free constants in the unification problem. One is not forced to do so.
}
For our upper bounds in Section~\ref{upper:bounds:sec}, it is easy to see that the results for finite and locally finite theories also hold
in the presence of free constants. For monoidal theories, adapting our proofs would, however, be more work since, e.g., the characterization of composition
substitution in \eqref{expre:coeff:instantiation} would need to take free constants into account.
For general unification, results on the combination of unification algorithms \cite{DBLP:journals/jsc/Schmidt-Schauss89,DBLP:journals/jsc/BaaderS96}
imply that restricted unification type unitary or finitary for unification with linear constant restrictions, a generalization of unification with constants, 
yields restricted unification type unitary or finitary for general unification. Whether such combination results can be adapted to the unrestricted setting 
or to unification type infinitary is not clear.

While the contributions of this work are primarily foundational, one can nevertheless ask whether the obtained results also have a practical impact. 
The answer to the question which  instantiation preorder should be used in practice mainly depends on the application that employs unification, 
i.e., which variables are relevant in the overall procedure and for which of them need the instantiation relation between the unifiers hold. 
In case the restricted unification type is unitary/finitary, the restricted instantiation preorder should be used unless the application enforces comparison on all variables. 
For example, in Knuth-Bendix completion modulo equational theories~\cite{Peterson:Stickel81,Jouannaud:Kirchner86,DBLP:journals/tcs/BachmairD89}, 
where unification is employed to test confluence by computing critical pairs, the restricted instantiation preorder is clearly sufficient, and thus should be used for theories that are unitary or finitary w.r.t.\ it.
In case the restricted unification type is zero and the unrestricted one is infinitary, it may be useful to employ unrestricted unification 
if one can find a unification algorithm that enumerates a minimal complete set of unifiers w.r.t.\ the unrestricted preorder. 
From the restricted preorder point of view, such an algorithm would enumerate a complete set of unifiers that is non-minimal, 
but usually still much smaller than the set of all unifiers. 
With few exceptions, where the restricted and the unrestricted types coincide according to our \autoref{stays:same:thm} 
(e.g., the empty theory, commutativity  $\Commu$, or associativity $\Assoc$)
there are no unification algorithms known for the unrestricted case since research on equational unification concentrated on the restricted case. 
For settings where the unrestricted type is better (i.e., infinitary), it is thus a new challenge to find an algorithm enumerating a minimal 
complete set of unifiers. Our proof of type infinitary for \EL, \AI, \FLzero, and \ACUh does not directly yield a practical algorithm.

Regarding future foundational work, we have stated in the conclusion of~\cite{BaaderG25} that
``it is probably not very interesting to find further equational theories where the phenomena already exhibited in this paper also occur, unless
one can prove a meta-theorem that exactly characterizes under what conditions these changes in the unification type happen for a given class 
of equational theories.'' In the present paper, we were able to show such meta-results for several classes of theories, among them a subclass of
the class of monoidal theories, which were mentioned as a candidate class in~\cite{BaaderG25}.
Regarding new phenomena not exemplified already in~\cite{BaaderG25}, we show here that the unification type of \ModK is zero in both the restricted and
the unrestricted setting. We still have not found an example where going from the restricted to the unrestricted setting changes the unification
type from infinitary to zero. The candidate theory \D mentioned in~\cite{BaaderG25} has turned out to be infinitary in both settings.
The cells marked with ``?'' in \autoref{results:table} describe cases that are conceivable from the order-theoretic point of view, which only takes 
into account that the unrestricted preorder is a subset of the restricted one, but for which we have not yet found examples or a theorem that would
preclude them. In addition to investigating these cases, an interesting specific theory to analyze is the theory \AG of Abelian groups 
since none of our general results apply to it: while it is monoidal, it is not regular (and thus neither \autoref{NUOF:thm} nor 
\autoref{regular:theory:prop} applies) and it is not restrictive (\autoref{AG:is:not:restrictive:lem}).

Regarding related work, let us point out that
recently there has also been other work on the impact that changing the preorder on substitutions has on the unification 
type~\cite{DBLP:conf/unif/CabrerM14,DBLP:journals/corr/CabrerM15,BaLu-UNIF2015,DBLP:journals/flap/HocheSS16,DBLP:journals/mscs/SzaboS21},
but the preorders investigated there differ from the (un)restricted instantiation preorder.

\bibliographystyle{alphaurl}
\bibliography{literature}



\end{document}